\documentclass[11pt,letterpaper]{article}

\usepackage[margin=1in]{geometry}
\usepackage[english]{babel}
\usepackage{xspace}
\usepackage{amssymb}
\usepackage{amsmath}
\usepackage{amsthm}
\usepackage{graphicx}
\usepackage{multicol}
\usepackage{diagbox}
\usepackage[colorlinks=true, allcolors=teal]{hyperref}
\usepackage{enumerate}
\usepackage{algpseudocode}
\usepackage{tikz}
\usepackage{pgfplots}
\pgfplotsset{compat=1.17}
\usetikzlibrary{patterns}
\usepackage{pgfplotstable}
    \pgfplotsset{
    name nodes near coords/.style={
        every node near coord/.append style={
            name=#1-\coordindex,
            alias=#1-last,
        },
        every axis/.append style={
        /pgf/number format/.cd,
        precision=3,
        }, 
    },
    name nodes near coords/.default=coordnode
    }
\usepgfplotslibrary{fillbetween}
\usepackage{subfigure}
    
\usepackage[ruled,linesnumbered,vlined]{algorithm2e}

\newtheorem{theorem}{Theorem}[section]
\newtheorem{lemma}[theorem]{Lemma}
\newtheorem{corollary}[theorem]{Corollary}
\newtheorem{claim}[theorem]{Claim}
\newtheorem{definition}[theorem]{Definition}

\newenvironment{proofof}[1]{{\vspace*{5pt} \noindent\bf Proof of #1:  }}{\hfill\rule{2mm}{2mm}\vspace*{5pt}}

\newcommand{\opt}{\mathsf{OPT}}
\newcommand{\alg}{\mathsf{ALG}}

\newcommand{\Ranking}{\textsc{Ranking}\xspace}
\newcommand{\E}{\mathbb{E}}


\usepackage{color, xcolor} 
\usepackage{framed}
\definecolor{shadecolor}{named}{lightgray}

\DeclareMathOperator*{\argmin}{argmin}

\tikzstyle{state}=[
    circle,
    minimum size = 1.25pt
    draw = black,
    thick,
    text = black
]

\title{Degree-bounded Online Bipartite Matching: OCS vs. Ranking}

\author{Yilong Feng
	\thanks{State Key Lab of IOTSC, University of Macau. \{yc37459,yc47435,xiaoweiwu,yc17423\}@um.edu.mo. The authors are ordered alphabetically.}
	\and Haolong Li $^*$
        \and Xiaowei Wu $^*$
	\and Shengwei Zhou $^*$}

\date{}

\begin{document}

\begin{titlepage}
    
\thispagestyle{empty}
\maketitle

\begin{abstract}
We revisit the online bipartite matching problem on $d$-regular graphs, for which Cohen and Wajc (SODA 2018) proposed an algorithm with a competitive ratio of $1-2\sqrt{H_d/d} = 1-O(\sqrt{(\log d)/d})$ and showed that it is asymptotically near-optimal for $d=\omega(1)$.
However, their ratio is meaningful only for sufficiently large $d$, e.g., the ratio is less than $1-1/e$ when $d\leq 168$.
In this work, we study the problem on $(d,d)$-bounded graphs (a slightly more general class of graphs than $d$-regular) and consider two classic algorithms for online matching problems: \Ranking and Online Correlated Selection (OCS).
We show that for every fixed $d\geq 2$, the competitive ratio of OCS is at least $0.835$ and always higher than that of \Ranking.
When $d\to \infty$, we show that OCS is at least $0.897$-competitive while \Ranking is at most $0.816$-competitive.
We also show some extensions of our results to $(k,d)$-bounded graphs.
\end{abstract}

\end{titlepage}

\section{Introduction}

As one of the central problems in the field of computer science, the online bipartite matching problem has attracted enormous attention since the work of Karp, Vazirani, and Vazirani~\cite{conf/stoc/KarpVV90}.
The interests in this problem are mostly motivated by internet advertising applications, which contribute to a business with over \$259 billion in 2024 in the United States alone~\cite{IAB-Report-24}.
The problem can be modeled as an online matching problem on a bipartite graph $G(S\cup R, E)$, where the offline vertices $S$ (which we call the \emph{servers}) represent the advertisers and the online vertices $R$ (which we call the \emph{requests}) correspond to the ad slots of webpages.
Upon the arrival of an online request $r\in R$, its set of neighbors in $S$ is revealed and the online algorithm must decide irrevocably which neighbor the request matches, subject to the constraint that each server can be matched at most once.
The online algorithm is measured by its \emph{competitive ratio}, which is the worst ratio over all online instances, between the (expected) size of matching computed by the algorithm, and that of the offline maximum matching.
In their seminal work, Karp et al.~\cite{conf/stoc/KarpVV90} proposed the \Ranking algorithm that achieves a competitive ratio of $1-1/e \approx 0.632$, and showed that this is the best possible ratio for any online algorithms.
However, the theoretical results, while being optimal, provide a rather low guarantee on the competitive ratio, which is below expectation for an industry involving hundreds of billions of dollars.
Therefore, researchers have proposed several weaker models for which better competitive ratios are possible, including the random arrival model~\cite{conf/soda/GoelM08,conf/stoc/KarandeMT11, conf/stoc/MahdianY11,journals/talg/HuangTWZ19} and the stochastic arrival (known i.i.d.) model~\cite{conf/focs/FeldmanMMM09,journals/mor/ManshadiGS12,conf/stoc/HuangS21,conf/soda/Yan24}.
Unlike these models that restrict the arrivals of requests, Naor and Wajc~\cite{journals/teco/NaorW18} proposed a model that puts degree constraints on the bipartite graph.

\paragraph{Degree-bounded Graphs.}
Naor and Wajc~\cite{journals/teco/NaorW18} proposed to study the online bipartite matching problem on $(k,d)$-bounded graphs, where the degree of each server is at least $k$ and the degree of each request is at most $d$, with $k\geq d\geq 2$\footnote{Note that any greedy algorithm is $1$-competitive if $d=1$.}.
The motivation of the proposed model is that in real-world scenarios, servers are often interested in a large number of requests while requests are interesting to a small number of servers.
Below we present two specific examples:

\begin{shaded}
\noindent
\textbf{Application 1 (Job Market).} 
Consider a job market where the positions can be treated as offline servers while the job hunters arrive online as requests.
When a candidate looks for jobs in an employment platform, the platform needs to match her to potential positions of interest.
The number of positions a user can be matched with is naturally limited (e.g., dozens) due to the industry, educational background, and time constraints.
On the other hand, in real life, each job position is often interesting to many candidates (e.g., hundreds or thousands).
\end{shaded}

\begin{shaded}
\noindent
\textbf{Application 2 (Online Ads).} 
As another example, consider the case of targeted advertising where ad slots are denoted by online requests and advertisers are denoted by offline servers.
Given that advertisers generally tailor their advertising efforts towards particular demographic groups~\cite{journals/teco/NaorW18}, the number of rivals an advertiser encounters (for a particular ad slot) is typically restricted. 
For example, an ad slot on a webpage related to smartphones is interesting only to the advertisers that produce IT hardware, which are not too many.
On the other hand, due to the vast amount of online ad slots, each server is often interested in many online ad slots.
\end{shaded}

For the online bipartite matching problem on $(k,d)$-bounded graphs, Naor and Wajc~\cite{journals/teco/NaorW18} proposed a deterministic algorithm called \textsc{High-Degree} that achieves a competitive ratio of $1-(1-1/d)^{k}$.
Moreover, they show that the competitive ratio is optimal for all deterministic algorithms.
They have also considered randomized algorithms, and show that \textsc{Random} (every online request proposes to a randomly selected neighboring server, regardless of its matching status) achieves the same competitive ratio of $1-(1-1/d)^{k}$.
As one of the open questions, they propose to study randomized algorithms that achieve competitive ratios strictly higher than $1-(1-1/d)^{k}$.

The open problem is partially answered by Cohen and Wajc~\cite{conf/soda/CohenW18}, who studied the problem on $d$-regular graphs (a special case of $(d,d)$-bounded graphs), and proposed the \textsc{Marking} algorithm that achieves a competitive ratio of $1-2\sqrt{H_d/d} = 1-O(\sqrt{(\log d)/d}))$.
Their result can also be extended to vertex-weighted graphs (in which each offline server has a non-negative weight)~\cite{conf/soda/CohenW18} and $(k,d)$-bounded graphs~\cite{phd/us/Wajc21}, with slightly worse competitive ratios.
They also proved that no algorithm can achieve a competitive ratio of $1-o(\sqrt{1/d})$, showing that their competitive ratio is asymptotically near-optimal.
However, observe that the competitive ratio $1-2\sqrt{H_d/d}$ is meaningful only when $d$ is sufficiently large\footnote{Their algorithm \textsc{Marking} is designed specifically for large $d$. 
For small $d$, the performance of \textsc{Marking} could be sub-optimal (see a detailed discussion in Appendix~\ref{sec:hardness-for-marking}).}:
for $d\leq 79$, the ratio is smaller than $0.5$;
for $d\leq 168$, the ratio is smaller than $1-1/e$.
In other words, for the case when $d = O(1)$, the problem is largely unexplored.
On the other hand, the aforementioned applications suggest that the domain of small $d$ is the most central case of degree-bounded graphs.

Over the past few years, the rapid development of online matching algorithms has led to significant successes, particularly in the application and analysis of classic algorithms. Notably, the \textsc{Ranking} algorithm and the Online Correlated Selection (OCS) algorithm have demonstrated remarkable competitiveness. However, despite their significant success, very little is known regarding their performance in degree-bounded settings. An exception is the work of Cohen and Wajc~\cite{conf/soda/CohenW18}, who established an upper bound of $0.875+o(1)$ for \Ranking on $d$-regular graphs for $d=\omega(1)$ but had very few discussion on the case of general $d$.

To address the above research questions, we revisit the online bipartite matching problem on $d$-regular graphs\footnote{All of our results hold for $(d,d)$-bounded graphs; some of them can be extended to $(k,d)$-bounded graphs. For convenience of discussion, we only discuss our results on $d$-regular graphs in the introduction.}, where $d$ can be any positive integer.
We consider two of the most classic algorithms for online matching: \Ranking and OCS, and bound their competitive ratios.

\paragraph{Ranking.}
The \Ranking algorithm was first proposed by Karp et al.~\cite{conf/stoc/KarpVV90} for the (unweighted) online bipartite matching problem and shown to achieve the optimal competitive ratio of $1-1/e$.
The proof was simplified by a series of follow-up works~\cite{journals/sigact/BirnbaumM08,conf/soda/GoelM08,conf/soda/DevanurJK13,conf/sosa/EdenFFS21} and the algorithm has been extended to the vertex-weighted setting~\cite{conf/soda/AggarwalGKM11,conf/soda/DevanurJK13} and the random arrival setting~\cite{conf/stoc/MahdianY11,conf/stoc/KarandeMT11,conf/wine/JinW21}.
Besides the great success in online bipartite matching, the algorithm has also found applications to other matching problems, including oblivious matching~\cite{journals/siamcomp/ChanCWZ18,journals/jacm/TangWZ23}, fully online matching~\cite{journals/jacm/HuangKTWZZ20,conf/focs/0002T0020}, online matching with stochastic rewards~\cite{conf/focs/MehtaP12,conf/wine/HuangJSSWZ23} and Adwords (with unknown budgets)~\cite{conf/sigecom/Udwani23}.

\paragraph{Online Correlated Selection.}
The Online Correlated Selection (OCS) technique was first introduced by Fahrbach et al.~\cite{journals/jacm/FahrbachHTZ22} for the edge-weighted online bipartite matching problem and was shown to achieve a competitive ratio of $0.5086$ -- the first to surpass $0.5$.
The ratio was then improved by subsequent works~\cite{conf/isaac/ShinA21,conf/focs/GaoHHNYZ21}, resulting in the state-of-the-art competitive ratio of $0.5368$ by Blanc and Charikar~\cite{conf/focs/BlancC21}.
As a general and powerful framework to round fractional solutions online, the OCS-based algorithms have been applied to many other matching problems, including online stochastic matching~\cite{conf/stoc/TangWW22,conf/stoc/0002SY22}, Adwords (with general bids)~\cite{journals/siamcomp/HuangZZ24} and fair matching~\cite{conf/aaai/Hosseini00023}.

\subsection{Our Contribution}

We study the performance of \Ranking and OCS-based algorithms\footnote{Precisely speaking, the OCS-based algorithm we consider is based on the multi-way semi-OCS in \cite{conf/focs/GaoHHNYZ21}.} for online bipartite matching on $(d,d)$-bounded graphs.
In contrast to existing works, in which \Ranking often outperforms OCS, we show that on $(d,d)$-bounded graphs, the competitive ratio of OCS is strictly larger than that of \Ranking for every fixed $d\geq 2$.

\smallskip

Our first result shows a sequence of $d$-regular hard instances for which we give upper bounds on the competitive ratio of \Ranking (for every $d\geq 2$).
Specifically, our upper bound is
$119/144\approx 0.8264$ when $d = 2$ and
$1-1/(2e) \approx 0.8161$ when $d\to \infty$, improving the previous upper bound of $0.875+o(1)$ by~\cite{conf/soda/CohenW18}.
We remark that while the hard instances for \Ranking are often very well structured, theoretically establishing an upper bound on its competitive ratio turns out to be technically difficult in existing works.
Many of these upper bounds are achieved by upper bounding the competitive ratio of \textsc{Balance} (a.k.a. \textsc{Water-Filling})~\cite{conf/stoc/KarpVV90,journals/jacm/HuangKTWZZ20}, since the performance of \Ranking on a graph is equivalent to that of \textsc{Balance} if every vertex is replaced by infinitely many identical copies\footnote{In our case, such a reduction does not work due to the existence of the degree constraints.}.
Otherwise, the upper bounds are either established by formulating and solving complex differential equations~\cite{conf/stoc/KarandeMT11,journals/jacm/HuangKTWZZ20,conf/esa/LiangTX0Z23}\footnote{In fact, some details are missing in the proof of~\cite[Theorem 1.4]{journals/jacm/HuangKTWZZ20} for establishing the upper bound on the competitive ratio of \Ranking. The analysis we present in this paper can be seen as a complement to their proof.}, or suggested by simulations~\cite{journals/siamcomp/ChanCWZ18}.
In this work, we show the theoretical upper bound on the competitive ratio of \Ranking on the hard instances via an interesting Markov chain interpretation of the random process, and using Jensen's Inequality to bound the expected performance.
We believe that our analysis framework will be of independent interest to future works that study upper bounds for \Ranking.

\smallskip

Our second result establishes a lower bound of at least $0.835$ for the competitive ratio of the OCS-based algorithm (for every $d\geq 2$).
%
Observe that the ($d$-way semi-) OCS algorithm of~\cite{conf/focs/GaoHHNYZ21} can be directly applied to the problem on $(d,d)$-bounded graphs, giving a competitive ratio of $0.875$ when $d=2$ (optimal) and $1-e^{-1.679} \approx 0.813$ when $d\geq 3$.
In this paper, we show that the competitive ratio can be improved, for every $d\geq 3$.
We follow the analysis framework of~\cite{conf/focs/GaoHHNYZ21} and reduce the problem of lower bounding the competitive ratio of OCS to an optimization problem of designing a candidate function meeting certain constraints.
Interestingly, we show that for $d$-regular graphs, this optimization problem can be solved \emph{optimally}.
We show that this optimal solution gives a competitive ratio of at least $0.8352$ when $d=3$ and $0.8976$ when $d\to \infty$.
Moreover, since our performance guarantee for the algorithm is per-vertex, our results can be extended to \emph{vertex-weighted} graphs (where offline servers have weights).

As a consequence, we separate the performance of OCS and \Ranking on $d$-regular graphs, showing that OCS is strictly better than \Ranking on $(d,d)$-bounded graphs, for every fixed $d\geq 2$.
We summarize our results in Figure~\ref{fig:results}.
We also show (in Appendix~\ref{appendix:kd}) that some of our results can be extended to $(k,d)$-bounded graphs, for every $k\geq d\geq 2$. 

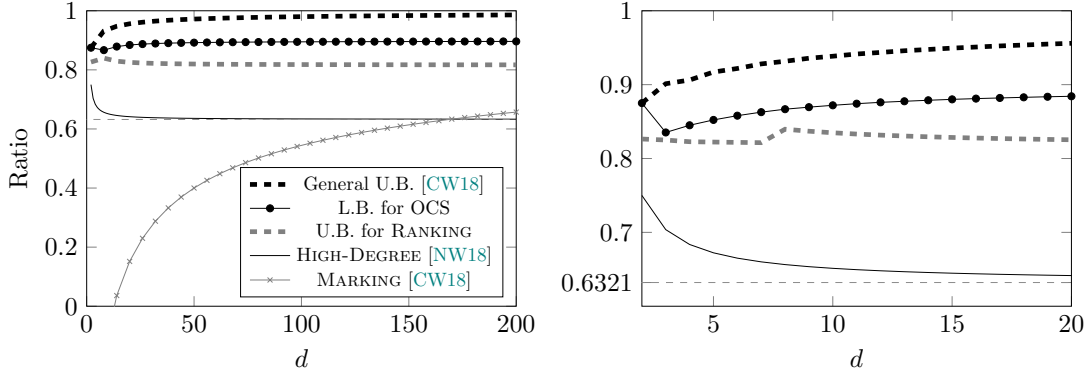
\begin{figure}[htbp]
\centering
\resizebox{0.9\textwidth}{!}{
\begin{subfigure}
    \centering
    \begin{tikzpicture}
        \begin{axis}[
            ymax=1,
            ymin=0,
            enlarge x limits=false,
            xmax=200,
            xmin=0,
            xlabel = {$d$},
            ylabel = {Ratio},
            height=6cm,
            width=8cm,
            legend pos=south east,
            legend style={nodes={scale=0.75, transform shape}}
        ]
            \addplot[black, no marks, dashed, line width=2pt] table [x=d, y=UB, col sep=comma, each nth point=6] {General_UB.csv};    
            \addlegendentry{General U.B.~\cite{conf/soda/CohenW18}}
            \addplot[mark=*, mark size=1.5, draw=black] table [x=d, y=OCS, col sep=comma, each nth point=6] {General_UB.csv};     
            \addlegendentry{L.B. for OCS}
            \addplot[gray, no marks, dashed, line width=2pt] table [x=d, y=RANKING, col sep=comma, each nth point=6] {General_UB.csv};    
            \addlegendentry{U.B. for \Ranking}
            \addplot[no marks, mark size=1.5, draw=black] table [x=d, y=DETERMINISTIC, col sep=comma] {General_UB.csv};    
            \addlegendentry{\textsc{High-Degree} \cite{journals/teco/NaorW18}}
            \addplot[mark=x, mark size=1.5, draw=gray] table [x=d, y=SODA, col sep=comma, each nth point=6] {General_UB.csv};    
            \addlegendentry{\textsc{Marking} \cite{conf/soda/CohenW18}}
            \addplot[domain=-10:210, dashed, draw=gray!90]{0.6321};
        \end{axis}
    \end{tikzpicture}
\end{subfigure}%
\begin{subfigure}
    \centering
    \begin{tikzpicture}
        \begin{axis}[
            ymax=1,
            ymin=0.6,
            enlarge x limits=false,
            xmax=20,
            xmin=2,
            xlabel = {$d$},
            ytick={0.6321,0.7,0.8,0.9,1},
            yticklabels={0.6321,0.7,0.8,0.9,1},
            height=6cm,
            width=8cm,
        ]
            \addplot[mark=*, mark size=1.5, draw=black] table [x=d, y=OCS, col sep=comma] {General_UB.csv};     
            %
            \addplot[black, no marks, dashed, line width=2pt, name path = A] table [x=d, y=UB, col sep=comma] {General_UB.csv};    
            \addplot[gray, no marks, dashed, line width=2pt] table [x=d, y=RANKING, col sep=comma] {General_UB.csv};    
            \addplot[no marks, mark size=1.5, draw=black] table [x=d, y=DETERMINISTIC, col sep=comma] {General_UB.csv};    
            \addplot[domain=-10:210, dashed, draw=gray!90]{0.6321};
    
    
        \end{axis}
\end{tikzpicture}
\end{subfigure}
}
\vspace{-10pt}
\caption{
The comparison of competitive ratios of OCS, \Ranking, \textsc{High-Degree}~\cite{journals/teco/NaorW18} and \textsc{Marking}~\cite{conf/soda/CohenW18} when $d \leq 200$ and $d\leq 20$, respectively, where L.B. and U.B. stand for lower bound and upper bound.
The general upper bound is from~\cite{conf/soda/CohenW18}, who analyzed a hard instance when $d\to \infty$. 
In Appendix~\ref{sec:problem-hardness} we refine their analysis to give an upper bound for every $d\geq 2$.}
\label{fig:results}
\end{figure}

\paragraph{OCS vs. Ranking.}
On a high level, both \Ranking and OCS can be regarded as online rounding algorithms for fractional matchings, and \Ranking often outperforms OCS in existing works.
For example, for the unweighted and vertex-weighted online bipartite matching, \Ranking achieves the optimal ratio of $1-1/e\approx 0.632$, while the current best ratio of OCS-based algorithms is $0.593$~\cite{conf/focs/GaoHHNYZ21}.
For online bipartite matching with reusable resources where each server becomes available again after a deterministic period of time of being matched, Delong et al.~\cite{delong2023online} showed that a \Ranking-based algorithm achieves the state-of-the-art competitive ratio of $0.589$ while the OCS-based algorithm is $0.505$-competitive.
However, we observe that for $d$-regular graphs, the number of past appearances of a server (i.e., the current degree, which is crucial in the OCS-based algorithms) might be more useful information to utilize than the artificial ranks used by \Ranking.
Consider the following toy example (Figure~\ref{fig:OCS-ranking-on-simple-graph}).
Every reasonable randomized algorithm (including OCS and \Ranking) will match $s_1$ and $s_2$ with equal probability when request $r_1$ arrives.
However, the behaviors of (the optimal $2$-way semi-) OCS and \Ranking become very different when $r_2$ arrives, if both $s_2$ and $s_3$ are available.
Since $s_2$ has appeared once as a neighbor of $r_1$ but was not chosen, OCS will match $r_2$ with $s_2$ with probability $1$, while conditioned on the rank of $s_2$ being larger than that of $s_1$, the probability that $r_2$ matches $s_2$ is only $1/3$ in \Ranking.

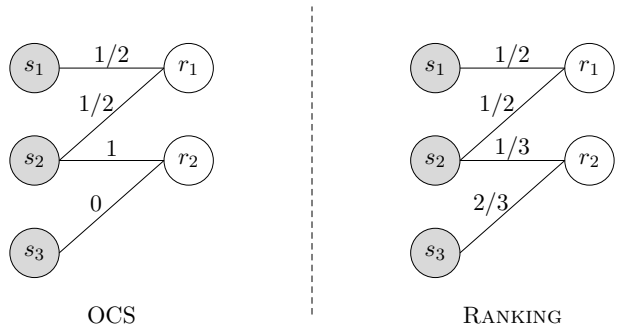
\begin{figure}[htb]
	\begin{center}
		\resizebox{0.5\textwidth}{!}{
			\begin{tikzpicture}
				\draw [fill = gray!30] (2.5,0) circle (0.4); \node at (2.5,0) {$s_3$};
				\draw [fill = gray!30] (2.5,1.5) circle (0.4); \node at (2.5,1.5) {$s_2$};
				\draw [fill = gray!30] (2.5,3) circle (0.4); \node at (2.5,3) {$s_1$};
				\draw (5,1.5) circle (0.4); \node at (5,1.5) {$r_2$};
				\draw (5,3) circle (0.4); \node at (5,3) {$r_1$};
				\draw (2.9,3)--(4.6,3); 
				\node at (3.75,3.2) {$1/2$};
				\draw (2.9,1.5)--(4.6,3); 
				\node at (3.5,2.4) {$1/2$};
				\draw (2.9,1.5)--(4.6,1.5); 
				\node at (3.75,1.7) {$1$};
				\draw (2.9,0)--(4.6,1.5); 
				\node at (3.5,0.8) {$0$};
				\node at (3.75,-1) {OCS};
				\draw [densely dashed] (7,-1)--(7,4); 
				
				\draw [fill = gray!30] (9,0) circle (0.4); \node at (9,0) {$s_3$};
				\draw [fill = gray!30] (9,1.5) circle (0.4); \node at (9,1.5) {$s_2$};
				\draw [fill = gray!30] (9,3) circle (0.4); \node at (9,3) {$s_1$};
				\draw (11.5,1.5) circle (0.4); \node at (11.5,1.5) {$r_2$};
				\draw (11.5,3) circle (0.4); \node at (11.5,3) {$r_1$};
				\draw (9.4,3)--(11.1,3); 
				\node at (10.25,3.2) {$1/2$};
				\draw (9.4,1.5)--(11.1,3); 
				\node at (10,2.4) {$1/2$};
				\draw (9.4,1.5)--(11.1,1.5); 
				\node at (10.25,1.7) {$1/3$};
				\draw (9.4,0)--(11.1,1.5); 
				\node at (9.9,0.8) {$2/3$};
				\node at (10.25,-1) {\Ranking};
		\end{tikzpicture} }
	\end{center}
    \vspace{-10pt}
	\caption{Comparing OCS and \Ranking on a simple example graph, where the number next to an edge represents the probability that the edge is selected by the algorithm, conditioned on both offline neighbors being available when the corresponding online request arrives.}
	\label{fig:OCS-ranking-on-simple-graph}
\end{figure}

In other words, \Ranking favors servers at their first appearance (when they realize their ranks with ``fresh randomness''), while OCS acts the opposite.
Given that the vertices have bounded degrees, it is reasonable to believe that the strategy adopted by OCS is better.
Our result theoretically establishes this superiority of OCS over \Ranking, under the presence of degree bounds.

\subsection{Other Related Works}

Due to the vast literature in online bipartite matching, here we only review some of the most related works on degree-bounded graphs.
Buchbinder et al.~\cite{conf/esa/BuchbinderJN07} studied the Adwords problem when the maximum degree of the online vertices is $d$ and presented a $(1-(1-1/d)^d)$-competitive algorithm.
For the online stochastic matching, Bahmani and Kopralov~\cite{conf/esa/BahmaniK10} achieved a competitive ratio of $1-e^{-d}d^d/d!$ on $d$-regular graphs.
For online bipartite matching with random arrivals, Karande et al.~\cite{conf/stoc/KarandeMT11} showed that \Ranking achieves a competitive ratio of $1-O(\sqrt{1/d})$ when the graph admits $d$ disjoint perfect matchings, indicating the same ratio of \Ranking on $d$-regular graphs.
Cohen et al.~\cite{conf/focs/CohenPW19} designed a near-optimal algorithm for online edge-coloring, which translates into a ($1-o(1)$)-competitive algorithm for online bipartite matching on $(k,d)$-bounded graphs when $k$ is sufficiently large.
Albers and Schubert proposed optimal deterministic algorithms for online $b$-matching~\cite{conf/esa/AlbersS22} and Adwords~\cite{conf/wine/AlbersS22} on $(k,d)$-bounded graphs.
Cohen and Peng~\cite{conf/esa/CohenP23} proposed (optimal) fractional algorithms for online $b$-matching and Adwords problems on $(k,d)$-bounded graphs with $d\geq k$.
For further existing works on online matching problems, please refer to a recent survey by Huang et al.~\cite{huangonline}.

\section{Preliminaries}

We consider the following online bipartite matching problem.
Let $G(S\cup R, E)$ be the underlying bipartite graph, where the set of \emph{servers} $S$ is given offline while the \emph{requests} in $R$ arrive online one by one.
For each server $s\in S$ (resp. request $r\in R$), we use $N(s) \subseteq R$ (resp. $N(r) \subseteq S$) to denote its set of neighbors.
We use $d(s) = |N(s)|$ (resp. $d(r) = |N(r)|$) to denote the degree of server $s\in S$ (resp. request $r\in R$).
Upon the arrival of an online request $r\in R$, its set of neighboring servers $N(r)\subseteq S$ is revealed, and the online algorithm must either match $r$ to one of its unmatched neighbors $s\in N(r)$ irrevocably or leave $r$ unmatched permanently.
The objective is to maximize the size of the resulting matching.
We consider the problem on $(k,d)$-bounded graphs that are formally defined as follows.
        
\begin{definition}[$(k,d)$-bounded graphs]
    A bipartite graph $G(S\cup R, E)$ is called $(k,d)$-bounded if each offline server $s\in S$ has degree $d(s) \geq k$, and each online request $r\in R$ has degree $d(r) \leq d$.
\end{definition}

We assume that the online algorithm knows $k$ and $d$, but does not know the total number of online requests.
We use $\alg$ to denote the size of the matching produced by the online algorithm and use $\opt$ to denote the size of the maximum matching in hindsight.
Note that $\opt$ is uniquely determined by the input graph $G(S\cup R, E)$.
For randomized algorithms, $\alg$ is a random variable and we measure its performance by the competitive ratio, which is defined as the infimum of ${\E[\alg]}/{\opt} \in [0,1]$ over all online instances.
Note that for $(k,d)$-bounded graphs it is without loss of generality (w.l.o.g.) to assume that $d(s) = k$ for all servers $s\in S$, because we can artificially assume that only the first $k$ arrived requests in $N(s)$ are neighbors of $s$.
Note that this operation may decrease the degrees of online requests, but they are still upper bounded by $d$.

The main focus of this paper is on $(d,d)$-bounded graphs, which is a more general class of graphs than $d$-regular (in which all servers and requests have degree exactly $d$).
We show in the appendix that some of our results can be extended to $(k,d)$-bounded graphs with $k>d$.

\section{Upper Bound for the Competitive Ratio of Ranking} \label{sec:hardness-for-Ranking}

We first consider the \Ranking algorithm (see Algorithm~\ref{alg:Ranking}) proposed by Karp et al.~\cite{conf/stoc/KarpVV90}, which is formally described as follows.
We study the performance of \Ranking on $d$-regular graphs and show upper bounds for its competitive ratio for every $d\geq 2$.
Since $d$-regular graphs are $(d,d)$-bounded, the upper bounds also hold for $(d,d)$-bounded graphs.
We show in Appendix~\ref{appendix:kd} that similar upper bounds can be achieved for $(k,d)$-bounded graphs.

\begin{algorithm} \label{alg:Ranking}
\caption{Ranking}
\For{each server $s\in S$}{
    independently pick its rank $y_s\in [0,1]$ uniformly at random \;}
\For{each request $r\in R$}{
    match $r$ to an unmatched neighbor $s\in N(r)$ (if any) with the smallest rank $y_s$ \;}
\end{algorithm}

Cohen and Wajc~\cite{conf/soda/CohenW18} showed an upper bound of $0.875+o(1)$ for \Ranking on $d$-regular graphs when $d$ is sufficiently large.
For smaller $d$ (where $d$ is even), the upper bound is smaller (we will do a comparison at the end of this section).
In this section, we give better upper bounds on the competitive ratio of \Ranking for every $d\geq 2$.
Specifically, when $d=2$ the upper bound is $\frac{119}{144}\approx 0.8264$ while when $d\to \infty$ the upper bound is $1-\frac{1}{2e} \approx 0.8161$.


We first present a general hard instance for every $d\geq 2$, for which we can theoretically establish an upper bound of $1-\frac{d-1}{2d-1}\cdot \left(1-\frac{1}{d}\right)^d$ on the competitive ratio of \Ranking, which implies the $1-\frac{1}{2e}$ upper bound when $d\to \infty$.
However, we observe that the upper bound for the general instances with small $d$ is bad (e.g., it is $0.917$ and $0.882$ when $d= 2$ and $3$, respectively), and thus provide a modification of the general instance for $d=O(1)$.
We show that the competitive ratio of \Ranking is significantly lower, e.g., it is at most $0.827$ and $0.826$ when $d=2$ and $3$.

    
        
    
    

\subsection{Instance for General \texorpdfstring{$d$}{}} \label{sec:hardness-for-Ranking-general-d}

In this section, for every fixed $d\geq 2$, we present a hard instance of $d$-regular graphs and upper bound the competitive ratio of \Ranking on the instance.

\paragraph{Hard Instance for General $d$.}
Let the set of servers $S$ have size $2d-1$, and be partitioned into $S_1=\{s_1,\ldots,s_d\}$ and $S_2 = \{s_{d+1},\ldots,s_{2d-1}\}$.
Symmetrically, we let requests $R = R_1 \cup R_2$, where $R_1 = \{r_1,\ldots,r_d\}$ and $R_2 = \{r_{d+1},\ldots,r_{2d-1}\}$.
We add edges between $R_1$ and $S_2$, $R_2$ and $S_1$ to form two complete bipartite graphs.
Then for $i \in \{1,\dots,d\}$, let there be an edge between $s_i$ and $r_i$.
Let the requests arrive online in the order of $r_1,r_2,\ldots$. See Figure~\ref{fig:hardness-for-Ranking(d=5)} for an example with $d=5$.

\medskip

\begin{figure}[htb]
\begin{center}
\resizebox{0.5\textwidth}{!}{
\begin{tikzpicture}
\draw [fill = gray!30] (7.5, 0.75) circle (0.4); \node at (7.5,0.75) {$s_9$};
\draw [fill = gray!30] (7.5, 2.25) circle (0.4); \node at (7.5,2.25) {$s_8$};
\draw [fill = gray!30] (7.5, 3.75) circle (0.4); \node at (7.5,3.75) {$s_7$};
\draw [fill = gray!30] (7.5, 5.25) circle (0.4); \node at (7.5,5.25) {$s_6$};
\draw [fill = gray!30] (2.5,0) circle (0.4); \node at (2.5,0) {$s_5$};
\draw [fill = gray!30] (2.5,1.5) circle (0.4); \node at (2.5,1.5) {$s_4$};
\draw [fill = gray!30] (2.5,3) circle (0.4); \node at (2.5,3) {$s_3$};
\draw [fill = gray!30] (2.5,4.5) circle (0.4); \node at (2.5,4.5) {$s_2$};
\draw [fill = gray!30] (2.5,6) circle (0.4); \node at (2.5,6) {$s_1$};
\draw (0,0.75) circle (0.4); \node at (0,0.75) {$r_9$};
\draw (0,2.25) circle (0.4); \node at (0,2.25) {$r_8$};
\draw (0,3.75) circle (0.4); \node at (0,3.75) {$r_7$};
\draw (0,5.25) circle (0.4); \node at (0,5.25) {$r_6$};
\draw (5,0) circle (0.4); \node at (5,0) {$r_5$};
\draw (5,1.5) circle (0.4); \node at (5,1.5) {$r_4$};
\draw (5,3) circle (0.4); \node at (5,3) {$r_3$};
\draw (5,4.5) circle (0.4); \node at (5,4.5) {$r_2$};
\draw (5,6) circle (0.4); \node at (5,6) {$r_1$};
\draw (2.9,6)--(4.6,6); 
\draw (2.9,4.5)--(4.6,4.5); 
\draw (2.9,3)--(4.6,3); 
\draw (2.9,1.5)--(4.6,1.5); 
\draw (2.9, 0)--(4.6, 0); 
\draw (5.4, 6)--(7.1, 0.75);
\draw (5.4, 6)--(7.1, 2.25);
\draw (5.4, 6)--(7.1, 3.75);
\draw (5.4, 6)--(7.1, 5.25);
\draw (5.4, 4.5)--(7.1, 0.75);
\draw (5.4, 4.5)--(7.1, 2.25);
\draw (5.4, 4.5)--(7.1, 3.75);
\draw (5.4, 4.5)--(7.1, 5.25);
\draw (5.4, 3)--(7.1, 0.75);
\draw (5.4, 3)--(7.1, 2.25);
\draw (5.4, 3)--(7.1, 3.75);
\draw (5.4, 3)--(7.1, 5.25);
\draw (5.4, 1.5)--(7.1, 0.75);
\draw (5.4, 1.5)--(7.1, 2.25);
\draw (5.4, 1.5)--(7.1, 3.75);
\draw (5.4, 1.5)--(7.1, 5.25);
\draw (5.4, 0)--(7.1, 0.75);
\draw (5.4, 0)--(7.1, 2.25);
\draw (5.4, 0)--(7.1, 3.75);
\draw (5.4, 0)--(7.1, 5.25);
\draw (2.1, 6)--(0.4, 0.75);
\draw (2.1, 6)--(0.4, 2.25);
\draw (2.1, 6)--(0.4, 3.75);
\draw (2.1, 6)--(0.4, 5.25);
\draw (2.1, 4.5)--(0.4, 0.75);
\draw (2.1, 4.5)--(0.4, 2.25);
\draw (2.1, 4.5)--(0.4, 3.75);
\draw (2.1, 4.5)--(0.4, 5.25);
\draw (2.1, 3)--(0.4, 0.75);
\draw (2.1, 3)--(0.4, 2.25);
\draw (2.1, 3)--(0.4, 3.75);
\draw (2.1, 3)--(0.4, 5.25);
\draw (2.1, 1.5)--(0.4, 0.75);
\draw (2.1, 1.5)--(0.4, 2.25);
\draw (2.1, 1.5)--(0.4, 3.75);
\draw (2.1, 1.5)--(0.4, 5.25);
\draw (2.1, 0)--(0.4, 0.75);
\draw (2.1, 0)--(0.4, 2.25);
\draw (2.1, 0)--(0.4, 3.75);
\draw (2.1, 0)--(0.4, 5.25);
\draw [densely dashed] (-0.8,-0.05) rectangle (0.8,6.05); \node at (0,-1) {$R_2$};
\draw [densely dashed] (1.7,-0.8) rectangle (3.3,6.8); \node at (2.5,-1.5) {$S_1$};
\draw [densely dashed] (4.2,-0.8) rectangle (5.8,6.8); \node at (5,-1.5) {$R_1$};
\draw [densely dashed] (6.7,-0.05) rectangle (8.3,6.05); \node at (7.5,-1) {$S_2$};
\end{tikzpicture} }
\end{center}
\vspace{-20pt}
\caption{Hard instance for \Ranking: an illustrating example with $d=5$.}
\label{fig:hardness-for-Ranking(d=5)}
\end{figure}
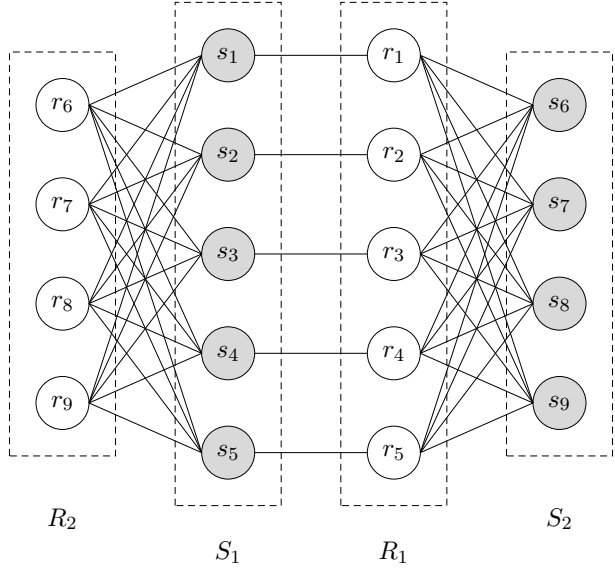

Intuitively speaking, following our observation from Figure~\ref{fig:OCS-ranking-on-simple-graph}, we show that the probability of online request $r_i$ (where $i\in \{1,\ldots,d\}$) choosing server $s_i$ becomes higher and higher, as the number of available servers in $S_2$ decreases.
We show that after the arrival of request $r_d$, a significant fraction of servers in $S_2$ are still unmatched (which cannot be matched anymore), and will be the reason for the underperformance of \Ranking.
Specifically, we show that after the arrival of request $r_i$, about $(1-\frac{1}{d})^i$ fraction of servers in $S_2$ remain unmatched in expectation, which leads to the following lower bound on the competitive ratio of \Ranking.

\begin{theorem} \label{theorem:upper-bound-for-Ranking}
    \textnormal{\Ranking} is at most $\gamma(d)$-competitive for online bipartite matching on $d$-regular graphs, where $\gamma(d) = 1-\frac{d-1}{2d-1}\cdot \left(1-\frac{1}{d}\right)^d$.
\end{theorem}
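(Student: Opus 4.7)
The plan is to separate the analysis into the two request batches and reduce everything to a Markov chain describing the $R_1$-phase. I would first observe that $\opt = 2d-1$: matching $r_i$ to $s_{d+i}$ for $i\in[d-1]$, $r_d$ to $s_d$, and the $d-1$ requests of $R_2$ to the remaining servers of $S_1$ is a perfect matching. Next, each $r_i\in R_1$ has the ``private'' neighbor $s_i$ that no other request in $R_1$ sees, so $s_i$ is always unmatched when $r_i$ arrives and every request in $R_1$ is matched by \Ranking. Letting $Y$ be the number of requests in $R_1$ matched into $S_2$, the number of unmatched $S_1$-servers at the end of the $R_1$-phase is exactly $Y\leq d-1$; since $R_2$ and $S_1$ form a complete bipartite graph, \Ranking matches all $Y$ remaining $S_1$-servers in the $R_2$-phase. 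Hence $\alg = d + Y = (2d-1) - X_d$, where $X_d$ denotes the number of unmatched $S_2$-servers at the end of the $R_1$-phase, and the theorem reduces to establishing
\[
\E[X_d] \;\geq\; (d-1)\left(1-\tfrac{1}{d}\right)^d,
\]
which plugs directly into $\E[\alg]/\opt = 1 - \E[X_d]/(2d-1) \leq \gamma(d)$.

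I would then view the $R_1$-phase as a Markov chain on $X_i$ with $X_0 = d-1$. When $r_i$ arrives, the pool of candidates examined by \Ranking is $\{s_i\}\cup\{\text{alive servers in }S_2\}$, of size $X_{i-1}+1\leq d$; \Ranking selects the pool's minimum-rank element for removal, and $X_i$ drops by one exactly when that minimum is in $S_2$. The goal is the almost-sure recursive bound
\[
\E[X_i \mid \mathcal{F}_{i-1}] \;\geq\; X_{i-1}\left(1-\tfrac{1}{d}\right),
\]
where $\mathcal{F}_{i-1}$ is the filtration generated by the history through step $i-1$. Iterating this bound from $X_0 = d-1$ yields $\E[X_d] \geq (d-1)(1-1/d)^d$ as required. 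Equivalently, the conditional probability that some alive $S_2$-server is killed at step $i$ must be shown to be at most $X_{i-1}/d$.

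The main obstacle is establishing this per-step bound, because the ranks of alive $S_2$-servers at the start of step $i$ are \emph{not} iid uniform on $[0,1]$: conditioning on not having been selected as the pool minimum in earlier rounds biases them upward. My plan is to introduce the pool minimum $M_{i-1}$ as an auxiliary variable and observe that a kill at step $i$ happens exactly when $y_{s_i} > M_{i-1}$ (so $s_i$ is not the new pool minimum) and the holder of $M_{i-1}$ belongs to $S_2$. Since $y_{s_i}$ is a fresh uniform independent of $\mathcal{F}_{i-1}$, the kill probability can be rewritten as $\E\bigl[(1-M_{i-1})\cdot\mathbf{1}[M_{i-1}\in S_2]\mid \mathcal{F}_{i-1}\bigr]$. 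I then intend to apply Jensen's inequality to a suitable convex function of the rank data, exploiting the symmetry among the $X_{i-1}$ alive $S_2$-servers, in order to compare this expectation against a freshly-uniform benchmark and extract the desired factor of $X_{i-1}/d$. Once this per-step bound is secured, the inductive argument closes and substitution into $\E[\alg]/\opt = (d + \E[Y])/(2d-1)$ produces the competitive ratio $\gamma(d)$ stated in the theorem.
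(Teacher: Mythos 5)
Your reduction is sound and matches the paper's setup: $\opt=2d-1$, every request in $R_1$ is matched (thanks to the private neighbor $s_i$), every leftover $S_1$-server is absorbed by $R_2$, so $\alg=(2d-1)-X_d$ and the theorem is exactly the claim $\E[X_d]\geq (d-1)\left(1-\frac{1}{d}\right)^d$. The gap is in how you plan to prove that claim: the almost-sure per-step bound $\E[X_i\mid\mathcal{F}_{i-1}]\geq X_{i-1}\left(1-\frac1d\right)$, i.e.\ kill probability at most $X_{i-1}/d$, is \emph{false}, and it fails even if you condition only on $X_{i-1}$ rather than the full history. Concretely, take $d=3$, so $S_2$ has two servers with i.i.d.\ uniform ranks $u,v$; write $a=\min(u,v)$, $b=\max(u,v)$. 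A kill at step $1$ occurs iff $y_1>a$, with probability $\E[1-a]=2/3$; kills at both steps $1$ and $2$ occur iff $y_1>a$ and $y_2>b$, with probability $\E[(1-u)(1-v)]=1/4$. Hence
\begin{equation*}
\Pr[\text{kill at step }2\mid\text{kill at step }1]=\frac{1/4}{2/3}=\frac38>\frac13=\frac{X_1}{d}.
\end{equation*}
Conditioning on past kills biases the surviving ranks in the wrong direction for a pointwise bound, so iterating your inequality through the tower property cannot get off the ground; only the \emph{unconditional} endpoint inequality is true, and that is precisely the nontrivial content.

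The paper circumvents this by never conditioning on the kill history at all. It conditions on the entire sorted vector $\theta_1<\cdots<\theta_{d-1}$ of $S_2$-ranks, under which the process is an honest Markov chain with exact kill probability $1-\theta_{j+1}$ when $j$ servers of $S_2$ are dead, writes the expected number of kills as $\mathcal{F}(\theta)$, and then shows $\E_\theta[\mathcal{F}(\theta)]\leq\mathcal{F}\left(\frac1d,\ldots,\frac{d-1}{d}\right)$. The Jensen step is done not in the $\theta$-coordinates directly but via the reparametrization $\theta_i=\prod_{j=i}^{d-1}m_j$, where $m_j$ is the maximum of $j$ independent uniforms (so the $m_j$ are independent with $\E[m_j]=\frac{j}{j+1}$); one proves concavity of $\mathcal{F}$ in each $m_j$ (via concavity of the chain's value in each $\theta_j$, Lemma~\ref{lemma:concave-in-theta}) and replaces the $m_j$ by their means one coordinate at a time. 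Only after this reduction, at the deterministic point $\theta=(\frac1d,\ldots,\frac{d-1}{d})$, does your clean recursion $\E[\beta(i)]=\left(1-\frac1d\right)\E[\beta(i-1)]$ hold exactly and yield $(d-1)\left(1-\frac1d\right)^d$. So your instinct to use Jensen and symmetry is pointing at the right tool, but it must be applied globally to the rank order statistics (in independent coordinates where concavity can be certified), not to a per-step conditional kill probability, which is simply not bounded by $X_{i-1}/d$.
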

    
\begin{corollary}
    \textnormal{\Ranking} is at most $0.8161$-competitive for online bipartite matching on $d$-regular graphs when $d$ is sufficiently large. 
\end{corollary}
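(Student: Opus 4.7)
The plan is to reduce the theorem to a multiplicative decay estimate for a Markov chain and then iterate. First, let $Z_i$ denote the number of unmatched servers in $S_2$ after the first $i$ requests of $R_1$ have been processed, so $Z_0 = d-1$. Since every $r_i \in R_1$ is matched (it has $d$ neighbors and at most $i-1 < d$ have already been used) and its only neighbor in $S_1$ is $s_i$, a direct accounting gives $\alg = (2d-1) - Z_d$. Combined with $\opt = 2d-1$, the theorem reduces to showing $\E[Z_d] \ge (d-1)(1-1/d)^d$.

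Next, I would interpret the evolution as a Markov chain on the state $(K_i, T_i)$, where $K_i = (d-1) - Z_i$ counts the $S_2$-servers already matched after $i$ steps and $T_i \in [0,1]$ is the minimum rank among the unmatched $S_2$-servers at the start of step $i$. Since $s_i$ has no prior online neighbor in this hard instance, its rank $y_{s_i}$ is a fresh independent uniform $[0,1]$ variable, so the advance event at step $i$ (i.e.\ $r_i$ matches some $S_2$-server) has conditional probability $1-T_i$; hence $\E[Z_{i+1}] = \E[Z_i] - \E[1-T_{i+1}]$. The core of the proof is then the one-step inequality
\[
\E[Z_{i+1}] \;\ge\; (1 - 1/d)\,\E[Z_i], \qquad \text{equivalently,} \qquad \E[T_{i+1}] \;\ge\; 1 - \E[Z_i]/d.
\]
Iterating this estimate across the $d$ arrivals of $R_1$ yields $\E[Z_d] \ge (d-1)(1-1/d)^d$, and hence the competitive ratio is at most $\gamma(d)$.

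For the one-step inequality, I would express $T_{i+1} = b_{K_i+1}$ as the $(K_i+1)$-th order statistic of the $d-1$ i.i.d.\ uniform $[0,1]$ ranks in $S_2$. Unconditionally $\E[b_{k+1}] = (k+1)/d$, so if $K_i$ were independent of the $b$'s the bound would hold with equality. The plan is to use Jensen's inequality, combined with an FKG-type monotonicity observation --- coordinate-wise raising any rank variable $y_{s_l}$ only (weakly) increases $K_i$ and, because the $b$'s are sorted, only (weakly) raises $b_{K_i+1}$ --- to conclude the averaged inequality $\E[b_{K_i+1}] \ge (\E[K_i]+1)/d$, even though the conditional estimate $\E[b_{K_i+1} \mid K_i = k] \ge (k+1)/d$ can fail for individual values of $k$.

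I expect the main obstacle to be precisely this last step: choosing the convex functional of the Markov-chain state carefully enough that Jensen's inequality extracts exactly the factor $1 - 1/d$ at each step, uniformly over the entire history of the process. Once the right functional is identified, the Markov-chain setup, the iteration, and the final arithmetic are routine.
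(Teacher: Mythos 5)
Your reduction of the corollary to the bound $\E[Z_d] \ge (d-1)(1-1/d)^d$ on the paper's hard instance is sound, and your proposed route differs from the paper's: you want a \emph{per-step} decay $\E[Z_{i+1}] \ge (1-1/d)\,\E[Z_i]$, whereas the paper proves a single global statement (Lemma~\ref{lemma:concavity-of-F}) that the expected number of matched $S_2$-servers over all $d$ steps is maximized, in the Jensen sense, at $\theta=(\tfrac1d,\dots,\tfrac{d-1}{d})$, after which the $(1-1/d)$-per-step decay is exact. The problem is that your one-step inequality $\E[b_{K_i+1}] \ge (\E[K_i]+1)/d$ is precisely where the entire difficulty sits, and the mechanism you sketch for it does not work. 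First, the monotonicity claim is false as stated: raising an $S_2$-rank can strictly \emph{decrease} $b_{K_i+1}$ (for $d=3$, $i=1$, fix $y_1$ and raise $b_1$ from just below $y_1$ to just above it: the value jumps down from $b_2$ to $b_1$), and raising $S_1$-ranks, while it does weakly increase $K_i$ by a coupling argument, says nothing about the dependence that actually causes trouble. Second, that troublesome dependence is a \emph{negative} correlation: small $S_2$-ranks make advances more likely, so conditioning on large $K_i$ biases the order statistics $b_j$ downward — the opposite direction from what FKG-type positive-association arguments deliver. Indeed, the inequality conditioned on $b$ is simply false (take all $b_j\approx 0$: then $K_i\approx i$ but $b_{K_i+1}\approx 0 \ll (i+1)/d$), so any proof must genuinely average over the distribution of the uniform order statistics, and you give no argument that does so; you yourself flag "choosing the convex functional" as the open obstacle.

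That missing step is not routine — it is the content of the paper's Lemmas~\ref{lemma:concave-in-theta} and~\ref{lemma:concavity-of-F}: one needs the concavity of the expected-advance functional in each $\theta_j$ (itself a nontrivial telescoping computation over the Markov chain), and, because the $\theta_j$ are \emph{dependent}, one cannot apply Jensen coordinate-wise to them directly; the paper introduces the representation $\theta_i \stackrel{d}{=} \prod_{j\ge i} m_j$ with independent $m_j$ (maxima of $j$ uniforms, $\E[m_j]=j/(j+1)$, Definition~\ref{definition:m-and-g} and Lemma~\ref{lemma:stochastic-equiv}) exactly so that iterated Jensen over the $m_j$ telescopes to $(\tfrac1d,\dots,\tfrac{d-1}{d})$. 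Your per-step inequality is plausible (I checked it numerically for $d=2,3$), but as proposed it is unproven, and the tools you name point in the wrong direction; without a substitute for the paper's concavity-plus-reparametrization argument the proof does not go through. (The remaining step, $\gamma(d)\to 1-\tfrac{1}{2e}<0.8161$, is indeed routine.)
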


Notice that servers in $S_1$ always get matched by \Ranking while those in $S_2$ are not.
The nature of \Ranking makes the competing environment of the servers in $S_2$ get worse and worse since those with small ranks will be matched early and those remaining servers have to compete with a fresh random rank drawn from $U[0,1]$ when a request in $R_1$ arrives.
To characterize the performance of \Ranking, we provide the following equivalent description of the algorithm as follows.
For notational convenience, in the following, we use $y_i$ to denote the rank of server $s_i$.
Moreover, we assume w.l.o.g. that all ranks are different.

\paragraph{Equivalent Description for Ranking.}
We first fix the ranks of servers in $S_2$ and suppose that the ordered ranks are $0\leq \theta_1 < \theta_2 < \cdots < \theta_{d-1}\leq 1$ (sorted version of $y_{d+1},y_{d+2},\ldots,y_{2d-1}$).
Upon the arrival of request $r_i$, we realize the rank $y_i$ of $s_i$.
For example, if $j$ servers in $S_2$ are matched when $r_i$ arrives, then $r_i$ matches $s_i$ if and only if $y_i\in [0,\theta_{j+1})$.
The performance of \Ranking will then be measured after first taking an expectation over $y=(y_1,\ldots,y_d)$, and then taking an expectation over $\theta=(\theta_1,\ldots,\theta_{d-1})$.

\medskip

We introduce the following definition to characterize the number of matched servers in $S_2$.

\begin{definition}
    Let $X(i,j)$ be the random variable denoting the number of servers in $S_2$ that are matched by $\{r_i,r_{i+1},\ldots,r_d\}$, conditioned on $\theta_j$ being the minimum rank of unmatched servers in $S_2$ when $r_i$ arrives.
    For completeness, we define $X(d+1,\cdot) = 0$.
\end{definition}

We can equivalently interpret $X(1,1)$ as a result of the following Markov chain (see Figure~\ref{fig:Markov-chain}) with $d$ states (where state $d-j$ corresponds to the situation when $d-j$ servers in $S_2$ are unmatched).
Initially, we are at the original state $d-1$.
Conditioned on being at state $d-j$, the arrival of a request $r_i$ corresponds to taking one step in the Markov chain.
As a consequence, we stay at state $d-j$ with probability $\theta_j$ (corresponding to the case when $y_i\in [0,\theta_j)$); move to state $d-j-1$ with probability $1-\theta_j$ (corresponding to the case when $y_i\in (\theta_j,1]$).
Therefore, $X(1,1)$ can be viewed as the distance we have moved from the origin after $d$ steps; $X(i,j)$ can be viewed as the distance we have moved, starting from state $d-j$, after taking $d-i+1$ steps.

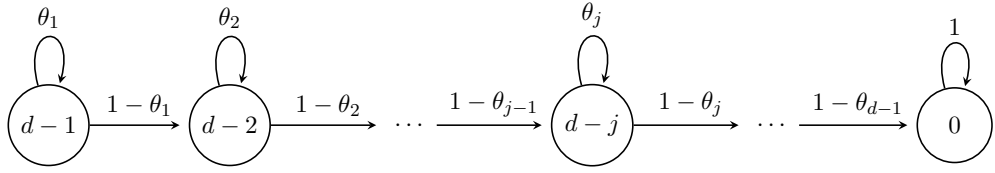
\begin{figure}[htb]
\centering
\resizebox{0.8\textwidth}{!}{
\begin{tikzpicture}[->, >= stealth, shorten >= 4pt, line width = 0.7pt, node distance = 2.8cm]
    \node [circle, draw, minimum size = 3em] (1) {$d-1$};
    \node [circle, draw, minimum size = 3em] (2) [right of = 1] {$d-2$};
    \node (3) [right of = 2] {$\cdots$};
    \node [circle, draw, minimum size = 3em] (4) [right of = 3] {$d-j$};
    \node (5) [right of = 4] {$\cdots$};
    \node [circle, draw, minimum size = 3em] (6) [right of = 5] {$0$};
    \path (1) edge [loop above] node {$\theta_1$} (1);
    \path (1) edge node[above] {$1-\theta_1$} (2);
    \path (2) edge [loop above] node {$\theta_2$} (2);
    \path (2) edge node[above] {$1-\theta_2$} (3);
    \path (3) edge node[above] {$1-\theta_{j-1}$} (4);
    \path (4) edge [loop above] node {$\theta_j$} (4);
    \path (4) edge node[above] {$1-\theta_{j}$} (5);
    \path (5) edge node[above] {$1-\theta_{d-1}$} (6);
    \path (6) edge [loop above] node {$1$} (6);
\end{tikzpicture}}
\caption{An Markov chain based illustration of $X(i,j)$.
}
\label{fig:Markov-chain}
\end{figure}


Therefore, we can express $\E_{y}[X(i,j)]$ as a function of $\theta$.
Particularly, we use $\mathcal{F}(\theta_1,\ldots,\theta_{d-1})$ to denote $\E_{y}[X(1,1)]$, the expected number of servers in $S_2$ \Ranking matches.
Recall that all servers in $S_1$ always get matched by \Ranking.
The competitive ratio of \Ranking in our instance is then measured by
\begin{equation*}
    \E_{\theta}\left[ \frac{d+\E_{y}[X(1,1)]}{2d-1} \right] = \frac{d+\E_{\theta}[\mathcal{F}(\theta_1,\ldots,\theta_{d-1})]}{2d-1}.
\end{equation*}

The following lemma states that the competitive ratio of \Ranking for the above instance will not decrease if we fix $\theta$ to be $(\frac{1}{d},\ldots,\frac{d-1}{d})$, which is crucial in proving Theorem~\ref{theorem:upper-bound-for-Ranking}.

\begin{lemma} \label{lemma:concavity-of-F}
    We have
    \begin{equation*}
        \E_{\theta}\left[ \mathcal{F}(\theta_1,\ldots,\theta_{d-1}) \right] \leq \mathcal{F}\left(\frac{1}{d},\ldots,\frac{d-1}{d}\right).
    \end{equation*}
\end{lemma}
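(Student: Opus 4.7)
The plan is to establish the bound by applying Jensen's inequality one coordinate at a time, processing them in the order $\theta_{d-1}, \theta_{d-2}, \ldots, \theta_1$. Two ingredients drive the argument: the recursive description of $\mathcal{F}$ from the Markov chain in Figure~\ref{fig:Markov-chain}, and the Markov property of the order statistics of uniforms, which gives $\mathbb{E}[\theta_j \mid \theta_1,\ldots,\theta_{j-1}] = \theta_{j-1} + (1-\theta_{j-1})/(d-j+1)$, an affine function of $\theta_{j-1}$ alone.

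First I would write $\mathcal{F}(\theta) = G_1(d)$ with the recursion
\[
G_j(i) = \theta_j G_j(i-1) + (1-\theta_j)\bigl(1 + G_{j+1}(i-1)\bigr),\quad G_j(0)=0,\quad G_d(\cdot)=0,
\]
and unfold it as
\[
G_j(i) = \bigl(1 + G_{j+1}(i-1)\bigr) + \sum_{k=1}^{i-1}\bigl(G_{j+1}(i-1-k) - G_{j+1}(i-k)\bigr)\theta_j^{\,k} - \theta_j^{\,i}.
\]
Because $G_{j+1}(m)$ is non-decreasing in $m$, every coefficient of $\theta_j^{\,k}$ with $k\ge 1$ is non-positive, and each $-\theta_j^{\,k}$ is concave on $[0,1]$. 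Hence $G_j(i)$ is concave in $\theta_j$ when the higher-indexed variables are fixed; in particular $\mathcal{F}$ is concave in $\theta_{d-1}$, and a first Jensen step replaces $\theta_{d-1}$ by its conditional expectation $(\theta_{d-2}+1)/2$.

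Next I would iterate: at stage $k$, having already replaced $\theta_{d-1},\ldots,\theta_{d-k}$ by iterated conditional means, the surviving variables are $\theta_1,\ldots,\theta_{d-k-1}$, and each previously eliminated $\theta_{d-k-1+i}$ ($1\le i\le k$) has become the affine expression $\theta_{d-k-1} + i(1-\theta_{d-k-1})/(k+1)$. A short telescoping check confirms these compositions are exactly the conditional means of the higher order statistics given $\theta_{d-k-1}$, so after the final substitution $\theta_1 \mapsto 1/d$, the argument of $\mathcal{F}$ is precisely $(1/d, 2/d, \ldots, (d-1)/d)$, matching the right-hand side of the lemma.

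The main obstacle is proving that each iterated Jensen step is legitimate: after the $k$ substitutions above, the resulting function must be concave in $\theta_{d-k-1}$, with the earlier surviving variables $\theta_1,\ldots,\theta_{d-k-2}$ held fixed. Joint concavity of $\mathcal{F}$ on the whole simplex fails (its Hessian becomes indefinite near corners where the coordinates approach $1$), so multivariate Jensen cannot be invoked directly; the concavity needed lives only along the particular one-parameter affine line produced by the cumulative substitutions. I plan to handle this by induction on $k$, using the unfolded recursion to express the restricted function as a polynomial in $\theta_{d-k-1}$ whose coefficients can be tracked from the telescoping form displayed above, and then verifying that the relevant second derivative remains non-positive along that line.
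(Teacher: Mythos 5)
Your first step (single-coordinate concavity of $\mathcal{F}$ in each $\theta_j$ via unfolding the Markov-chain recursion and noting that the coefficients of $\theta_j^k$, $k\geq 1$, are non-positive) is sound and mirrors Lemma~\ref{lemma:concave-in-theta} of the paper, and your bookkeeping of the iterated conditional means correctly lands at $(1/d,\ldots,(d-1)/d)$. The problem is the part you defer: after the first Jensen step, every subsequent step requires concavity of $\mathcal{F}$ restricted to an affine line along which \emph{several} coordinates move simultaneously (the surviving $\theta_{d-k-1}$ together with all the previously substituted slots $\theta_{d-k-1}+i(1-\theta_{d-k-1})/(k+1)$). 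Coordinatewise concavity does not imply concavity along such diagonal lines --- as you yourself note, joint concavity fails --- and your unfolded recursion no longer gives sign control of the coefficients once different coordinates are coupled, because the restricted function involves products of the form $\theta_{d-k-1}^a\bigl(\theta_{d-k-1}+i(1-\theta_{d-k-1})/(k+1)\bigr)^b$ whose concavity is not determined by the monotonicity of $G_{j+1}$. ``Verify that the relevant second derivative remains non-positive along that line'' is exactly the crux of the lemma, and the proposal contains no argument for it; as it stands this is a genuine gap, not a routine verification.

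The paper avoids this difficulty by a change of variables rather than by conditioning: it introduces independent variables $m_1,\ldots,m_{d-1}$ (with $m_i$ the maximum of $i$ i.i.d.\ uniforms, so $\E[m_i]=i/(i+1)$) and shows (Definition~\ref{definition:m-and-g}, Lemma~\ref{lemma:stochastic-equiv}) that $g_i=\prod_{j=i}^{d-1}m_j$ has the same joint law as $(\theta_1,\ldots,\theta_{d-1})$. Concavity of $\mathcal{F}$ in each $m_j$ is then derived from the per-coordinate concavity in the $g_i$'s together with the fact that each $g_i$ is linear in $m_j$ with a non-negative, $m_j$-free coefficient, and because the $m_j$ are \emph{independent}, Jensen can be applied one $m_j$ at a time with its unconditional mean; the means telescope, $\prod_{j=i}^{d-1}\frac{j}{j+1}=\frac{i}{d}$, giving exactly the point $(\frac1d,\ldots,\frac{d-1}{d})$. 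If you want to salvage your route, you would either have to prove the concavity of $\mathcal{F}$ along your substitution lines directly (which is the hard, unaddressed step), or adopt a reparametrization of this multiplicative type that decouples the order statistics before invoking Jensen.
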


Before proving Lemma~\ref{lemma:concavity-of-F}, we first prove a weaker statement showing the concavity of $\E_{y}[X(1,1)]$ (a.k.a. $\mathcal{F}(\theta_1,\ldots,\theta_{d-1})$) in every dimension $\theta_j$.
Note that $\E_{y}[X(1,1)]$ is decreasing when $\theta_j$ increases (as the probability of staying at state $d-j$ increases).

\begin{lemma} \label{lemma:concave-in-theta}
    For any $j\in\{1,2,\ldots,d-1\}$, $\E_{y}[X(1,1)]$ is concave in $\theta_j$.
\end{lemma}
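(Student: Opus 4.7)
Fix $j \in \{1, \ldots, d-1\}$ and all $\theta_k$ for $k \neq j$; write $p = \theta_j$. The plan is to reduce the multivariate concavity claim to a clean one-dimensional recursion. Using the Markov-chain description, let $T$ be the (random) number of requests used to first transition from state $d-1$ to state $d-j$, with $T = \infty$ if state $d-j$ is never reached within $d$ steps. The path until the chain hits state $d-j$ (and hence $T$) depends only on $\theta_1, \ldots, \theta_{j-1}$ and is therefore independent of $p$. If $T \leq d$, the chain has accumulated exactly $j-1$ matches upon arriving at $d-j$ and has $d - T$ requests remaining; if $T = \infty$, the total number of matches is entirely determined by $\theta_1, \ldots, \theta_{j-1}$. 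Letting $f_k(p)$ denote the expected number of matches from state $d-j$ with $k$ requests remaining, the strong Markov property yields
\[
\mathbb{E}_y[X(1,1)] \;=\; C \;+\; \sum_{k=0}^{d-j+1} \Pr[T = d-k]\, f_k(p),
\]
where $C$ is a constant in $p$. Since a nonnegative combination of concave functions is concave, it suffices to show each $f_k$ is concave in $p$.

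Conditioning on whether the next request stays at $d-j$ or moves to $d-j-1$ gives the recursion
\[
f_k(p) \;=\; p\, f_{k-1}(p) \;+\; (1-p)\bigl(1 + c_{k-1}\bigr), \qquad f_0(p) \equiv 0,
\]
where $c_{k-1}$ is the expected number of matches starting at state $d-j-1$ with $k-1$ requests remaining. Crucially, $c_{k-1}$ depends only on $\theta_{j+1}, \ldots, \theta_{d-1}$ and is independent of $p$; moreover the sequence $(c_k)$ is nondecreasing, since simply discarding the last request shows $c_{k-1} \leq c_k$. My plan is then to prove by a joint induction on $k$ that $f_k$ is both nonincreasing and concave in $p \in [0,1]$.

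The base cases $f_0 \equiv 0$ and $f_1(p) = (1-p)(1+c_0)$ are immediate. For the inductive step, differentiate the recursion twice:
\[
f_k'(p) \;=\; f_{k-1}(p) + p\, f_{k-1}'(p) - (1+c_{k-1}), \qquad f_k''(p) \;=\; 2 f_{k-1}'(p) + p\, f_{k-1}''(p).
\]
Concavity propagates directly: by induction $f_{k-1}'(p) \leq 0$ and $f_{k-1}''(p) \leq 0$, and $p \geq 0$, so $f_k''(p) \leq 0$. For monotonicity, $p\, f_{k-1}'(p) \leq 0$ by induction, so it suffices to check $f_{k-1}(p) \leq 1 + c_{k-1}$; the inductive monotonicity gives $f_{k-1}(p) \leq f_{k-1}(0) = 1 + c_{k-2}$ (the last equality from the recursion at $p = 0$), and $c_{k-2} \leq c_{k-1}$ completes the bound. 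The main subtlety is that concavity alone does not propagate through the recursion — the extra term $2 f_{k-1}'(p)$ in $f_k''$ is only controlled by \emph{monotonicity} of $f_{k-1}$ — so the two properties must be carried together throughout the induction.
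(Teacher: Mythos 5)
Your proof is correct, and its opening move coincides with the paper's: both condition on the time the chain first reaches state $d-j$, so that the mixing weights (your $\Pr[T=d-k]$, the paper's $\Pr[r_{i-1}\text{ matches }\theta_{j-1}]$) are independent of $\theta_j$, reducing the claim to concavity of the post-hitting expectation (your $f_k$, the paper's $\E_y[X(i,j)]$). Where you genuinely diverge is in how that concavity is proved. The paper derives a closed form by summing over the geometric time at which the server with rank $\theta_j$ gets matched and then performs summation by parts, using that $\E_y[X(l,j+1)]$ is nonincreasing in $l$ to obtain nonnegative, $\theta_j$-independent coefficients on the explicitly concave decreasing functions $1-\theta_j^{\,l-i}$; concavity is then read off termwise. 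You instead exploit the one-step recursion $f_k(p)=p\,f_{k-1}(p)+(1-p)\bigl(1+c_{k-1}\bigr)$ and run a joint induction in which monotonicity and concavity of $f_k$ in $p$ are carried together, with the elementary coupling fact $c_{k-1}\le c_k$ playing the role of the paper's monotonicity in $l$. Your inductive step checks out: $f_k''=2f_{k-1}'+p f_{k-1}''\le 0$ uses both hypotheses, and $f_k'\le 0$ follows from $f_{k-1}(p)\le f_{k-1}(0)=1+c_{k-2}\le 1+c_{k-1}$, with the base cases $k=0,1$ handled directly (all $f_k$ are polynomials, so differentiation is unproblematic). What each route buys: yours avoids the explicit geometric-series computation and the Abel summation, isolating the single monotonicity fact needed in a particularly transparent form, and your closing observation that concavity alone does not propagate through the recursion is exactly the right diagnosis; the paper's route yields an explicit representation of $\E_y[X(i,j)]$ as a nonnegative combination of concave decreasing functions of $\theta_j$, which simultaneously certifies the monotonicity remark the paper makes just before the lemma and keeps the argument free of derivatives of implicitly defined quantities.
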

\begin{proof}
    With a slight abuse of notation, we also use $\theta_j$ to denote the server in $S_2$ with rank $\theta_j$.
    Fix any $j\in\{1,2,\ldots,d-1\}$, depending on the number of steps it takes to reach state $d-j$ (starting from state $d-1$), we can express $\E_y[X(1,1)]$ as
    \begin{align*}
        \E_y[X(1,1)] & = \sum_{i=j}^d \left( \Pr[r_{i-1} \text{ matches } \theta_{j-1}]\cdot \left(\E_y[X(i,j)] + (j-1) \right) \right) \\
        & \qquad + \Pr[\theta_{j-1} \text{ is unmatched}]\cdot \E_y[X(1,1) \mid \theta_{j-1} \text{ is unmatched}].
    \end{align*}

    Since in the above equation, only the terms $\E_y[X(\cdot,j)]$ depend on $\theta_j$, to show that $\E_{y}[X(1,1)]$ is concave in $\theta_j$, it suffices to argue that $\E_{y}[X(i,j)]$ is concave in $\theta_j$ for every $i\in \{ j,j+1,\ldots,d\}$.

    Fix any $i\in \{ j,j+1,\ldots,d\}$, we have
    \begin{align*}
        \E_y[X(i,j)] & = \sum_{l=i}^{d}\left( \Pr[r_{l} \text{ matches } \theta_j]\cdot (\E_y[X(l+1,j+1)] + 1) \right) \\
        & = \sum_{l=i}^{d}\left( \theta_j^{l-i}(1-\theta_j) \cdot \E_y[X(l+1,j+1)] \right) + (1-\theta_j^{d+1-i}).
    \end{align*}

    Notice that $\E_y[X(l+1,j+1)]$ is decreasing in $l$. We have
    \begin{align*}
        \E_y[X(i,j)] & = \sum_{l=i+1}^{d}\left( \left( \E_y[X(l,j+1)]-\E_y[X(l+1,j+1)] \right) \cdot \sum_{t=i}^{l-1}(\theta_j^{t-i}(1-\theta_j)) \right) + (1-\theta_j^{d+1-i}) \\
        & = \sum_{l=i+1}^{d}\left( \left( \E_y[X(l,j+1)]-\E_y[X(l+1,j+1)] \right) \cdot (1-\theta_j^{l-i}) \right) + (1-\theta_j^{d+1-i}).
    \end{align*}

    Note that $\E_y[X(l,j+1)]-\E_y[X(l+1,j+1)]$ is independent of $\theta_j$. Therefore, it can be verified that the partial derivative of $\E_y[X(i,j)]$ on $\theta_j$ is non-positive and decreasing, which implies that $\E_y[X(i,j)]$ is concave in $\theta_j$. 
\end{proof}

In the following, we construct a random vector $g=(g_1,g_2,\ldots,g_{d-1})$ and show that it is statistically identical to $\theta = (\theta_1,\theta_2,\ldots,\theta_{d-1})$ (a proof is provided in Appendix~\ref{sec:missing-proofs(Ranking)}).

\begin{definition} \label{definition:m-and-g}
    Let $0\leq g_1 \leq g_2 \leq \cdots \leq g_{d-1} \leq 1$ be $d-1$ ordered random variables constructed as follows.
    For each $i\in \{1,\ldots,d-1\}$:
    \begin{itemize}
        \item let $m_{i}$ be the maximum of $i$ independent random variables $z_1^{(i)},\ldots,z_i^{(i)}$ that are drawn from $U[0,1]$, i.e., $m_i = \max\{z_1^{(i)},\ldots,z_i^{(i)}\}$;
        \item define $g_i = \prod_{j=i}^{d-1} m_j$.
    \end{itemize}
\end{definition}

Note that $m_1,\ldots,m_{d-1}$ are independent random variables.
Besides, we have $\E[m_i]=i/(i+1)$ for any $i\in \{1,\ldots,d-1\}$, since by definition we have $\Pr[m_i<t] = t^i$ for any $t\in [0,1]$.
Below, we let $m=(m_1,\ldots,m_{d-1})$.

\begin{proofof}{Lemma~\ref{lemma:concavity-of-F}}
    By the stochastic equivalence between $g$ and $\theta$, we have (note that $\mathcal{F}$ can also be regarded as a function of $m$ as it defines $g$.)
    \begin{equation*}
        \E_\theta[\mathcal{F}(\theta_1,\ldots,\theta_{d-1})] = \E_g[\mathcal{F}(g_1,\ldots,g_{d-1})] = \E_m[\mathcal{F}(g_1,\ldots,g_{d-1})].
    \end{equation*}

    Next, we show that $\mathcal{F}$ is concave in $m_j$ for every $j\in \{1,\ldots,d-1\}$.
    Taking partial derivative of $\mathcal{F}$ on $m_j$, we have
    \begin{equation*}
         \frac{\partial \mathcal{F}}{\partial m_j} = \sum_{i=1}^{d-1} \frac{\partial \mathcal{F}}{\partial g_i} \cdot \frac{\partial g_i}{\partial m_j}. 
    \end{equation*}
    
    Notice that in the above equation, the term $\frac{\partial g_i}{\partial m_j}$ is non-negative\footnote{Specifically, when $i > j$ we have $\frac{\partial g_i}{\partial m_j} = 0$; when $i \leq j$ we have $\frac{\partial g_i}{\partial m_j} = \frac{g_i}{m_j}\geq 0$.} and independent of $m_j$, for any $i\in \{1,\ldots,d-1\}$.
    Moreover, $\frac{\partial \mathcal{F}}{\partial g_i}$ is decreasing in $g_i$ for each $i\in \{1,\ldots,d-1\}$ by Lemma~\ref{lemma:concave-in-theta}.
    Since $g_i$ is linear in $m_j$ with a non-negative coefficient, $\frac{\partial \mathcal{F}}{\partial g_i}$ is non-increasing in $m_j$.
    Therefore, $\frac{\partial \mathcal{F}}{\partial m_j}$ is non-increasing, which further indicates the concavity of $\mathcal{F}$ in $m_j$ for every $j\in \{1,\ldots,d-1\}$.

    By the concavity of $\mathcal{F}$ in $m_{d-1}$ and Jensen's Inequality, we have
    \begin{align*}
        & \ \E_m[\mathcal{F}(g_1,\ldots,g_{d-2},g_{d-1})] = \E_m\left[\mathcal{F}\left(\prod_{i=1}^{d-1}m_i,\ldots,\prod_{i=d-2}^{d-1}m_i,m_{d-1}\right)\right] \\
        = & \ \E_{m_1,\ldots,m_{d-2}}\left[\E_{m_{d-1}}\left[\mathcal{F}\left(\prod_{i=1}^{d-1}m_i,\ldots,\prod_{i=d-2}^{d-1}m_i,m_{d-1}\right)\right]\right] \\
        \leq & \ \E_{m_1,\ldots,m_{d-2}}\left[\mathcal{F}\left(\E[m_{d-1}] \cdot \prod_{i=1}^{d-2}m_i,\ldots, \E[m_{d-1}] \cdot m_{d-2},\E[m_{d-1}]\right)\right] \\
        = & \ \E_{m_1,\ldots,m_{d-2}}\left[\mathcal{F}\left(\frac{d-1}{d} \cdot \prod_{i=1}^{d-2}m_i,\ldots,\frac{d-1}{d} \cdot m_{d-2},\frac{d-1}{d}\right)\right].
    \end{align*}
    
    Similarly, by the concavity of $\mathcal{F}$ on $m_j$ for $j\in \{1,\ldots,d-2\}$, we have
    \begin{align*}
        & \ \E_{m_1,\ldots,m_{d-2}}\left[\mathcal{F}\left(\frac{d-1}{d}\prod_{i=1}^{d-2}m_i,\ldots,\frac{d-1}{d}\cdot m_{d-2},\frac{d-1}{d}\right)\right] \\ 
        \leq & \ \E_{m_1,\ldots,m_{d-3}}\left[\mathcal{F}\left(\frac{d-1}{d}\cdot \frac{d-2}{d-1} \prod_{i=1}^{d-3}m_i,\ldots,\frac{d-1}{d}\cdot \frac{d-2}{d-1},\frac{d-1}{d}\right)\right] \\ 
        = & \ \E_{m_1,\ldots,m_{d-3}}\left[\mathcal{F}\left(\frac{d-2}{d}\prod_{i=1}^{d-3}m_i,\ldots,\frac{d-2}{d},\frac{d-1}{d}\right)\right] \leq \mathcal{F}\left(\frac{1}{d},\cdots,\frac{d-2}{d},\frac{d-1}{d}\right).
    \end{align*}

    Therefore, we have $\E_\theta[\mathcal{F}(\theta_1,\ldots,\theta_{d-1})] \leq \mathcal{F}\left(\frac{1}{d},\ldots,\frac{d-1}{d}\right)$, which finishes the proof.
\end{proofof}

Now we have all the ingredients to prove Theorem~\ref{theorem:upper-bound-for-Ranking}.

\begin{proofof}{Theorem~\ref{theorem:upper-bound-for-Ranking}}
    Recall that we have upper bounded the expected number of matched servers by $d+\mathcal{F}\left(\frac{1}{d},\ldots,\frac{d-1}{d}\right)$.
    In the following we further give an upper bound on $\mathcal{F}\left(\frac{1}{d},\ldots,\frac{d-1}{d}\right)$ in terms of $d$.
    Let $\beta(0)=d-1$.
    For each $i\in \{1,\dots,d\}$, let $\beta(i)$ be a random variable denoting the number of servers in $S_2$ that are unmatched at the end of the round when $r_i$ arrives, conditioned on $\theta=(\frac{1}{d},\ldots,\frac{d-1}{d})$.
    Recall that $r_i$ matches $s_i$ if and only if $y_i$ (the rank of $s_i$) is smaller than $\frac{d - \beta(i-1)}{d}$.
    Therefore we have
    \begin{equation*}
        \beta(i) = 
        \begin{cases}
            \beta(i-1), \ & \text{w.p.} \ 1-\frac{\beta(i-1)}{d}; \\
            \beta(i-1)-1, \ & \text{w.p.} \ \frac{\beta(i-1)}{d}.
        \end{cases}
    \end{equation*}

    Therefore, conditioned on the value of $\beta(i-1)$, we have
    \begin{equation*}
        \E_{y_i}[\beta(i) \mid \beta(i-1)] = \beta(i-1) - \frac{\beta(i-1)}{d} = \frac{d-1}{d}\cdot \beta(i-1).
    \end{equation*}

    Taking expectation of $\beta(i-1)$ (over $y_1,\ldots,y_{i-1}$), we have
    \begin{align*}
        \E_{y_1,\ldots,y_i}[\beta(i)] = \frac{d-1}{d} \cdot \E_{y_1,\ldots,y_{i-1}}[\beta(i-1)].
    \end{align*}
    
    Therefore, the expected number of servers in $S_2$ that are unmatched in the end is
    \begin{align*}
        \E_y[\beta(d)] & = \frac{d-1}{d} \cdot \E_{y_1,\ldots,y_{d-1}}[\beta(d-1)] \\
        & = \left(\frac{d-1}{d}\right)^2 \cdot \E_{y_1,\ldots,y_{d-2}}[\beta(d-2)] = \left(1-\frac{1}{d}\right)^d \cdot (d-1).
    \end{align*}
    
    Therefore the expected number of matched servers in $S_2$ is $\mathcal{F}(\frac{1}{d},\ldots,\frac{d-1}{d}) = d-1-\E_y[\beta(d)]$, and the competitive ratio of \Ranking for the hard instance is at most
    \begin{align*}
        \E_{\theta}\left[ \frac{d+\mathcal{F}(\theta_1,\ldots,\theta_{d-1})}{2d-1} \right] &\leq \frac{d+\mathcal{F}(\frac{1}{d},\ldots,\frac{d-1}{d})}{2d-1} = \frac{2d-1-\E_y[\beta(d)]}{2d-1} \\
        &= 1- \frac{d-1}{2d-1}\cdot \left(1-\frac{1}{d}\right)^d,
    \end{align*}
    where the inequality holds due to Lemma~\ref{lemma:concavity-of-F}.
\end{proofof}

\subsection{Stronger Upper Bound for Small \texorpdfstring{$d$}{}}
\label{ssec:improved-ranking}


While we can show that $\lim_{d\to \infty} \gamma(d) = 1-\frac{1}{2e}\approx 0.816$ and $\gamma(d) \leq 0.826$ when $d\geq 20$, the ratio is large when $d$ is small.
In this section, we improve our previous instance and obtain smaller upper bounds for \Ranking when $d=O(1)$.
Consider the following hard instance (see Figure~\ref{fig:hardness-for-Ranking(d=3)} for an example when $d=3$ and Figure~\ref{fig:hardness-for-Ranking(d=2)} (left) for an example when $d=2$).

\paragraph{Hard Instance for Small $d$.}
The instance is consist of $2d$ components $G_1,\ldots, G_{2d}$.
For each $i\in \{1,2,\ldots,d\}$, let component $G_i$ have $4$ levels, the first being a request $r_0^i$, the second having $d$ servers $\{s_1^i,\ldots,s_d^i\}$, the third having $d$ requests $\{r_1^i,\ldots,r_d^i\}$, and the fourth having $d$ servers $\{s_{d+1}^i,\ldots,s_{2d}^i\}$.
Each component $G_{d+i}$ (where $i\in \{1,\ldots,d\}$) is consist of $d-1$ requests $\{r_1^{d+i},\ldots,r_{d-1}^{d+i}\}$.
For each $i\in \{1,\ldots,d\}$, let $r_0^i$ be a common neighbor of $\{s_1^i,\ldots,s_d^i\}$.
For all $i,j\in \{1,\ldots,d\}$,
\begin{itemize}
    \item let the neighbors of $s_j^i$ be $\{r_1^i,\ldots,r_d^i\}\setminus\{r_{j'}^i\}$, where $j' = (j \bmod d) + 1$;
    \item let there be an edge between $s_{d+j}^i$ and $r_j^i$.
\end{itemize}
Finally, for each $l\in \{d+1,\ldots,2d\}$, we form a complete bipartite graph between $G_l$ and $\{s_l^1,\ldots,s_l^d\}$. 
We let requests arrive in the order that $r_i^j$ arrives before $r_{i'}^{j'}$ if either $j<j'$, or $j=j'$ and $i<i'$.

\medskip

\begin{figure}[htb]
\begin{center}
\resizebox{0.8\textwidth}{!}{
\begin{tikzpicture}
\draw [fill = gray!30] (0,6) circle (0.4); \node at (0,6) {$s_1^1$};
\draw [fill = gray!30] (2,6) circle (0.4); \node at (2,6) {$s_2^1$};
\draw [fill = gray!30] (4,6) circle (0.4); \node at (4,6) {$s_3^1$};
\draw [fill = gray!30] (0,2) circle (0.4); \node at (0,2) {$s_4^1$};
\draw [fill = gray!30] (2,2) circle (0.4); \node at (2,2) {$s_5^1$};
\draw [fill = gray!30] (4,2) circle (0.4); \node at (4,2) {$s_6^1$};
\draw [fill = gray!30] (6,6) circle (0.4); \node at (6,6) {$s_1^2$};
\draw [fill = gray!30] (8,6) circle (0.4); \node at (8,6) {$s_2^2$};
\draw [fill = gray!30] (10,6) circle (0.4); \node at (10,6) {$s_3^2$};
\draw [fill = gray!30] (6,2) circle (0.4); \node at (6,2) {$s_4^2$};
\draw [fill = gray!30] (8,2) circle (0.4); \node at (8,2) {$s_5^2$};
\draw [fill = gray!30] (10,2) circle (0.4); \node at (10,2) {$s_6^2$};
\draw [fill = gray!30] (12,6) circle (0.4); \node at (12,6) {$s_1^3$};
\draw [fill = gray!30] (14,6) circle (0.4); \node at (14,6) {$s_2^3$};
\draw [fill = gray!30] (16,6) circle (0.4); \node at (16,6) {$s_3^3$};
\draw [fill = gray!30] (12,2) circle (0.4); \node at (12,2) {$s_4^3$};
\draw [fill = gray!30] (14,2) circle (0.4); \node at (14,2) {$s_5^3$};
\draw [fill = gray!30] (16,2) circle (0.4); \node at (16,2) {$s_6^3$};
\draw (2,8) circle (0.4); \node at (2,8) {$r_0^1$};
\draw (0,4) circle (0.4); \node at (0,4) {$r_1^1$};
\draw (2,4) circle (0.4); \node at (2,4) {$r_2^1$};
\draw (4,4) circle (0.4); \node at (4,4) {$r_3^1$};
\draw (8,8) circle (0.4); \node at (8,8) {$r_0^2$};
\draw (6,4) circle (0.4); \node at (6,4) {$r_1^2$};
\draw (8,4) circle (0.4); \node at (8,4) {$r_2^2$};
\draw (10,4) circle (0.4); \node at (10,4) {$r_3^2$};
\draw (14,8) circle (0.4); \node at (14,8) {$r_0^3$};
\draw (12,4) circle (0.4); \node at (12,4) {$r_1^3$};
\draw (14,4) circle (0.4); \node at (14,4) {$r_2^3$};
\draw (16,4) circle (0.4); \node at (16,4) {$r_3^3$};
\draw (1,0) circle (0.4); \node at (1,0) {$r_1^4$};
\draw (3,0) circle (0.4); \node at (3,0) {$r_2^4$};
\draw (7,0) circle (0.4); \node at (7,0) {$r_1^5$};
\draw (9,0) circle (0.4); \node at (9,0) {$r_2^5$};
\draw (13,0) circle (0.4); \node at (13,0) {$r_1^6$};
\draw (15,0) circle (0.4); \node at (15,0) {$r_2^6$};
\draw (0,6.4)--(2,7.6); 
\draw (2,6.4)--(2,7.6); 
\draw (4,6.4)--(2,7.6); 
\draw (0,5.6)--(0,4.4); 
\draw (0,5.6)--(4,4.4); 
\draw (2,5.6)--(2,4.4); 
\draw (2,5.6)--(0,4.4); 
\draw (4,5.6)--(4,4.4); 
\draw (4,5.6)--(2,4.4); 
\draw (0,2.4)--(0,3.6); 
\draw (2,2.4)--(2,3.6); 
\draw (4,2.4)--(4,3.6); 
\draw (6,6.4)--(8,7.6); 
\draw (8,6.4)--(8,7.6); 
\draw (10,6.4)--(8,7.6); 
\draw (6,5.6)--(6,4.4); 
\draw (6,5.6)--(10,4.4); 
\draw (8,5.6)--(8,4.4); 
\draw (8,5.6)--(6,4.4); 
\draw (10,5.6)--(10,4.4); 
\draw (10,5.6)--(8,4.4); 
\draw (6,2.4)--(6,3.6); 
\draw (8,2.4)--(8,3.6); 
\draw (10,2.4)--(10,3.6); 
\draw (12,6.4)--(14,7.6); 
\draw (14,6.4)--(14,7.6); 
\draw (16,6.4)--(14,7.6); 
\draw (12,5.6)--(12,4.4); 
\draw (12,5.6)--(16,4.4); 
\draw (14,5.6)--(14,4.4); 
\draw (14,5.6)--(12,4.4); 
\draw (16,5.6)--(16,4.4); 
\draw (16,5.6)--(14,4.4); 
\draw (12,2.4)--(12,3.6); 
\draw (14,2.4)--(14,3.6); 
\draw (16,2.4)--(16,3.6); 
\draw (0,1.6)--(1,0.4); 
\draw (0,1.6)--(3,0.4); 
\draw (6,1.6)--(1,0.4); 
\draw (6,1.6)--(3,0.4); 
\draw (12,1.6)--(1,0.4); 
\draw (12,1.6)--(3,0.4); 
\draw (2,1.6)--(7,0.4); 
\draw (2,1.6)--(9,0.4); 
\draw (8,1.6)--(7,0.4); 
\draw (8,1.6)--(9,0.4); 
\draw (14,1.6)--(7,0.4); 
\draw (14,1.6)--(9,0.4); 
\draw (4,1.6)--(13,0.4); 
\draw (4,1.6)--(15,0.4); 
\draw (10,1.6)--(13,0.4); 
\draw (10,1.6)--(15,0.4); 
\draw (16,1.6)--(13,0.4); 
\draw (16,1.6)--(15,0.4); 
\node at (-2,8) {\large{\textbf{level 1}}};
\node at (-2,6) {\large{\textbf{level 2}}};
\node at (-2,4) {\large{\textbf{level 3}}};
\node at (-2,2) {\large{\textbf{level 4}}};
\node at (-2,0) {\large{\textbf{level 5}}};
\draw [densely dashed] (-0.7,8.7) rectangle (4.7,1.3); \node at (4,8) {$G_1$};
\draw [densely dashed] (0,-0.7) rectangle (4,0.7); \node at (-0.5,0) {$G_4$};
\end{tikzpicture}}
\end{center}
\caption{Hard instance for \Ranking: an illustrating example when $d=3$.}
\label{fig:hardness-for-Ranking(d=3)}
\end{figure}
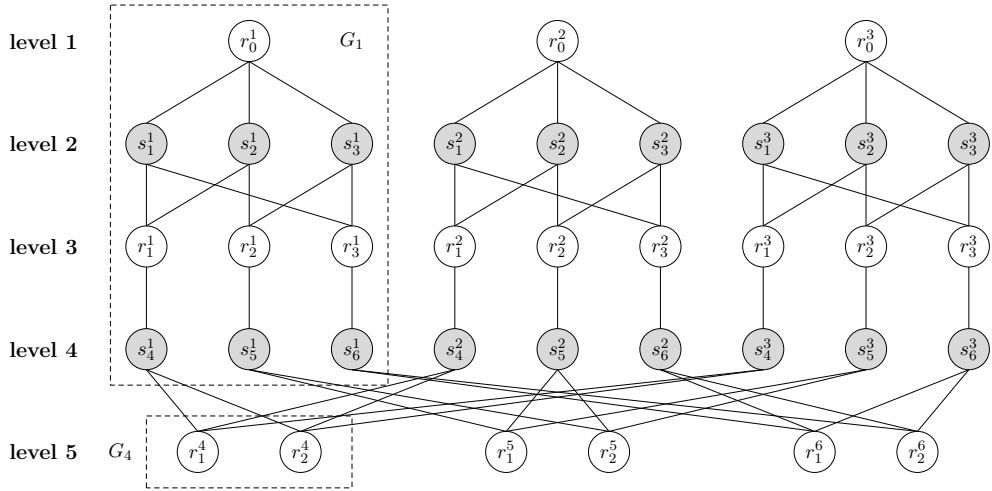

Intuitively speaking, the design of the hard instance is to increase the probability of servers in level $4$ being chosen before the requests in $G_{d+1},\ldots, G_{2d}$ arrive.
Take $s^1_6$ in Figure~\ref{fig:hardness-for-Ranking(d=3)} as an example.
Upon the arrival of $r^1_3$, $s^1_6$ realizes its rank and will be matched if its rank is smaller than that of $s^1_1$ and $s^1_3$ if they are not unmatched.
Notice that this ($s^1_6$ being matched by $r^1_3$) is very likely to happen because either $s^1_1$ and $s^1_3$ are already matched (by $r^1_0$, $r^1_1$ or $r^1_2$), or they have large ranks.
Similarly, we can argue that $s^2_6$ and $s^3_6$ will very likely be matched before $r^6_1$ and $r^6_2$ arrive, which results in $r^6_1$ and $r^6_2$ being unmatched in \Ranking.

For small $d$ we can precisely calculate the probability of each server in level $4$ being matched before the requests in $G_{d+1},\ldots, G_{2d}$ arrive, and use it to bound the competitive ratio of \Ranking.
We showcase the analysis when $d=2$ (see Figure~\ref{fig:hardness-for-Ranking(d=2)} (left) for the instance) in the following (the proof can be found in Appendix~\ref{sec:missing-proofs(Ranking)}).

\begin{lemma} \label{lemma:hardness-for-Ranking(d=2)}
    \textnormal{\Ranking} is at most $\frac{119}{144} (\approx 0.8264)$-competitive for online bipartite matching on $2$-regular graphs.
\end{lemma}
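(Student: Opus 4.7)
The plan is to compute $\opt$ and $\E[\alg]$ exactly on the $d=2$ instance depicted in Figure~\ref{fig:hardness-for-Ranking(d=2)}. To establish $\opt=8$, I will exhibit a perfect matching, e.g.\ $r_0^1\to s_2^1$, $r_1^1\to s_1^1$, $r_2^1\to s_4^1$, $r_0^2\to s_1^2$, $r_1^2\to s_3^2$, $r_2^2\to s_2^2$, $r_1^3\to s_3^1$, $r_1^4\to s_4^2$; since there are only $8$ servers in total, this matching is optimal.

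For $\E[\alg]$, I analyze each component $G_i$ ($i=1,2$) in isolation. The first three arrivals $r_0^i, r_1^i, r_2^i$ touch only the four servers $s_1^i, s_2^i, s_3^i, s_4^i$, and the edge structure forces \emph{exactly three} of them to be matched: $r_0^i$ always matches one of $s_1^i, s_2^i$; then $r_1^i$ always matches one of $s_1^i, s_3^i$, since $s_3^i$ is still fresh; and $r_2^i$ always matches one of $s_2^i, s_4^i$, since $s_4^i$ is still fresh. A short case analysis on the ranks $y_1^i, y_2^i, y_3^i, y_4^i$ then shows that $s_3^i$ is the idle server exactly when $y_2^i<y_1^i<y_3^i$, and $s_4^i$ is the idle server exactly when $y_1^i<y_2^i<y_4^i$. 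Each of these events has probability $\tfrac{1}{6}$ by the symmetry of i.i.d.\ uniform ranks, and the corresponding events for $i=1$ and $i=2$ are independent because they depend on disjoint sets of ranks.

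Finally, I compute the contribution of the cross-component requests. Since $N(r_1^3)=\{s_3^1,s_3^2\}$ and each $s_3^i$ is idle independently with probability $\tfrac{1}{6}$, the probability that $r_1^3$ matches is $1-(5/6)^2=\tfrac{11}{36}$, and the same probability holds for $r_1^4$. Summing all contributions,
\begin{equation*}
    \E[\alg] \;=\; 2\cdot 3 \;+\; 2\cdot\frac{11}{36} \;=\; \frac{119}{18},
\end{equation*}
so $\E[\alg]/\opt = (119/18)/8 = 119/144$, as claimed. The only mildly nontrivial step is the enumeration of which of the four component servers is left idle under each rank ordering; once that is tabulated, everything reduces to an independence argument across the two components and a two-term probability computation.
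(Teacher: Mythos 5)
Your proposal is correct and follows essentially the same route as the paper's proof: the same instance (with relabeled vertices), the observation that the six component requests always match, the computation that each of the two critical servers is left idle with probability $\tfrac{1}{6}$ via the rank ordering $y_2^i<y_1^i<y_3^i$ (respectively $y_1^i<y_2^i<y_4^i$), independence across the two components giving $1-(5/6)^2=\tfrac{11}{36}$ for each cross request, and the final ratio $\tfrac{119}{144}$.
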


We remark that it would be quite tedious (and difficult, in some cases) to analyze the competitive ratio of \Ranking in the above instance for $d \geq 3$.
Therefore, we calculate the ratio for larger $d$ with the assistance of a computer program.
Observe that prior to the arrival of level $5$ requests, the matching status of vertices in components $G_1,\ldots,G_d$ are independent.
Therefore to compute the competitive ratio, it suffices to calculate the probability of each server in level $4$ being matching, which can be done by enumerating all $(2d)!$ permutations of servers' ranks in $G_1$.
Due to the complexity of the enumeration, we compute the competitive ratio of \Ranking for the above instances and use it as an upper bound for \Ranking only for $2\leq d\leq 6$ (see Table~\ref{fig:improved-hardness-for-Ranking-(d,d)}).
For $d\geq 7$, we use $\gamma(d)$ in Theorem~\ref{theorem:upper-bound-for-Ranking} as an upper bound.

\begin{table}[htb]
    \begin{center}
        \begin{tabular}{|c|c|c|c|c|c|c} 
           \hline
           $d$ & $2$ & $3$ & $4$ & $5$ & $6$ \\
           \hline
           U.B. & $0.8264$ & $0.8251$ & $0.8228$ & $0.8223$ & $0.8219$ \\
           \hline
        \end{tabular}
    \end{center}
    \caption{Upper bounds on the competitive ratio of \Ranking when $d\in \{2,3,4,5,6\}$.}
    \label{fig:improved-hardness-for-Ranking-(d,d)}
\end{table}

    
        

\paragraph{Comparison with \cite{conf/soda/CohenW18}.}
We briefly compare our hard instance with that of \cite{conf/soda/CohenW18}.
Consider the case when $d=2$ (see Figure~\ref{fig:hardness-for-Ranking(d=2)} (right) for their hard instance, where requests of both instances arrive in the order of $r_1,r_2,\ldots$.).
For the two instances, the competitive ratios are $0.8264$ and $0.8333$.
In the right instance, the event of $s_3$ being matched is negatively correlated with that of $s_4$ being matched when $r_4$ arrives.
In the left instance, the event of $s_3$ being matched is independent of that of $s_7$ being matched (similarly for $s_4$ and $s_8$), which increases the probability of $r_7$ and $r_8$ being unmatched and is the reason why the eventual ratio is smaller.
The way they generalized the instance to larger $d$'s is also different from ours: they replaced each vertex with $d/2$ identical copies and each edge with a complete bipartite graph between the two sets of copies (which only works for even $d$).
It can thus be shown that (when $d\to \infty$) the competitive ratio of \Ranking is upper bounded by that of the \textsc{Balance} algorithm~\cite{journals/tcs/KalyanasundaramP00} on the right instance in Figure~\ref{fig:hardness-for-Ranking(d=2)}, which is $0.875$.

\begin{figure}[htb]
\begin{center}
\resizebox{0.7\textwidth}{!}{
\begin{tikzpicture}
\draw [fill = gray!30] (0,6) circle (0.4); \node at (0,6) {$s_1$};
\draw [fill = gray!30] (3,6) circle (0.4); \node at (3,6) {$s_2$};
\draw [fill = gray!30] (0,2) circle (0.4); \node at (0,2) {$s_3$};
\draw [fill = gray!30] (3,2) circle (0.4); \node at (3,2) {$s_4$};
\draw [fill = gray!30] (6,6) circle (0.4); \node at (6,6) {$s_5$};
\draw [fill = gray!30] (9,6) circle (0.4); \node at (9,6) {$s_6$};
\draw [fill = gray!30] (6,2) circle (0.4); \node at (6,2) {$s_7$};
\draw [fill = gray!30] (9,2) circle (0.4); \node at (9,2) {$s_8$};
\draw (1.5,8) circle (0.4); \node at (1.5,8) {$r_1$};
\draw (0,4) circle (0.4); \node at (0,4) {$r_2$};
\draw (3,4) circle (0.4); \node at (3,4) {$r_3$};
\draw (7.5,8) circle (0.4); \node at (7.5,8) {$r_4$};
\draw (6,4) circle (0.4); \node at (6,4) {$r_5$};
\draw (9,4) circle (0.4); \node at (9,4) {$r_6$};
\draw (1.5,0) circle (0.4); \node at (1.5,0) {$r_7$};
\draw (7.5,0) circle (0.4); \node at (7.5,0) {$r_8$};
\draw (0,6.4)--(1.5,7.6); 
\draw (3,6.4)--(1.5,7.6); 
\draw (0,5.6)--(0,4.4); 
\draw (3,5.6)--(3,4.4); 
\draw (0,2.4)--(0,3.6); 
\draw (3,2.4)--(3,3.6); 
\draw (6,6.4)--(7.5,7.6); 
\draw (9,6.4)--(7.5,7.6); 
\draw (6,5.6)--(6,4.4); 
\draw (9,5.6)--(9,4.4); 
\draw (6,2.4)--(6,3.6); 
\draw (9,2.4)--(9,3.6); 
\draw (0,1.6)--(1.5,0.4); 
\draw (6,1.6)--(1.5,0.4); 
\draw (3,1.6)--(7.5,0.4); 
\draw (9,1.6)--(7.5,0.4); 
\draw [fill = gray!30] (13,6) circle (0.4); \node at (13,6) {$s_1$};
\draw [fill = gray!30] (16,6) circle (0.4); \node at (16,6) {$s_2$};
\draw [fill = gray!30] (13,2) circle (0.4); \node at (13,2) {$s_3$};
\draw [fill = gray!30] (16,2) circle (0.4); \node at (16,2) {$s_4$};
\draw (14.5,8) circle (0.4); \node at (14.5,8) {$r_1$};
\draw (13,4) circle (0.4); \node at (13,4) {$r_2$};
\draw (16,4) circle (0.4); \node at (16,4) {$r_3$};
\draw (14.5,0) circle (0.4); \node at (14.5,0) {$r_4$};
\draw [densely dashed] (11,0)--(11,8); 
\draw (13,6.4)--(14.5,7.6); 
\draw (16,6.4)--(14.5,7.6); 
\draw (13,5.6)--(13,4.4); 
\draw (16,5.6)--(16,4.4); 
\draw (13,2.4)--(13,3.6); 
\draw (16,2.4)--(16,3.6); 
\draw (13,1.6)--(14.5,0.4); 
\draw (16,1.6)--(14.5,0.4); 
\end{tikzpicture}}
\end{center}
\caption{Comparison between hard instances: an illustrating example when $d=2$.}
\label{fig:hardness-for-Ranking(d=2)}
\end{figure}
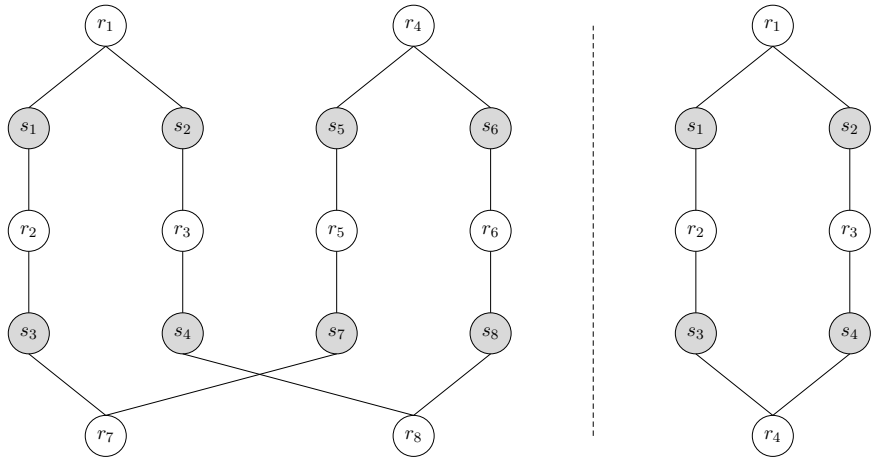

\section{OCS-based Algorithm} \label{sec:Discrete-OCS}

In this section, we propose algorithms for the online bipartite matching on $(d,d)$-bounded graphs based on Online Correlated Selection (OCS).
Recall that in the general OCS algorithms, when each online request $r$ arrives, there is an associated vector $\mathbf{x}^r: S \to [0,1]$ that assigns a weight $\mathbf{x}^r(s)$ to each server $s$, and the total weight $\sum_{s\in S} \mathbf{x}^r(s) \leq1$.
The online algorithm then picks an unmatched server with non-zero weight (if any), based on the vector $\mathbf{x}^r$ and the accumulated weight mass on all the servers.
Applying the OCS algorithm to the online bipartite matching on $(d,d)$-bounded graphs (by setting $\mathbf{x}^r(s) = 1/d$ if server $s$ is a neighbor of request $r$; $\mathbf{x}^r(s) = 0$ otherwise) yields the following algorithm (see Algorithm~\ref{alg:discrete-ocs}).
Suppose that the online requests are indexed by $1,2,\ldots,n$ (following their arrival order).

\begin{algorithm}[htb] \label{alg:discrete-ocs}
\caption{OCS-based Algorithm for OBM on $(d,d)$-bounded Graphs}
\textbf{Input:} Integer $d\geq 2$ \;
decide a non-decreasing sequence $1=f(0)\leq f(1)\leq \cdots \leq f(d)$ \;
\For{each online request $r\in R$}{
    \For{each server $s\in S$}{
        let $l_s^{r-1}$ be the number of neighbors of $s$ that have arrived (before $r$) \;
        let $U_s^{r-1} = 1$ if $s$ is unmatched; otherwise let $U_s^{r-1} = 0$ \;
    }
    let $A = \{ s\in N(r): U_s^{r-1} = 1 \}$ be the set of unmatched neighbors of $r$ \;
    \If{$A\neq \emptyset$}
        {match $r$ to $s$ chosen with probability $\frac{f(l_s^{r-1})}{\sum_{s'\in A} f(l_{s'}^{r-1})}$ \;}
}
\end{algorithm}

Note that when $d=2$ the above algorithm corresponds to the two-way semi-OCS~\cite{conf/focs/GaoHHNYZ21}, which has a competitive ratio $0.875$ by setting $f(1)$ to be an arbitrarily large number.

Throughout this section, we use $l_s^r$ to denote the degree of server $s$ at the end of the round when request $r$ arrives and let $U_s^r\in \{0,1\}$ be the indicator of $s$ being unmatched at the end of the round when request $r$ arrives.

\subsection{Characterizing the Competitive Ratio}
\label{ssec:characterize-candidate}

In this section, we give a lower bound on the competitive ratio of Algorithm~\ref{alg:discrete-ocs}.
Following the framework of Gao et al~\cite{conf/focs/GaoHHNYZ21}, we show that as long as the fixed sequence $f(0),f(1),\ldots,f(d)$ satisfies some constraints, we can guarantee that at the end of the round when a server $s$ reaches degree $i$, the probability that $s$ is unmatched is at most $1/f(i)$.
Specifically, we focus on the class of functions satisfying the following constraint and call them \emph{candidate functions}\footnote{While our algorithm only uses $f(0),f(1),\ldots,f(d-1)$, for generality we let the domain of a candidate function to be the set of all non-negative integers.}.
Note that the definition of candidate functions is with respect to a fixed $d\geq 2$.
For different $d$, the sets of candidate functions are different.

\begin{definition} [Candidate Function]
    Given any $d\geq 2$, we call a non-decreasing function $f: \{0,1,\ldots,\infty\} \to \mathbb{R}$ a candidate function if $f(0)=1$ and for any integers $0\leq m\leq d-1$ and $0\leq l_1 \leq l_2\leq \cdots \leq l_m < \infty$, it holds that
    \begin{equation} \label{equation:property-of-f}
        1 + \frac{\sum_{i=1}^m f(l_i)}{d-m} \geq \prod_{i=1}^m \frac{f(l_i+1)}{f(l_i)}.
    \end{equation}
    We use $\mathcal{C}_d$ to denote the set of all candidate functions.
\end{definition}

Roughly speaking, Equation~\eqref{equation:property-of-f} puts upper bounds on $f(l+1)$ in terms of $f(0),f(1),\ldots,f(l)$, for every $l\geq 1$, and thus prevents the function value from growing too fast.

The following lemma can be regarded as a discretized version of~\cite[Theorem 9]{conf/focs/GaoHHNYZ21}, which states that as long as $f$ is a candidate function, we can give an upper bound on the probability that a group of servers is unmatched, in terms of the candidate function values.
The proof of the lemma is deferred to Appendix~\ref{sec:missing-proofs(OCS)}.

\begin{lemma}\label{lemma:per-server-guarantee}
    For any candidate function $f\in \mathcal{C}_d$, for any $A \subseteq S$ and any $r \in R$, we have
    \begin{equation*}
        \Pr[U_{A}^{r}=1] \leq 1/\prod_{s\in A} f(l_s^r),
    \end{equation*}
    where $U_{A}^r = \prod_{s\in A} U_s^r \in \{0,1\}$ is the indicator of whether all the servers in $A\subseteq S$ are unmatched at the end of round $r$.
\end{lemma}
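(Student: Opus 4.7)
The plan is to prove the bound $\Pr[U_A^r = 1] \leq 1/\prod_{s \in A} f(l_s^r)$ by induction on $r$, simultaneously over all subsets $A \subseteq S$. The base case ($r=0$) is immediate from $f(0) = 1$ and the fact that every server starts unmatched. For the inductive step, write $B := A \cap N(r)$ and $C := N(r) \setminus A$. When $B = \emptyset$, request $r$ touches neither the matching status nor the current degree of any $s \in A$, so the bound passes verbatim from round $r-1$.

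The substantive case is $B \neq \emptyset$. Let $F_B := \sum_{s \in B} f(l_s^{r-1})$ (deterministic) and $Y := \sum_{s' \in C} f(l_{s'}^{r-1}) U_{s'}^{r-1}$ (random). Conditioning on the history $\mathcal{H}_{r-1}$, the description of the algorithm yields
\[
\Pr[U_A^r = 1 \mid \mathcal{H}_{r-1}] \;=\; U_A^{r-1}\cdot \frac{Y}{F_B + Y},
\]
so that $\Pr[U_A^r = 1] = \E\!\left[U_A^{r-1} \cdot g(Y)\right]$ with $g(y) := y/(F_B + y)$, a function that is concave on $y \geq 0$. A direct use of Jensen's inequality would force us to work with the conditional mean $\E[Y \mid U_A^{r-1}=1]$, which is inaccessible since the inductive hypothesis provides only one-sided (upper) probability bounds. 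I sidestep this by linearising $g$ at the specific point $y_0 = |C|$:
\[
g(Y)\;\leq\; g(|C|) + g'(|C|)\,(Y - |C|) \;=\; \frac{F_B\, Y + |C|^2}{(F_B + |C|)^2}.
\]
This reduces the task to controlling only the two unconditional expectations $\E[U_A^{r-1}]$ and $\E[U_A^{r-1}\,Y]$.

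For the latter, interchanging the sum over $s' \in C$ with the expectation gives $\E[U_A^{r-1}\,Y] = \sum_{s' \in C} f(l_{s'}^{r-1})\, \Pr[U_{A \cup \{s'\}}^{r-1} = 1]$, and applying the inductive hypothesis to each enlarged set $A \cup \{s'\}$ (recall $s' \notin A$ since $s' \in C$) yields $\E[U_A^{r-1}\,Y] \leq |C|/\prod_{s \in A} f(l_s^{r-1})$. Combined with the IH on $A$ itself, the tangent-line bound collapses to
\[
\Pr[U_A^r = 1] \;\leq\; \frac{F_B\cdot |C| + |C|^2\cdot 1}{(F_B + |C|)^2 \prod_{s \in A} f(l_s^{r-1})} \;=\; \frac{|C|}{(F_B + |C|) \prod_{s \in A} f(l_s^{r-1})}.
\]
Since $|C| = |N(r)|-|B| \leq d-|B|$ and $x \mapsto x/(F_B + x)$ is increasing in $x$, the candidate function property~\eqref{equation:property-of-f}, applied with $m = |B|$ and the values $\{l_s^{r-1}\}_{s \in B}$, rearranges exactly to $|C|/(F_B + |C|) \leq \prod_{s \in B} f(l_s^{r-1})/f(l_s^{r-1}+1)$. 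Multiplying through by $1/\prod_{s \in A} f(l_s^{r-1})$ and using $l_s^r = l_s^{r-1}$ for $s \in A\setminus B$ and $l_s^r = l_s^{r-1}+1$ for $s \in B$, the right-hand side simplifies to $1/\prod_{s \in A} f(l_s^r)$, closing the induction.

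The main obstacle will be identifying the correct linearisation point $y_0 = |C|$. Optimising the tangent-line upper bound $(y_0^2 + F_B\,|C|)/(F_B + y_0)^2$ over $y_0 \geq 0$ gives a minimiser precisely at $y_0 = |C|$, at which value the bound matches the shape of the candidate function constraint exactly; a plain Jensen at the unconditional mean $\E[Y]$ does not interact cleanly with the multiplier $U_A^{r-1}$, so this particular affinisation is the key algebraic observation that drives the whole proof.
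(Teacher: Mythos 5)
Your proposal is correct, and it follows the paper's proof skeleton almost exactly: the same induction on $r$, the same use of the inductive hypothesis on the enlarged sets $A\cup\{s'\}$ for $s'\in N(r)\setminus A$, and the same closing step via Constraint~\eqref{equation:property-of-f}; in fact you land on precisely the paper's intermediate bound $\Pr[U_A^r=1]\le \frac{|C|}{(F_B+|C|)\prod_{s\in A}f(l_s^{r-1})}$, where $F_B=\sum_{s\in A\cap N(r)}f(l_s^{r-1})$ and $C=N(r)\setminus A$. Where you genuinely differ is the concavity step. The paper multiplies numerator and denominator of the conditional survival probability by $U_A^{r-1}$ so that the expression becomes $\frac{1}{F_B}\cdot\frac{XY}{X+Y}$ with $X=\sum_{s\in C}f(l_s^{r-1})U_{A\cup\{s\}}^{r-1}$ and $Y=F_B\,U_A^{r-1}$, applies Jensen to the \emph{jointly} concave $g(x,y)=\frac{xy}{x+y}$, and then uses monotonicity of $g$ in each argument to substitute the inductive upper bounds; so your remark that ``a direct use of Jensen is blocked'' applies only to the naive univariate, conditioned-on-$U_A^{r-1}$ version, not to the paper's bivariate variant. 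Your tangent-line bound on $y\mapsto y/(F_B+y)$ at $y_0=|C|$ achieves the same reduction more elementarily: after linearization only the unconditional quantities $\E[U_A^{r-1}]$ and $\E[U_A^{r-1}Y]$ appear, with nonnegative coefficients, so no joint-concavity or monotonicity verification is needed, and your observation that $y_0=|C|$ minimizes the resulting bound explains why no slack is lost relative to the paper. One loose end to add a sentence for (the paper is equally cavalier here): Constraint~\eqref{equation:property-of-f} is stated only for $m\le d-1$, so when $N(r)\subseteq A$ and $d(r)=d$ you cannot invoke it with $m=|B|=d$; but in that case $C=\emptyset$, your intermediate bound is already $0$, and the claim holds trivially.
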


Particularly, by setting $A = \{s\}$ and $r$ as the last online request, Lemma~\ref{lemma:per-server-guarantee} indicates that for any server $s$, the probability that $s$ is unmatched when the algorithm terminates is at most $1/f(d)$ (since $s$ has at least $d$ neighbors).
Moreover, since the guarantee is per-vertex, we have a competitive ratio guarantee even if each offline server is associated with a non-negative weight.

\begin{corollary} \label{corollary:C.R.-of-ocs}
    Algorithm~\ref{alg:discrete-ocs} (with candidate function $f$) achieves a competitive ratio of $1-1/f(d)$ for the (vertex-weighted) online bipartite matching on $(d,d)$-bounded graphs.
\end{corollary}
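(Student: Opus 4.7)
The plan is to derive the corollary as a one-line specialization of Lemma~\ref{lemma:per-server-guarantee}. First, I would invoke the WLOG reduction from the preliminaries that, on $(d,d)$-bounded graphs, every offline server may be assumed to have degree exactly $d$, so that by the time the last request $n$ has arrived the counter $l_s^n$ equals $d$ for every $s \in S$. (Even without this reduction, monotonicity of $f$ gives $f(l_s^n) \geq f(d)$, since every server ultimately has degree at least $d$.)

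Next, I would apply Lemma~\ref{lemma:per-server-guarantee} with $A = \{s\}$ and $r = n$ to obtain, uniformly over $s \in S$,
\[
\Pr[s \text{ is unmatched at termination}] \;=\; \Pr[U_s^n = 1] \;\leq\; \frac{1}{f(l_s^n)} \;\leq\; \frac{1}{f(d)}.
\]
Equivalently, every offline server is matched with probability at least $1 - 1/f(d)$.

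For the (vertex-weighted) competitive ratio, I would compare with an offline optimum. Let $M^\ast$ be an offline maximum-weight matching and let $w(s) \geq 0$ denote the weight of server $s$, so that $\opt = \sum_{s \in M^\ast} w(s)$. Since each matched server $s$ in $\alg$ contributes $w(s)$ to the algorithm's objective,
\[
\E[\alg] \;=\; \sum_{s \in S} w(s)\cdot \Pr[s \text{ matched in } \alg] \;\geq\; \sum_{s \in M^\ast} w(s)\cdot \Big(1 - \tfrac{1}{f(d)}\Big) \;=\; \Big(1 - \tfrac{1}{f(d)}\Big)\cdot \opt,
\]
which is the claimed bound.

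There is no real technical obstacle here: the heavy lifting is already packaged inside Lemma~\ref{lemma:per-server-guarantee}, and the only mild subtlety is translating its bound—phrased in terms of the intermediate counter $l_s^r$—into a guarantee at termination, which is handled by evaluating at $r = n$ together with the monotonicity of the candidate function. The fact that the per-server bound is \emph{uniform} over $s$ is precisely what allows the argument to extend from the cardinality objective to the vertex-weighted one without loss.
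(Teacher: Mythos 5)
Your proposal is correct and matches the paper's own (implicitly given) argument: apply Lemma~\ref{lemma:per-server-guarantee} with $A=\{s\}$ and $r$ the final request, use that each server has degree at least $d$ together with monotonicity of $f$ to get $\Pr[s \text{ unmatched}] \leq 1/f(d)$, and then exploit the uniform per-server guarantee to handle nonnegative vertex weights. No gaps.
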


Therefore, we translate the online matching problem into an optimization problem of maximizing $f(d)$ subject to $f\in \mathcal{C}_d$, i.e., $f$ is a candidate function.
While Constraint~\eqref{equation:property-of-f} looks complicated and puts lots of restrictions on $f$, it is not hard to find natural functions that are feasible.
For example, the function with $f(l)=1$ for all $l\geq 0$ is trivially feasible as the LHS of~\eqref{equation:property-of-f} is always at least $1$ while the RHS is always $1$, for any $0\leq l_1 \leq l_2\leq \cdots \leq l_m < \infty$.
It can also be verified that the function with $f(l)=(\frac{d}{d-1})^l$ is feasible since
\begin{equation*}
    1 + \frac{\sum_{i=1}^m f(l_i)}{d-m} \geq 1+\frac{m}{d-m} = \frac{d}{d-m} \geq (\frac{d}{d-1})^m = \prod_{i=1}^m \frac{f(l_i+1)}{f(l_i)}.
\end{equation*}

Using the above candidate function, for $(k,d)$-bounded graphs we can guarantee that each server is matched with probability at least $1-(1-1/d)^k$ when the algorithm terminates, which yields a competitive ratio of $1-(1-1/d)^k$, matching the optimal ratio for deterministic algorithms by~\cite{journals/teco/NaorW18}.
Another candidate function with $f(l) = \exp(\frac{l}{d}+\frac{l^2}{2d^2}+\frac{0.179 l^3}{d^3})$ is proposed in~\cite{conf/focs/GaoHHNYZ21}, which yields a competitive ratio of $1-\exp(-1-0.5-0.179)\approx 0.8134$.

Our main contribution in this section, is showing that there exists a polynomial-time algorithm that for every $d\geq 2$ computes the candidate function $f^*\in \mathcal{C}_d$ with maximum $f^*(d)$, which we refer to as the \emph{optimal candidate function}.
In fact, we impose a stronger definition for the optimal candidate function as follows.

\begin{definition} [Optimal Candidate Function]
    We call $f^* \in \mathcal{C}_d$ an optimal candidate function if for any $f\in \mathcal{C}_d$ and any integer $l\geq 0$, we have $f^*(l) \geq f(l)$.
\end{definition}


\begin{theorem} \label{theorem:optimal-candidate}
    The optimal candidate function in $\mathcal{C}_d$ can be computed in $O(d^2)$ time.
\end{theorem}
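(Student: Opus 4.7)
My plan is to show that the optimal candidate function $f^*$ is obtained by a greedy recursion that uses only the \emph{pure} form of Constraint~\eqref{equation:property-of-f} (those where $l_1 = l_2 = \cdots = l_m$), and then to verify that satisfying all pure constraints automatically forces satisfaction of every mixed constraint via AM--GM. Concretely, I would define $f^*(0) = 1$ and, for each $l \geq 0$,
\[
f^*(l+1) \;=\; f^*(l) \cdot \min_{m \in \{1, \ldots, d-1\}} \Bigl(1 + \tfrac{m f^*(l)}{d - m}\Bigr)^{1/m}.
\]
Evaluating this recursion at each of the $d$ levels requires trying the $d-1$ candidate values of $m$ and taking a minimum, for a total running time of $O(d^2)$.

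The first half of the argument is to show $f^* \in \mathcal{C}_d$. By construction, $f^*$ satisfies $(f^*(l+1)/f^*(l))^m \leq 1 + m f^*(l)/(d-m)$ for every $l$ and every $m \in \{1,\ldots,d-1\}$. Given an arbitrary sequence $0 \leq l_1 \leq \cdots \leq l_m$, I would apply the pure-$m$ bound at each $l_i$ and invoke the arithmetic--geometric mean inequality:
\[
\prod_{i=1}^{m} \frac{f^*(l_i+1)}{f^*(l_i)} \;\leq\; \prod_{i=1}^{m}\Bigl(1 + \tfrac{m f^*(l_i)}{d-m}\Bigr)^{1/m} \;\leq\; \frac{1}{m}\sum_{i=1}^{m}\Bigl(1 + \tfrac{m f^*(l_i)}{d-m}\Bigr) \;=\; 1 + \frac{\sum_{i=1}^{m} f^*(l_i)}{d-m},
\]
which is precisely the mixed constraint to be verified.

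For the optimality claim $f^*(l) \geq f(l)$ for every $f \in \mathcal{C}_d$, I would induct on $l$, with the base case $f(0) = 1 = f^*(0)$. In the inductive step, any $f \in \mathcal{C}_d$ must in particular satisfy the pure constraint at $l$ for every $m$, so $f(l+1) \leq f(l)\cdot(1 + m f(l)/(d-m))^{1/m}$. A one-line derivative calculation shows that the map $x \mapsto x\cdot(1 + mx/(d-m))^{1/m}$ is strictly increasing for $x \geq 0$, so combining with the hypothesis $f(l) \leq f^*(l)$ and then taking the minimum over $m$ yields $f(l+1) \leq f^*(l+1)$.

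The genuine obstacle is the feasibility step: one has to realize that pure constraints are not merely the ``most restrictive'' constraints to enforce but actually dominate all mixed constraints through AM--GM, and in particular that the correct pure constraint to plug in has the same $m$ as the mixed constraint being verified (so that the $d-m$ denominators align). A more direct attempt -- bounding a mixed constraint by recursively invoking the candidate property on its subsets of indices -- runs into trouble because for small $d-m$ the additive slack can be outweighed by the multiplicative growth of $f^*$. Once the AM--GM observation is in hand, feasibility, optimality, and the $O(d^2)$ complexity all follow in a routine manner.
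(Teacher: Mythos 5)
Your proposal is correct, and it defines exactly the same recursion as the paper's Definition~\ref{definition:optimal-candidate}; the optimality induction (pure constraint at level $l$, monotonicity of $x \mapsto x(1+\tfrac{mx}{d-m})^{1/m}$, induction hypothesis) and the $O(d^2)$ accounting coincide with the paper's proof of Theorem~\ref{theorem:optimal-candidate}. Where you genuinely diverge is the feasibility step, i.e.\ the analogue of Lemma~\ref{lemma:f*-feasible}: the paper proves that $f^*$ satisfies Constraint~\eqref{equation:property-of-f} by an induction on the level $z=l_m$ that rests on the separate technical Lemma~\ref{lemma:technical-min-f} (itself an induction with a monotonicity analysis of an auxiliary function $b(x)$), whereas you observe that for each $i$ the definition of $f^*$ already gives $\frac{f^*(l_i+1)}{f^*(l_i)} \leq \bigl(1+\tfrac{m f^*(l_i)}{d-m}\bigr)^{1/m}$ for the \emph{same} $m$ appearing in the mixed constraint, and then a single application of AM--GM to the $m$ positive numbers $1+\tfrac{m f^*(l_i)}{d-m}$ collapses the product into $1+\tfrac{\sum_i f^*(l_i)}{d-m}$. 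I checked this chain and it is airtight (the $m=0$ case is trivial, all terms are positive, and the arithmetic mean is exactly the left-hand side of the constraint), so your two-line argument replaces the paper's most technical lemma. What the paper's route buys beyond yours is an exact characterization: Lemma~\ref{lemma:technical-min-f} shows the minimum over all $(t,\mathcal{L})$ is \emph{attained} at the pure sequence, i.e.\ the pure constraints are exactly the binding ones, while AM--GM only yields the inequality direction --- but that inequality is all Theorem~\ref{theorem:optimal-candidate} needs, so your proof is a complete and substantially simpler alternative.
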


In the following section, we prove Theorem~\ref{theorem:optimal-candidate}.

\subsection{Computation of Optimal Candidate Function} \label{sec:properties-of-f}

In this section, we prove Theorem~\ref{theorem:optimal-candidate} by showing that the following function is the optimal candidate function in $\mathcal{C}_d$.

\begin{definition} \label{definition:optimal-candidate}
    Let $f^*$ be a function defined as follows: $f^*(0)=1$; for all $l\geq 1$, let
   \begin{equation*}
        f^*(l) := f^*(l-1) \cdot \min_{1\leq m< d} \left\{ \left(1+\frac{mf^*(l-1)}{d-m}\right)^{\frac{1}{m}} \right\}.
    \end{equation*}
\end{definition}

By definition, $f$ is an increasing function with $f(0)=1$.
Therefore to show that $f$ is a candidate function, it suffices to argue that it satisfies Constraint~\eqref{equation:property-of-f}.

The following technical lemma is crucial for showing the feasibility of $f^*$.

\begin{lemma} \label{lemma:technical-min-f}
    For every $m\in \{1,2,\ldots,d-1\}$, given any increasing sequence $1=a(0)<a(1)<\cdots<a(z)$ satisfying
    \begin{equation}
        1+\frac{m\cdot a(i)}{d-m} \geq \left( \frac{a(i+1)}{a(i)} \right)^m, \qquad \forall i\in\{0,1\ldots,z-1\}, \label{equation:condition-of-technical-lemma}
    \end{equation}
    we have
    \begin{equation}
        \min_{\substack{t\leq m,\\ \mathcal{L}\in [z-1]^{m-t}}} \left\{ \left( \left(1+\frac{t\cdot a(z)+\sum_{i=1}^{m-t}a(l_i)}{d-m}\right) \cdot \prod_{i=1}^{m-t} \frac{a(l_i)}{a(l_i+1)} \right)^{\frac{1}{t}} \right\}
        =
        \left(1+\frac{m\cdot a(z)}{d-m}\right)^{\frac{1}{m}},
        \label{equation:conclusion-of-technical-lemma}
    \end{equation}
    where $\mathcal{L}$ represents a sequence $(l_1,\ldots,l_{m-t})$ of integers with $0\leq l_1\leq \cdots \leq l_{m-t}\leq z-1$.
\end{lemma}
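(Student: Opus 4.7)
The plan is to show that the RHS of \eqref{equation:conclusion-of-technical-lemma} is a lower bound on every term appearing in the minimum on the LHS, with equality attained when $t=m$ (so $\mathcal{L}$ is empty). Define $\phi(l) := \bigl(1 + m\cdot a(l)/(d-m)\bigr)^{1/m}$, so the claimed minimum value is exactly $\phi(z)$. The hypothesis \eqref{equation:condition-of-technical-lemma} is equivalent to $a(i{+}1)/a(i) \leq \phi(i)$ for every $i \in \{0,1,\ldots,z-1\}$; this is the only way the assumption enters the argument.

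Fix any feasible $(t,\mathcal{L})$ with $1 \leq t \leq m$ and $\mathcal{L} = (l_1, \ldots, l_{m-t})$. Raising both sides of the target inequality to the $t$-th power and rearranging, it suffices to prove
\begin{equation*}
    \phi(z)^{t} \cdot \prod_{i=1}^{m-t} \frac{a(l_i+1)}{a(l_i)} \;\leq\; 1 + \frac{t\cdot a(z) + \sum_{i=1}^{m-t} a(l_i)}{d-m}.
\end{equation*}
Since $a(l_i+1)/a(l_i) \leq \phi(l_i)$ from the hypothesis, it is enough to establish the stronger
\begin{equation*}
    \phi(z)^{t} \cdot \prod_{i=1}^{m-t} \phi(l_i) \;\leq\; 1 + \frac{t\cdot a(z) + \sum_{i=1}^{m-t} a(l_i)}{d-m}.
\end{equation*}

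The punchline is that this last inequality is precisely AM--GM applied to a collection of $m$ positive numbers: take $t$ copies of $\phi(z)^m = 1 + m\cdot a(z)/(d-m)$ together with $\phi(l_i)^m = 1 + m\cdot a(l_i)/(d-m)$ for $i=1,\ldots,m-t$. Their geometric mean is exactly $\phi(z)^{t}\prod_{i}\phi(l_i)$, and pulling a $1$ out of each summand shows their arithmetic mean equals the RHS above. Equality in the full statement is witnessed by $t=m$ (empty $\mathcal{L}$): the inner expression is $1 + m\cdot a(z)/(d-m) = \phi(z)^m$, whose $1/m$-th root is $\phi(z)$.

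The main obstacle is conceptual rather than computational: spotting that the hypothesis \eqref{equation:condition-of-technical-lemma} should be rewritten as the one-sided bound $a(l_i+1)/a(l_i)\leq \phi(l_i)$, which points in exactly the direction needed to dominate the ``bad'' product on the LHS, and recognizing that the mixture of $t$ copies of $a(z)$ with $m-t$ scattered $a(l_i)$ terms is built to match an $m$-term AM--GM after the $\phi$-substitution. Once these two observations are in place, no induction on $z$ or exchange argument on $\mathcal{L}$ is required, and the bound is tight by construction at $t=m$.
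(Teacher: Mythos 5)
Your proof is correct, and it is genuinely different from (and simpler than) the paper's. You observe that the hypothesis \eqref{equation:condition-of-technical-lemma} is exactly $a(l+1)/a(l)\leq \phi(l)$ with $\phi(l)=\bigl(1+\tfrac{m\,a(l)}{d-m}\bigr)^{1/m}$, replace each ratio $a(l_i+1)/a(l_i)$ by $\phi(l_i)$, and then the required bound $\phi(z)^t\prod_i\phi(l_i)\leq 1+\tfrac{t\,a(z)+\sum_i a(l_i)}{d-m}$ is precisely AM--GM applied to the $m$ numbers consisting of $t$ copies of $\phi(z)^m$ and one copy of each $\phi(l_i)^m$: their geometric mean is the left side and their arithmetic mean telescopes to the right side. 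The paper instead proves the same inequality $A(z,t,\mathcal{L})\geq B(z)$ by induction on $z$: it freezes $t$ and $\mathcal{L}$, shows via a derivative computation that the map $x\mapsto (d-m+tx+\sum_i a(l_i))^m/(d-m+mx)^t$ is non-decreasing for the relevant $x$, replaces $a(z)$ by $a(z-1)$, absorbs the copies of $z-1$ in $\mathcal{L}$ using \eqref{equation:condition-of-technical-lemma} at level $z-1$, and lands on an instance of the induction hypothesis. Your argument buys a one-step, calculus-free proof that uses \eqref{equation:condition-of-technical-lemma} only through the one-sided ratio bound and does not even need monotonicity of the sequence $a$ (only positivity), so it is both shorter and marginally more general; the paper's exchange-and-induct structure mirrors the level-by-level induction used later in the feasibility proof of $f^*$ but offers no additional strength for this lemma. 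One cosmetic point: state explicitly that $t\geq 1$ (forced anyway by the exponent $1/t$) and that all quantities are positive since $d-m\geq 1$ and $a(\cdot)\geq 1$, so raising to the $t$-th power and taking $m$-th roots preserve the inequalities.
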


For continuity of presentation, we defer the proof of the above lemma to the end of this section.

\begin{lemma} \label{lemma:f*-feasible}
    Function $f^*$ is a candidate function.
\end{lemma}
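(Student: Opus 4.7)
The plan is to deduce the feasibility of $f^*$ as a short consequence of Lemma~\ref{lemma:technical-min-f}, applied to the sequence $a(i) = f^*(i)$. The combinatorial content is packaged into that lemma, so the present proof will amount to a three-step reduction: verify the hypothesis of the technical lemma, rewrite Constraint~\eqref{equation:property-of-f} in a form matching the right-hand side of~\eqref{equation:conclusion-of-technical-lemma}, then use the technical lemma and the definition of $f^*$ to close the gap.

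First, I would check that $a(i) := f^*(i)$ satisfies the hypothesis~\eqref{equation:condition-of-technical-lemma}. This is immediate from Definition~\ref{definition:optimal-candidate}: since $f^*(i+1)/f^*(i)$ is defined as the minimum over $1 \leq m < d$ of $(1 + m f^*(i)/(d-m))^{1/m}$, for every such $m$ we have $(f^*(i+1)/f^*(i))^m \leq 1 + m f^*(i)/(d-m)$, which is exactly the required inequality. The sequence is strictly increasing since, taking $m=1$, each ratio is at least $1 + f^*(i)/(d-1) > 1$.

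Next, fix an arbitrary admissible tuple $(m, l_1, \ldots, l_m)$ with $1 \leq m \leq d-1$ and $0 \leq l_1 \leq \cdots \leq l_m$, set $z := l_m$, and let $t \geq 1$ be the number of indices equal to $z$. Separating these $t$ terms from the remaining $m-t$ indices (which all lie in $\{0, \ldots, z-1\}$) and rearranging, Constraint~\eqref{equation:property-of-f} becomes equivalent to
\[
\frac{f^*(z+1)}{f^*(z)} \;\leq\; \left(\left(1 + \frac{t \cdot f^*(z) + \sum_{i:\, l_i < z} f^*(l_i)}{d-m}\right) \cdot \prod_{i:\, l_i < z} \frac{f^*(l_i)}{f^*(l_i+1)}\right)^{1/t}.
\]
The right-hand side is one particular term in the minimization on the left of~\eqref{equation:conclusion-of-technical-lemma}, so applying Lemma~\ref{lemma:technical-min-f} to the sequence $a(i) = f^*(i)$ bounds it below by $(1 + m \cdot f^*(z)/(d-m))^{1/m}$.

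Finally, by the very definition of $f^*(z+1)$ we have $f^*(z+1)/f^*(z) \leq (1 + m f^*(z)/(d-m))^{1/m}$ for every $1 \leq m < d$, which matches the lower bound just obtained and establishes the required inequality. The only genuine obstacle in this approach is Lemma~\ref{lemma:technical-min-f} itself, which presumably requires a careful analysis of the minimizer over $(t, \mathcal{L})$ using the hypothesis~\eqref{equation:condition-of-technical-lemma}; once that lemma is available, the feasibility of $f^*$ is essentially a bookkeeping exercise.
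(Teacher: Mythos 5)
Your proposal is correct and is essentially the paper's own argument: both proofs reduce Constraint~\eqref{equation:property-of-f} to Lemma~\ref{lemma:technical-min-f} applied to $a(i)=f^*(i)$, with the same decomposition by the multiplicity $t$ of the top index $z=l_m$; your only deviation is that you get Condition~\eqref{equation:condition-of-technical-lemma} directly from Definition~\ref{definition:optimal-candidate} (since $f^*(i+1)/f^*(i)$ is a minimum over $m$, each $m$ gives $(f^*(i+1)/f^*(i))^m\le 1+\tfrac{m f^*(i)}{d-m}$), whereas the paper routes this through an outer induction on the constraint level $z$ -- your shortcut is valid and slightly cleaner, since the level-$i$ constraint with $\mathcal{L}'=(i,\ldots,i)$ is indeed immediate from the definition. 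One small slip to fix: strict monotonicity of $f^*$ does not follow from ``taking $m=1$'' (a particular $m$ only upper-bounds the ratio, which is a minimum); instead note that every term $\bigl(1+\tfrac{m f^*(i)}{d-m}\bigr)^{1/m}$ strictly exceeds $1$, hence so does their minimum.
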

\begin{proof}
    By definition, $f^*(0)=1$ and $f^*$ is increasing. 
    Thus it suffices to argue that $f^*$ satisfies Constraint~\eqref{equation:property-of-f}.
    Consider an arbitrarily fixed $m \in \{1,\ldots,d-1\}$.
    Note that each Constraint~\eqref{equation:property-of-f} is associated with a sequence $\mathcal{L} = (l_1,\ldots,l_m)$ with $0\leq l_1\leq \cdots \leq l_m$.
    We call a constraint of level $z$ if the associated sequence has $l_m = z$.
    We prove by mathematical induction on $z$ that $f^*$ satisfies all constraints of level $z$ (with the fixed $m$).

    By definition of $f^*$ we have $f^*(1) \leq \left( 1+\frac{m}{d-m} \right)^{\frac{1}{m}}$.
    Reordering the inequality gives
    \begin{equation*}
        1+\frac{m}{d-m} \geq (f^*(1))^m,
    \end{equation*}
    which means that the base case of $z=0$ holds.

    Now suppose that $f^*$ satisfies all constraints of levels $0,1,\ldots,z-1$, and we consider constraints of level $z$, where $z\geq 1$.
    Consider any sequence $\mathcal{L} = (l_1,\ldots,l_m)$ with $0\leq l_1\leq \cdots \leq l_m = z$, and suppose $z$ appears $t\geq 1$ times in the sequence, i.e., we have $l_m=l_{m-1}=\cdots=l_{m-t+1}=z$ and $l_{m-t}\leq z-1$.
    Since $f^*$ satisfies all constraints of level at most $z-1$, for all $i\in\{0,1\ldots,z-1\}$, using $\mathcal{L}' = (i,i,\ldots,i)$ in Constraint~\eqref{equation:property-of-f},
    we have
    \begin{equation*}
        1+\frac{m\cdot f^*(i)}{d-m} \geq \left( \frac{f^*(i+1)}{f^*(i)} \right)^m.
    \end{equation*}

    Hence $f^*$ satisfies Condition~\eqref{equation:condition-of-technical-lemma}.
    By Lemma~\ref{lemma:technical-min-f} and definition of $f^*$, we have
    \begin{align*}
        f^*(z+1) &\leq f^*(z) \cdot \left(1+\frac{m\cdot f^*(z)}{d-m}\right)^{\frac{1}{m}} \\
        &\leq f^*(z) \cdot \left( \left(1+\frac{t\cdot f^*(z)+\sum_{i=1}^{m-t} f^*(l_i)}{d-m}\right) \cdot \prod_{i=1}^{m-t} \frac{f^*(l_i)}{f^*(l_i+1)} \right)^{\frac{1}{t}} \\
        &\leq f^*(z) \cdot \left( \left(1+\frac{\sum_{i=1}^{m} f^*(l_i)}{d-m}\right) \cdot \prod_{i=1}^{m-t} \frac{f^*(l_i)}{f^*(l_i+1)} \right)^{\frac{1}{t}}.
    \end{align*}
    
    Reordering the above inequality, we have
    \begin{equation*}
        1 + \frac{\sum_{i=1}^{m} f^*(l_i)}{d-m} \geq \left(\frac{f^*(z+1)}{f^*(z)}\right)^t \cdot \prod_{i=1}^{m-t} \frac{f^*(l_i+1)}{f^*(l_i)} = \prod_{i=1}^{m} \frac{f^*(l_i+1)}{f^*(l_i)}.
    \end{equation*}

    In other words, $f^*$ satisfies Constraint~\eqref{equation:property-of-f} with $m$ and $\mathcal{L}$.
    Since $m$ and $\mathcal{L}$ are fixed arbitrarily, we conclude that $f^*$ is a candidate function.
\end{proof}

Next, we show that $f$ is an optimal candidate function.

\begin{proofof}{Theorem~\ref{theorem:optimal-candidate}}
    We prove the theorem by showing that $f^*$ is an optimal candidate function.
    Specifically, we show that for any $l\in \mathbb{N}$ and any function $h\in \mathcal{C}_d$, we have $f^*(l) \geq h(l)$.
    We prove the statement by mathematical induction on $l = 0,1,\ldots$.
    
    When $l=0$, the statement holds trivially since $f^*(0) = h(0) = 1$.
    
    Next, we assume that $f^*(l-1) \geq h(l-1)$, and show that we have $f^*(l) \geq h(l)$, finishing the inductive step.
    Since $h$ is a candidate function, it must satisfy Constraint~\eqref{equation:property-of-f}.
    In particular, setting $l_1 = l_2 = \cdots l_m = l-1$ for every $m\in\{1,2,\ldots,d-1\}$, we have
    \begin{align*}
        h(l) & \leq h(l-1) \cdot \min_{1\leq m<d} \left\{ \left(1+\frac{m \cdot h(l-1)}{d-m}\right)^{\frac{1}{m}} \right\} \\
        & \leq f^*(l-1) \cdot \min_{1\leq m<d} \left\{ \left(1+\frac{m \cdot f^*(l-1)}{d-m}\right)^{\frac{1}{m}} \right\} = f^*(l),
    \end{align*}
    where the second inequality holds by induction hypothesis.
    
    Finally, by definition of $f^*$, given the value of $f^*(l-1)$ we can compute $f^*(l)$ in $O(d)$ time. Therefore the function values $f^*(0),f^*(1),\ldots,f^*(d)$ can be computed in $O(d^2)$ time.
\end{proofof}

\subsection{The Competitive Ratio}

By the above argument, for every $d\geq 2$, it suffices to compute the optimal candidate function (in $O(d^2)$ time), and use it in Algorithm~\ref{alg:discrete-ocs}, which guarantees a competitive ratio of $1-1/f^*(d)$ for the online bipartite matching on $(d,d)$-bounded graphs.
In the following, we use $f_d \in \mathcal{C}_d$ to denote the optimal candidate function for $(d,d)$-bounded graphs.
We list the function values for $d\in\{3,4,\ldots,10\}$ in the following table for an illustrative presentation.
The corresponding (lower bounds of) competitive ratios are shown below.
We also list the competitive ratio for higher values of $d$ as examples.
Note that for $d\leq 3300$, the competitive ratio of our algorithm is always larger than that of~\cite{conf/soda/CohenW18}.

\begin{table}[htb]
    \begin{center}
        \begin{tabular}{ c || c|c|c|c|c|c|c|c|c|c } 
            $d \backslash l$ & $1$ & $2$ & $3$ & $4$ & $5$ & $6$ & $7$ & $8$ & $9$ & $10$ \\
            \hline 
            \hline 
            $3$ & $1.5$ & $2.625$ & $6.0703$ & - & - & - & - & - & - & - \\ 
            $4$ & $1.3333$ & $1.9259$ & $3.1623$ & $6.4516$ & - & - & - & - & - & - \\ 
            $5$ & $1.25$ & $1.6406$ & $2.3135$ & $3.6516$ & $6.7673$ & - & - & - & - & - \\ 
            $6$ & $1.2$ & $1.488$ & $1.9308$ & $2.6764$ & $4.0926$ & $7.0412$ & - & - & - & - \\
            $7$ & $1.1666$ & $1.3935$ & $1.7171$ & $2.2086$ & $3.0216$ & $4.4831$ & $7.2863$ & - & - & - \\
            $8$ & $1.1428$ & $1.3294$ & $1.5819$ & $1.9394$ & $2.4767$ & $3.3464$ & $4.8307$ & $7.5050$ & - & - \\
            $9$ & $1.125$ & $1.2832$ & $1.4890$ & $1.7661$ & $2.1561$ & $2.7372$ & $3.6486$ & $5.1336$ & $7.6643$ & - \\
            $10$ & $1.1111$ & $1.2482$ & $1.4214$ & $1.6459$ & $1.9469$ & $2.3680$ & $2.9879$ & $3.9297$ & $5.4065$ & $7.8134$ \\
            \end{tabular}
        \end{center}
    \caption{Values of $f_d(l)$ for different $d$ and $l$.}
    \label{fig:values-for-f}
\end{table}

\begin{table}[htb]
    \begin{center}
        \begin{tabular}{|c|c|c|c|c|c|c|c|c|c} 
           \hline
           $d$ & $3$ & $4$ & $5$ & $6$ & $7$ & $8$ & $9$ & $10$ \\
           \hline
           ratio & $0.8352$ & $0.8450$ & $0.8522$ & $0.8579$ & $0.8627$ & $0.8667$ & $0.8695$ & $0.8720$  \\
           \hline
        \end{tabular}
    \end{center}
    \caption{Lower bound on the competitive ratio of OCS-based algorithm with $d\in \{3,4,\ldots,10\}$.}
    \label{fig:lower-bound-OCS-small-d}
\end{table}

\begin{table}[htb]
    \begin{center}
        \begin{tabular}{|c|c|c|c|c|c|c|c|c|c|c} 
           \hline
           $d$ & $20$ & $40$ & $80$ & $200$ & $400$ & $800$ & $2000$ & $4000$ & $8000$ \\
           \hline
           ratio & $0.8842$ & $0.8907$ & $0.8941$ & $0.8962$ & $0.8969$ & $0.8972$ & $0.8974$ & $0.8975$ & $0.8976$  \\
           \hline
        \end{tabular}
    \end{center}
    \caption{Lower bound on the competitive ratio of OCS-based algorithm with larger $d$.}
    \label{fig:lower-bound-OCS-large-d}
\end{table}

Since computing $f_d(d)$ when $d$ is very large is impractical, in the following, we establish a lower bound on the competitive ratio of Algorithm~\ref{alg:discrete-ocs} when $d\to \infty$.

\begin{theorem} \label{theorem:C.R.-of-OCS-large-d}
    Algorithm~\ref{alg:discrete-ocs} is at least $0.8976$-competitive for online bipartite matching on $(d,d)$-bounded graphs when $d\to \infty$.
\end{theorem}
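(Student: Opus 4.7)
The plan is to combine Corollary~\ref{corollary:C.R.-of-ocs} with Theorem~\ref{theorem:optimal-candidate}: since Algorithm~\ref{alg:discrete-ocs} equipped with the optimal candidate $f_d:=f^*$ achieves competitive ratio $1-1/f_d(d)$ on $(d,d)$-bounded graphs, it suffices to show $\liminf_{d\to\infty} f_d(d)\geq 1/(1-0.8976)\approx 9.766$. My approach is to identify the continuous-time limit of the recurrence in Definition~\ref{definition:optimal-candidate} and estimate its value at $t=1$.

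First I would pass to an ODE. Writing $y=f_d(l-1)$ and substituting $x=m/d$, one step of the recurrence gives
\[
\log\frac{f_d(l)}{f_d(l-1)}=\frac{1}{d}\min_{m\in\{1,\ldots,d-1\}}\frac{1}{x}\log\frac{1+x(y-1)}{1-x}.
\]
Setting $y_d(t):=f_d(\lfloor td\rfloor)$ and $G(y):=\inf_{x\in(0,1)}\tfrac{1}{x}\log\tfrac{1+x(y-1)}{1-x}$, a standard discrete-to-continuous comparison shows $y_d\to y^*$ on $[0,1]$, where $y^*$ solves $(\log y^*)'=G(y^*)$, $y^*(0)=1$. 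Second, I would compute $G$ explicitly: a Taylor expansion near $x=0$ yields $h_y(x)=y+\tfrac{y(2-y)}{2}x+O(x^2)$, so $G(y)=y$ for $y\leq 2$ (the infimum is only approached as $x\to 0^+$), while for $y>2$ the minimizer is interior; introducing $r:=(1+x(y-1))/(1-x)>1$, the first-order condition simplifies to $\tfrac{1}{1-x}-\tfrac{1}{1+x(y-1)}=\log r$, and combined with the pleasant algebraic identity $(r-1)^2=A(r)+P(r)$ where $A(r)=r\log r-r+1$ and $P(r)=r^2-r-r\log r$, this yields the clean parametric representation
\[
y(r)=\frac{(r-1)^2}{A(r)},\qquad G(y(r))=\frac{r(\log r)^2}{A(r)}.
\]
On $[0,1/2]$ the ODE reduces to $y^{*\prime}=(y^*)^2$, solved by $y^*(t)=1/(1-t)$, reaching $y^*(1/2)=2$; on $[1/2,1]$, differentiating $y(r)$ and dividing through by $y\cdot G$ gives
\[
\frac{dt}{dr}=\frac{(r+1)\log r-2(r-1)}{r(r-1)(\log r)^2},
\]
so $y^*(1)=y(r_*)$, where $r_*$ is the unique root of $\int_1^{r_*}\tfrac{(s+1)\log s-2(s-1)}{s(s-1)(\log s)^2}\,ds=\tfrac{1}{2}$.

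To finish, I would evaluate this integral numerically with certified error control (e.g., adaptive Simpson's rule plus a rigorous tail estimate using the asymptotic integrand $\sim 1/(s\log s)$), obtaining $r_*\approx 23.4$ and hence $y^*(1)\geq 9.766$, which yields $1-1/y^*(1)\geq 0.8976$ as claimed. The main obstacle is precisely this last step: the integrand admits no elementary antiderivative (the tail behaves like $1/(s\log s)$ and its antiderivative involves the logarithmic integral), so the precise constant $0.8976$ must be justified numerically rather than in closed form. A convenient alternative, consistent with Table~\ref{fig:lower-bound-OCS-large-d}, is to bypass the ODE entirely by computing $f_{d_0}(d_0)$ exactly via the $O(d_0^2)$ recurrence of Theorem~\ref{theorem:optimal-candidate} for a concrete large $d_0$ (say $d_0=8000$) and verifying $f_{d_0}(d_0)\geq 9.766$; combined with an auxiliary monotonicity argument for $d\mapsto 1-1/f_d(d)$ (or with the ODE identification of the limit itself), this certifies the bound.
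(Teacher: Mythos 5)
Your route is genuinely different from the paper's. The paper never passes to an ODE: it introduces an auxiliary sequence $g_d$ (Definition~\ref{definition:lower-bound-of-f*}) in which the discrete minimum over $m$ is relaxed to a minimum over $t\in(0,1)$, proves $f_d(l)\ge g_d(l)$ (Lemma~\ref{lemma:lower-bound-f-by-g}) and the doubling monotonicity $g_{2d}(2l)\ge g_d(l)$ (Lemma~\ref{lemma:property-of-g}), and thereby reduces the theorem to one certified numerical fact, $g_{10000}(10000)\ge 9.7657$ (Claim~\ref{claim:g_10000(10000)}, proved via log-convexity of $\alpha(x,t)$, a three-way split of $(0,1)$, and binary search with explicit derivative bounds). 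You instead identify the exact continuum limit of the recurrence in Definition~\ref{definition:optimal-candidate} and solve it in closed parametric form. Your algebra checks out: the per-step rewriting as $\frac{1}{d}\min_x \frac{1}{x}\log\frac{1+x(y-1)}{1-x}$, the first-order condition, the parametrization $y(r)=(r-1)^2/A(r)$ and $G(y(r))=r(\log r)^2/A(r)$, the phase $y^*(t)=1/(1-t)$ up to $t=1/2$, and the expression for $dt/dr$ are all correct, and the resulting $y^*(1)\approx 9.77$ is consistent with Table~\ref{fig:lower-bound-OCS-large-d}. What your approach buys is an explicit description of the limiting constant as the solution of a one-dimensional integral equation; what it costs is precisely the ODE-convergence argument that the paper's $g_d$/doubling device is designed to avoid.

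Two steps need real work before this is a proof. First, ``a standard discrete-to-continuous comparison shows $y_d\to y^*$'' is the crux, not a footnote: the two discretization effects have opposite signs (the minimum over the grid $m\in\{1,\dots,d-1\}$ exceeds the infimum over $x\in(0,1)$, pushing the iterates up, while the forward-Euler-type step with an increasing right-hand side lags the continuous flow, pushing them down), and each is only $O(1/d)$, so a soft limit claim does not by itself give $\liminf_{d\to\infty} f_d(d)\ge y^*(1)$; you need a quantitative two-sided error bound (e.g.\ a global $O(1/d)$ Euler estimate using that $y\mapsto yG(y)$ is Lipschitz on the relevant range, together with an $O(1/d)$ bound on the grid-versus-infimum gap), and you should also justify $G(y)=y$ for $y\le 2$ beyond the first Taylor term (e.g.\ any interior critical point forces $y=(r-1)^2/A(r)>2$ because $A(r)<(r-1)^2/2$ for $r>1$). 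Second, your fallback---computing $f_{d_0}(d_0)$ for a single large $d_0$ and invoking monotonicity of $d\mapsto f_d(d)$---assumes a monotonicity that is not established anywhere; the paper proves only the doubling property for the relaxed $g_d$, which is exactly why $g_d$ is introduced, and without some such lemma a bound at one $d_0$ says nothing about other values of $d$. Finally, note that both your main route and the paper's end in a numerical certification with a thin margin ($y^*(1)\approx 9.766$ against the required $1/(1-0.8976)\approx 9.7656$), so the ``certified error control'' you mention must actually be carried out at that precision, just as the paper does in the proof of Claim~\ref{claim:g_10000(10000)}.
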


We prove Theorem~\ref{theorem:C.R.-of-OCS-large-d} by defining a function $g_{d}(l)$ that always serves as a lower bound of $f_{d}(l)$ for any $l\in \mathbb{N}$ and giving a lower bound for $g_d(d)$.

\begin{definition} \label{definition:lower-bound-of-f*}
    For any fixed $d$, let $g_{d}(l)$ be defined as follows: $g_{d}(0)=1$; for all $l\geq 1$, let
    \begin{equation*}
        g_{d}(l) := g_{d}(l-1)\cdot \min_{t\in (0,1)} \left\{ \left( \frac{1+t\cdot g_{d}(l-1)}{1-t} \right)^\frac{1}{t\cdot d} \right\}.
    \end{equation*}
\end{definition}

\begin{lemma}\label{lemma:lower-bound-f-by-g}
 For all $d\geq 2$ and $l\geq 0$, we have $f_d(l)\geq g_d(l)$.  
\end{lemma}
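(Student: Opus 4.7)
The plan is a straightforward induction on $l$. The base case holds since $f_d(0) = g_d(0) = 1$.

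For the inductive step, I would assume $f_d(l-1) \geq g_d(l-1)$. The key observation is that $g_d$'s recurrence is the continuous relaxation of $f_d$'s: substituting $t = m/d$ transforms the $m$-th term $\left(1 + \frac{m c}{d-m}\right)^{1/m}$ of $f_d$'s discrete minimum into the corresponding expression inside $g_d$'s continuous minimum at the point $t = m/d$, with the exponent $1/m$ coinciding with $1/(td)$. Since $\{1/d, 2/d, \ldots, (d-1)/d\} \subset (0, 1)$ and minimizing over a subset yields a larger value, writing $h(c)$ for the continuous minimum appearing in $g_d$'s recurrence (viewed as a function of the previous value $c$), one obtains
\[
\min_{1 \leq m < d}\left(1 + \frac{m c}{d-m}\right)^{1/m} \;\geq\; h(c)
\]
for every admissible $c$.

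The second ingredient I would verify is the monotonicity of $c \mapsto c \cdot h(c)$ in $c \geq 1$. For each fixed $t \in (0, 1)$, both the factor $c$ and the bracketed quantity inside the relaxed expression are non-negative and non-decreasing in $c$, so their product is non-decreasing in $c$; taking a minimum over $t$ preserves this property. Combining the inductive hypothesis, the monotonicity, and the discrete-to-continuous inequality,
\[
f_d(l) \;=\; f_d(l-1) \cdot \min_{1 \leq m < d}\left(1 + \tfrac{m f_d(l-1)}{d-m}\right)^{1/m} \;\geq\; f_d(l-1) \cdot h(f_d(l-1)) \;\geq\; g_d(l-1) \cdot h(g_d(l-1)) \;=\; g_d(l),
\]
which closes the induction.

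I do not anticipate a real obstacle: the correspondence under $t = m/d$ is purely algebraic, and the monotonicity in $c$ is a routine check. The only delicate point is to rewrite both recurrences in a common form so that the substitution $t = m/d$ makes the $f_d$ term and the $g_d$ integrand literally identical term-by-term; once this is set up cleanly, both the ``minimum over a subset'' inequality and the monotone composition become unambiguous.
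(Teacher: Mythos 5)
Your proposal is correct and follows essentially the same route as the paper's proof: induction on $l$, the substitution $t=m/d$ identifying each discrete term $\bigl(1+\tfrac{mc}{d-m}\bigr)^{1/m}$ with the continuous expression at $t=m/d$, the ``minimum over a subset of $(0,1)$ is larger'' step, and monotonicity in the previous value. The only cosmetic difference is the order of operations — you relax to the continuous minimum before invoking the induction hypothesis, and you spell out the monotonicity in $c$ that the paper uses implicitly when replacing $f_d(l-1)$ by $g_d(l-1)$.
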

\begin{proof}
    Fix an arbitrary $d\geq 2$. We prove the statement by induction on $l$.
    Since $f_d(0) = g_d(0) = 1$, the base case of $l=0$ holds trivially.
    Now suppose that we have $f_d(l-1)\geq g_d(l-1)$. By definition of $f_d$ and $g_d$, we have
    \begin{align*}
        f_d(l) & = f_d(l-1)\cdot \min_{1\leq m< d} \left\{ \left(1+\frac{m\cdot f_d(l-1)}{d-m}\right)^{\frac{1}{m}} \right\} \\
        & \geq g_d(l-1)\cdot \min_{1\leq m< d} \left\{ \left(1+\frac{m\cdot g_d(l-1)}{d-m}\right)^{\frac{1}{m}} \right\} \\
        & = g_d(l-1)\cdot \min_{t \in \{\frac{1}{d}, \ldots, \frac{d-1}{d}\}} \left\{ \left(1+\frac{t\cdot g_d(l-1)}{1-t}\right)^{\frac{1}{t\cdot d}} \right\} \\
        & \geq g_d(l-1)\cdot \min_{t \in (0,1)} \left\{ \left(1+\frac{t\cdot g_d(l-1)}{1-t}\right)^{\frac{1}{t\cdot d}} \right\} = g_d(l),
    \end{align*}
    where the first inequality holds by induction hypothesis.
    By mathematical induction, the statement is true for all $l\geq 0$ and the lemma follows.
\end{proof}

In the following, we give a lower bound for $g_d(d)$ when $d$ is sufficiently large.

\begin{lemma} \label{lemma:property-of-g}
    For any fixed $d$ and $l$, we have $g_{2d}(2l) \geq g_{d}(l)$.
\end{lemma}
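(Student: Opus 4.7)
The plan is to recast the recursion in logarithmic form, extract a clean scaling identity relating $g_{2d}$ to $g_d$, and then run an elementary induction on $l$.

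Define $\phi_d(x) := \min_{t \in (0,1)} \tfrac{1}{td}\ln\tfrac{1+tx}{1-t}$, so that Definition~\ref{definition:lower-bound-of-f*} becomes $\ln g_d(l) = \ln g_d(l-1) + \phi_d(g_d(l-1))$ with $g_d(0) = 1$. The structural observation that drives the whole argument is the scaling identity
\[
\phi_{2d}(x) \;=\; \tfrac{1}{2}\,\phi_d(x),
\]
which is immediate from the definition: both minima range over the same $t \in (0,1)$, and the expression inside the $\min$ for $2d$ is exactly half of the one for $d$.

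Next I would collect a few elementary monotonicity facts. First, for each fixed $t \in (0,1)$ the quantity $\tfrac{1}{td}\ln\tfrac{1+tx}{1-t}$ is increasing in $x$, so $\phi_d$ is non-decreasing. Second, for $x \geq 1$ and every $t \in (0,1)$, the argument of the logarithm is at least $(1+t)/(1-t) > 1$, hence $\phi_d(x) \geq 0$; by a trivial induction this yields $g_d(l) \geq 1$ for all $l$ and shows that $l \mapsto g_d(l)$ is non-decreasing. Third, the map $y \mapsto \ln y + \phi_d(y)$ is non-decreasing on $[1,\infty)$, being a sum of non-decreasing maps.

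With these tools in hand, I would prove $g_{2d}(2l) \geq g_d(l)$ by induction on $l$. The base case $l=0$ is immediate since $g_{2d}(0) = g_d(0) = 1$. For the inductive step, assume $g_{2d}(2l-2) \geq g_d(l-1)$. Unrolling two steps of the $g_{2d}$ recursion and applying the scaling identity gives
\[
\ln g_{2d}(2l) \;=\; \ln g_{2d}(2l-2) + \tfrac{1}{2}\phi_d\!\bigl(g_{2d}(2l-2)\bigr) + \tfrac{1}{2}\phi_d\!\bigl(g_{2d}(2l-1)\bigr).
\]
Since $g_{2d}(2l-1) \geq g_{2d}(2l-2)$ and $\phi_d$ is non-decreasing, the right-hand side is at least $\ln g_{2d}(2l-2) + \phi_d(g_{2d}(2l-2))$. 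Applying the induction hypothesis together with the monotonicity of $y \mapsto \ln y + \phi_d(y)$ on $[1,\infty)$ gives
\[
\ln g_{2d}(2l) \;\geq\; \ln g_d(l-1) + \phi_d(g_d(l-1)) \;=\; \ln g_d(l),
\]
completing the induction after exponentiating.

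I do not anticipate a serious obstacle: the lemma collapses to the scaling identity $\phi_{2d} = \phi_d/2$ together with two elementary monotonicity checks. The only detail to watch is that all arguments remain in the regime $x \geq 1$ where $\phi_d \geq 0$ and is non-decreasing, which is guaranteed by the base-case bound $g_d(l) \geq 1$.
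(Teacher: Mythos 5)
Your proof is correct and follows essentially the same route as the paper's: unroll two steps of the $g_{2d}$ recursion, use $g_{2d}(2l-1)\geq g_{2d}(2l-2)$ and monotonicity to merge the two half-steps into one full $d$-step, then apply the induction hypothesis via monotonicity of the one-step update map. Writing it in logarithmic form with $\phi_{2d}=\tfrac{1}{2}\phi_d$ is only a cosmetic repackaging (and your explicit check that $\phi_d\geq 0$ on $[1,\infty)$ neatly justifies the monotonicity of $g_{2d}$ that the paper invokes without comment).
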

\begin{proof}
    Fix an arbitrary $d$.
    We prove the statement by mathematical induction on $l\geq 0$.
    When $l=0$, it holds trivially since $g_{2d}(0)=g_{d}(0)$.
    
    Next, we assume that $g_{2d}(2(l-1))\geq g_{d}(l-1)$, and show that $g_{2d}(2 l)\geq g_{d}(l)$.
    We have
    \begin{align*}
        & \ g_{2d}(2 l) = g_{2d}(2 l-1)\cdot \min_{t\in (0,1)} \left\{ \left( \frac{1+t\cdot g_{2d}(2 l-1)}{1-t} \right)^\frac{1}{t\cdot 2d} \right\}\\
        = & \ g_{2d}(2l-2))\cdot \left( \min_{t\in (0,1)} \left\{ \left(\frac{1+t\cdot g_{2d}(2l-2))}{1-t}\right)^\frac{1}{t\cdot 2 d} \right\} \right) \cdot \left(\min_{t\in (0,1)} \left\{ \left(\frac{1+t\cdot g_{2d}(2l-1)}{1-t}\right)^\frac{1}{t\cdot 2 d} \right\} \right) \\
        \geq & \ g_{2d}(2(l-1))\cdot \left( \min_{t\in (0,1)} \left\{ \left(\frac{1+t\cdot g_{2d}(2(l-1)))}{1-t}\right)^\frac{1}{t\cdot d} \right\} \right) \\
        \geq & \ g_{d}(l-1)\cdot \min_{t\in (0,1)} \left\{ \left( \frac{1+t\cdot g_{d}(l-1)}{1-t} \right)^\frac{1}{t\cdot d} \right\} = g_{d}(l),
    \end{align*}
    where the first inequality holds by monotonicity of function $g_{2d}$ and the second inequality holds due to the induction hypothesis.
    By mathematical induction, the statement holds for all $l\geq 0$ and the lemma follows.
\end{proof}

Equipped with the above lemma, to give a lower bound for $g_d(d)$ when $d\to \infty$, it suffices to lower bound $g_d(d)$ when $d$ is a large constant.
Hence the following claim (whose proof can be found in Appendix~\ref{sec:missing-proofs(OCS)}) implies Theorem~\ref{theorem:C.R.-of-OCS-large-d} because the competitive ratio is lower bounded by
\begin{equation*}
    1-\frac{1}{f_d(d)}\geq 1-\frac{1}{g_d(d)} \geq 1-\frac{1}{g_{10000}(10000)} \geq 1-\frac{1}{9.7657} \geq 0.8976.
\end{equation*}

\begin{claim} \label{claim:g_10000(10000)}
    We have $g_{10000}(10000) \geq 9.7657$.
\end{claim}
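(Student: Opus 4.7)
The plan is to verify the claim by directly iterating the recursion defining $g_d$ with $d = 10000$. Starting from $g_{10000}(0) = 1$, one computes $g_{10000}(l)$ for $l = 1, 2, \ldots, 10000$. Each step requires solving the one-dimensional minimization
\[
\min_{t \in (0,1)} \phi(t; v), \qquad \phi(t; v) := \frac{1}{t}\log\frac{1+tv}{1-t},
\]
where $v := g_{10000}(l-1)$. A quick analysis of $\phi$ shows it is smooth on $(0,1)$ with $\phi(0^+;v) = 1+v$ and $\phi(1^-;v) = +\infty$; setting $\partial_t \phi = 0$ yields the implicit characterization
\[
(1+t^{*}v)(1-t^{*})\log\frac{1+t^{*}v}{1-t^{*}} = t^{*}(1+v),
\]
which has a unique root $t^{*}(v) \in (0,1)$. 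I would locate $t^{*}$ to high accuracy by bisection, after which the update $g_{10000}(l) = g_{10000}(l-1)\cdot \exp(\phi(t^{*};v)/10000)$ completes the step.

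To make the resulting estimate rigorous, I would exploit the fact that $v \mapsto \min_{t} \phi(t;v)$ is non-decreasing (since $\phi(t;v)$ is increasing in $v$ for each fixed $t$), so that any provable lower bound on $g_{10000}(l-1)$ propagates through the recursion to a provable lower bound on $g_{10000}(l)$ whenever $\phi(t^{*};\cdot)$ is evaluated with a rounding convention that produces a lower bound. Carrying the whole $10000$-step recursion out in interval arithmetic (or, equivalently, in floating-point with a deliberately pessimistic downward rounding mode) therefore yields a certified lower bound on $g_{10000}(10000)$, which one finally compares against $9.7657$.

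The main obstacle is purely numerical: since $g_{10000}(10000)$ is itself only marginally above $9.7657$ (it lies near $9.77$, in line with the $d=8000$ entry in Table~\ref{fig:lower-bound-OCS-large-d}), the iteration must not lose too much to roundoff. In practice double precision is far more than enough, because each multiplicative factor is close to $1$ and the cumulative relative error over $10^4$ iterations is easily controlled below $10^{-10}$; to be fully rigorous, one would redo the computation in extended precision or with an interval-arithmetic library. No analytical identity beyond the monotonicity argument above seems necessary, and the claim reduces to a finite, verifiable numerical calculation.
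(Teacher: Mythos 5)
Your overall strategy coincides with the paper's: iterate the recursion for $g_{10000}$, certify a lower bound on the one-dimensional minimum at each step, and use monotonicity of $x\mapsto\min_t\alpha(x,t)$ so that lower bounds propagate. However, there is a genuine gap in how you certify the per-step minimum. Since $g_{d}(l)=g_{d}(l-1)\cdot\bigl(\min_{t\in(0,1)}\alpha(g_{d}(l-1),t)\bigr)^{1/d}$ and the claim requires a \emph{lower} bound on $g_{10000}(10000)$, you need at every step a certified \emph{lower} bound on the minimum over all of $(0,1)$. Evaluating $\phi$ (even in interval arithmetic or with downward rounding) at the bisection-located point $t^{*}$ only bounds the value \emph{at that point}; any point evaluation is an \emph{upper} bound on the minimum, so plugging it into the recursion produces an overestimate of the true $g$ and certifies nothing. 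To turn a point evaluation into a lower bound on the minimum you need exactly the three ingredients the paper supplies: (i) a proof that $\alpha(x,\cdot)$ is unimodal on the search interval --- the paper establishes log-convexity of $\alpha(x,t)$ in $t$ on $[10^{-6},1-10^{-6}]$ via the $\psi(\mu,\rho)$ analysis, and this is the technical heart of the argument; your assertion that the stationarity equation ``has a unique root'' by ``a quick analysis'' is precisely this unproved step; (ii) an explicit bound on $\partial\alpha/\partial t$ (the paper shows it is at most $10^{9}$) so that a bisection bracket of width $10^{-15}$ converts the value at the located point into a lower bound on the true interior minimum up to error $10^{-6}$; and (iii) a separate analytic treatment of the end regions $t\in(0,10^{-6})$ and $t\in(1-10^{-6},1)$, where the bisection sees nothing but the infimum could in principle be approached (the paper bounds $\alpha$ there by $(1+10^{-6}x)^{10^{6}}$ and $10^{6}$ respectively). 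Your discussion of rounding modes and cumulative floating-point error addresses a secondary issue (roundoff in evaluating a fixed expression) but not this certification problem; as written, the proposal does not yield a proof of $g_{10000}(10000)\geq 9.7657$ without adding the unimodality proof, the derivative-based error control, and the endpoint bounds.
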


\subsection{Proof of the Technical Lemma}

It remains to prove the technical lemma (Lemma~\ref{lemma:technical-min-f}).

\begin{proofof}{Lemma~\ref{lemma:technical-min-f}}
    Consider any fixed $m\in \{1,2,\ldots,d-1\}$.
    We prove the lemma by induction on $z$.
    The case of $z=0$ is trivial as we have $t=m$ and $\mathcal{L}=\emptyset$.
    Now suppose that the statement is true for $z-1$, and we consider the case of $z$.
    
    Note that the RHS of Equation~\eqref{equation:conclusion-of-technical-lemma} in Lemma~\ref{lemma:technical-min-f} can be obtained by setting $t=m$ (thus $\mathcal{L}$ is an empty sequence) in the LHS.
    Therefore, to prove the lemma it suffices to show that for $t < m$ and sequence $0\leq l_1\leq \cdots \leq l_{m-t}\leq z-1$, the term being minimized on the LHS is always at least the RHS.
    For ease of notation, we let
    \begin{equation*}
        A(z,t,\mathcal{L}) = \left( \left(1+\frac{t\cdot a(z)+\sum_{i=1}^{m-t}a(l_i)}{d-m}\right) \cdot \prod_{i=1}^{m-t} \frac{a(l_i)}{a(l_i+1)} \right)^{\frac{1}{t}} \text{ and }
        B(z) = \left(1+\frac{m\cdot a(z)}{d-m}\right)^{\frac{1}{m}}.
    \end{equation*}

    Note that $B(z) = A(z,m,\emptyset)$.
    Also note that by induction hypothesis (for $z-1$), we have 
    \begin{equation*}
        A(z-1, t,\mathcal{L}) \geq B(z-1), \qquad \forall t \leq m, \mathcal{L}\in [z-2]^{m-t}.
    \end{equation*}

    Fixed any $t<m$ and $0\leq l_1\leq \cdots \leq l_{m-t}\leq z-1$, we show that $A(z, t,\mathcal{L}) \geq B(z)$.
    Equivalently, we show that $\left( \frac{A(z,t,\mathcal{L})}{B(z)} \right)^{mt}\geq 1$.
    Note that (by definition of $A(z,t,\mathcal{L})$ and $B(z)$)
    \begin{equation*}
        \left( \frac{A(z,t,\mathcal{L})}{B(z)} \right)^{mt} = \left( \prod_{i=1}^{m-t} \frac{a(l_i)}{a(l_i+1)} \right)^{m}\frac{\left(d-m+t\cdot a(z)+\sum_{i=1}^{m-t}a(l_i) \right)^m}{\left(d-m+m\cdot a(z)\right)^t\cdot (d-m)^{m-t}}.
    \end{equation*}

    Define function (where $x > 0$)
    \begin{equation*}
        b(x) := \frac{\left(d-m + t\cdot x + \sum_{i=1}^{m-t} a(l_i) \right)^m}{(d-m + m\cdot x)^t}.
    \end{equation*}

    For ease of presentation, we use $\diamondsuit$ and $\heartsuit$ to denote $d-m + t\cdot x + \sum_{i=1}^{m-t} a(l_i)$ and $d-m + m\cdot x$, respectively.
    Note that we have $\diamondsuit > 0$ and $\heartsuit > 0$.
    Take the derivative of $b(x)$, we have
    \begin{align*}
        b'(x) &= \frac{(m \cdot \diamondsuit^{m-1} \cdot t) \cdot \heartsuit^t - \diamondsuit^m \cdot (t \cdot \heartsuit^{t-1} \cdot m)}{\heartsuit^{2t}} \\
        &= \frac{(mt \cdot \diamondsuit^{m-1} \cdot \heartsuit^{t-1}) \cdot (\heartsuit-\diamondsuit)}{\heartsuit^{2t}} = \frac{(mt \cdot \diamondsuit^{m-1} \cdot \heartsuit^{t-1})}{\heartsuit^{2t}} \cdot \left((m-t)x - \sum_{i=1}^{m-t} a(l_i)\right).
    \end{align*}
    
    Therefore when $(m-t)x \geq \sum_{i=1}^{m-t} a(l_i)$, $b(x)$ is non-decreasing.
    Since $a(z) > a(z-1) \geq a(l_i)$ for all $i\in \{1,2,\ldots,m-t\}$, by monotonicity of function $b$, we have
    \begin{align}
        \left( \frac{A(z,t,\mathcal{L})}{B(z)} \right)^{mt} & = \left( \prod_{i=1}^{m-t} \frac{a(l_i)}{a(l_i+1)} \right)^{m}\frac{ b(a(z))}{ (d-m)^{m-t}} \geq \left( \prod_{i=1}^{m-t} \frac{a(l_i)}{a(l_i+1)} \right)^{m}\frac{ b(a(z-1))}{ (d-m)^{m-t}} \nonumber \\
        & = \left( \prod_{i=1}^{m-t} \frac{a(l_i)}{a(l_i+1)} \right)^{m}\frac{\left(d-m+t\cdot a(z-1)+\sum_{i=1}^{m-t}a(l_i) \right)^m}{\left(d-m+m\cdot a(z-1)\right)^t\cdot (d-m)^{m-t}} \label{equation:RHS-of-(A/B)^mt}.
    \end{align}

    Recall that $0\leq l_1\leq \cdots \leq l_{m-t}\leq z-1$.
    Let $t'\geq 0$ be the number of appearances of $z-1$ in the sequence, i.e., we have $l_{m-t} = l_{m-t-1} = \cdots = l_{m-t-t'+1} = z-1$ and $l_{m-t-t'} \leq z-2$.
    Then we can write the RHS of~\eqref{equation:RHS-of-(A/B)^mt} as
    \begin{equation}
        \left( \prod_{i=1}^{m-t-t'} \frac{a(l_i)}{a(l_i+1)}\right)^{m} \left(\frac{a(z-1)}{a(z)} \right)^{mt'} \frac{\left(d-m+(t+t')\cdot a(z-1)+\sum_{i=1}^{m-t-t'}a(l_i) \right)^m}{\left(d-m+m\cdot a(z-1)\right)^t\cdot (d-m)^{m-t}}. \label{equation:new-RHS-of-(A/B)^mt}
    \end{equation}

    Recall that by Condition~\eqref{equation:condition-of-technical-lemma}, we have
    \begin{equation*}
        \left(\frac{a(z)}{a(z-1)} \right)^{m} \leq 1+\frac{m\cdot a(z-1)}{d-m} = \frac{d-m+m\cdot a(z-1)}{d-m}.
    \end{equation*}
    
    Plugging in this bound to Equation~\eqref{equation:new-RHS-of-(A/B)^mt}, we can lower bound $\left( \frac{A(z,t,\mathcal{L})}{B(z)} \right)^{mt}$ as
    \begin{equation*}
        \left( \frac{A(z,t,\mathcal{L})}{B(z)} \right)^{mt} \geq
        \left( \prod_{i=1}^{m-t-t'} \frac{a(l_i)}{a(l_i+1)}\right)^{m} \frac{\left(d-m+(t+t')\cdot a(z-1)+\sum_{i=1}^{m-t-t'}a(l_i) \right)^m}{\left(d-m+m\cdot a(z-1)\right)^{t+t'}\cdot (d-m)^{m-t-t'}}.
    \end{equation*}

    Note that in the above equation, we have $t+t'\leq m$ and $0\leq l_1\leq \cdots \leq l_{m-t-t'}\leq z-2$.
    Let $\mathcal{L}' = (l_1,\ldots,l_{m-t-t'})$ be the length-$(m-t-t')$ prefix of $\mathcal{L}$, the RHS of the above equation is exactly $\left( \frac{A(z-1,t+t',\mathcal{L'})}{B(z-1)} \right)^{mt}$.
    Therefore we have
    \begin{equation*}
        \left( \frac{A(z,t,\mathcal{L})}{B(z)} \right)^{mt} \geq \left( \frac{A(z-1,t+t',\mathcal{L'})}{B(z-1)} \right)^{mt} \geq 1,
    \end{equation*}
    where the last inequality holds by induction hypothesis.
\end{proofof}

\section{Conclusion}

In this paper, we consider the online bipartite matching problem on degree-bounded graphs and analyze the competitive ratios of \Ranking and OCS.
We establish an upper bound on the competitive ratio of \Ranking for every $d\geq 2$, and show that it converges to $0.816$ as $d$ grows.
We further show that the competitive ratio of an OCS-based algorithm is always at least $0.835$ (for all $d\geq 3$), and can be as good as $0.897$ when $d$ is large.
Our work shows that in the degree-bounded environment, OCS is a better algorithm than \Ranking.
Our work uncovers some interesting open problems.
First, while we show that the candidate function we use in the OCS-based algorithm is optimal (under the analysis framework of~\cite{conf/focs/GaoHHNYZ21}), the obtained ratio does not seem optimal (when $d\geq 3$).
It would be an interesting future work to design an OCS-based algorithm that achieves better competitive ratios, e.g., better than $0.875$ for all $d\geq 3$ and approaching $1$ when $d\to \infty$.
Another open problem is to investigate other assumptions weaker than the degree-bounded condition on the bipartite graphs that allow for a good competitive ratio, e.g., larger than $1-1/e$.

\section*{Acknowledgement}

Xiaowei Wu is funded by the Science and Technology Development Fund (FDCT), Macau SAR (file no. 0147/2024/RIA2, 0014/2022/AFJ, 0085/2022/A, 001/2024/SKL and CG2025-IOTSC).
The authors would like to thank Will Ma and Zhihao Tang for the helpful discussions on the results of the paper.


\newcommand{\etalchar}[1]{$^{#1}$}

\newpage

\appendix

\section{Limitations of the Marking Algorithm}
\label{sec:hardness-for-marking}

In this section, we briefly discuss the \textsc{Marking} algorithm proposed in \cite{conf/soda/CohenW18} and show its limitations when $d$ is small.
Unlike the algorithms we consider in this paper, the \textsc{Marking} algorithm will sometimes \emph{mark} an unmatched server and pretend that it is matched (for the purpose of controlling the variance of the matching outcomes, intuitively speaking).
A server will also be marked when it is matched.
Upon the arrival of each request $r$, \textsc{Marking} defines a weight $w(s)$ for every offline server $s$ depending on its current degree, where a larger degree implies a larger weight.
Then, the online request $r$ will be matched with probability
\begin{equation*}
    P_1(r) = \min\left\{ \frac{w(C)}{d},1 \right\}\cdot (1-\epsilon),
\end{equation*}
where an unmarked neighboring server will be chosen with a probability proportional to its weight. 
Here, $w(C)$ is the total weight of unmarked neighbors of $r$ upon $r$'s arrival, and $\epsilon$ is a parameter of the algorithm (which is set to be $1/\sqrt{d}$ in their paper).
Afterward, the algorithm marks each unmarked neighbor $s$ of $r$ with a certain probability $P_2(s)$ depending on $w(s)$.
By carefully defining the weight $w(s)$ and the marking probability $P_2(s)$ for every unmarked server $s$, it can be proved (see Lemma 4.10 of \cite{conf/soda/CohenW18}) that $P_1\approx 1$ and $P_2\approx 0$ when $d$ is sufficiently large.
Roughly speaking, for large $d$, we have $\epsilon = 1/\sqrt{d} \approx 0$ and it can be shown that $w(C)$ is highly concentrated around its expectation $d$.
Therefore the \textsc{Marking} algorithm performs very well when $d = \omega(1)$.
However, for small $d$ the performance of the algorithm could be sub-optimal.

Specifically, the \textsc{Marking} algorithm suffers a loss in competitive ratio due to two main reasons.
First, some of the servers are artificially marked and thus cannot be matched throughout the whole algorithm while being unmatched.
Second, the probability of each online request being matched can be strictly smaller than $1$, even when it has unmarked neighbors: this happens when $\epsilon>0$ or when $w(C)$ is (significantly) smaller than $d$.
Note that having $P_1(r) < 1$ is crucial for upper bounding the variance of the matching outcomes in their algorithm and analysis.

Consider a more concrete example shown in Figure~\ref{fig:hardness-for-Marking(d=2)} with $n$ servers $\{s_1,s_2,\ldots,s_n\}$ and $n$ requests $\{r_1,r_2,\ldots,r_n\}$, where the requests arrive in the order of $r_1,r_2,\ldots,r_n$.

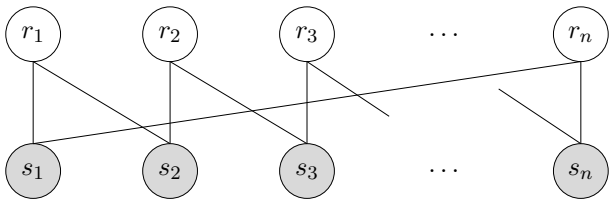
\begin{figure}[htb]
\begin{center}
\resizebox{0.5\textwidth}{!}{
\begin{tikzpicture}
\draw [fill = gray!30] (0,0) circle (0.4); \node at (0,0) {$s_1$};
\draw [fill = gray!30] (2,0) circle (0.4); \node at (2,0) {$s_2$};
\draw [fill = gray!30] (4,0) circle (0.4); \node at (4,0) {$s_3$};
\node at (6,0) {$\ldots$};
\draw [fill = gray!30] (8,0) circle (0.4); \node at (8,0) {$s_n$};
\draw (0,2) circle (0.4); \node at (0,2) {$r_1$};
\draw (2,2) circle (0.4); \node at (2,2) {$r_2$};
\draw (4,2) circle (0.4); \node at (4,2) {$r_3$};
\node at (6,2) {$\ldots$};
\draw (8,2) circle (0.4); \node at (8,2) {$r_n$};
\draw (0,0.4)--(0,1.6); 
\draw (2,0.4)--(0,1.6); 
\draw (2,0.4)--(2,1.6); 
\draw (4,0.4)--(2,1.6); 
\draw (4,0.4)--(4,1.6); 
\draw (5.2,0.8)--(4,1.6); 
\draw (8,0.4)--(6.8,1.2); 
\draw (8,0.4)--(8,1.6); 
\draw (0,0.4)--(8,1.6); 
\end{tikzpicture} }
\end{center}
\caption{Hard instance for \textsc{Marking} when $d=2$.
Requests are indexed by their arrival order, and each $r_i$ has two neighbors $s_i$ and $s_{i+1}$, where $s_{n+1} = s_1$.}
\label{fig:hardness-for-Marking(d=2)}
\end{figure}

For each $1<i<n$, it can be shown (see Lemma 4.3 of \cite{conf/soda/CohenW18}) that the probability of $s_i$ being unmarked when $r_i$ arrives is $(1+\epsilon)/2$.
Therefore, we have
\begin{equation*}
    P_1(r_i) =
    \begin{cases}
        \frac{1-\epsilon}{2}, & \text{w.p. } \frac{1-\epsilon}{2}\\
        1-\epsilon, & \text{w.p. } \frac{1+\epsilon}{2}.
    \end{cases}
\end{equation*}

As a consequence, $r_i$ gets matched with probability $(3-2\epsilon-\epsilon^2)/4 \leq 3/4$ for all $\epsilon\geq 0$ and the competitive ratio of \textsc{Marking} (for $d=2$) is at most $3/4$ (when $n\to \infty$).
Note that the $3/4$ ratio can be trivially achieved by the naive algorithm \textsc{Random} and is clearly sub-optimal.

\section{Missing Proofs in Section~\ref{sec:hardness-for-Ranking}} \label{sec:missing-proofs(Ranking)}

\begin{lemma} \label{lemma:stochastic-equiv}
    For any $(t_1,t_2,\ldots,t_{d-1})\in [0,1]^{d-1}$, we have
    \begin{equation} \label{equation:stochastic-equiv}
        \Pr[g_1<t_1,g_2<t_2,\cdots,g_{d-1}<t_{d-1}] = \Pr[\theta_1<t_1,\theta_2<t_2,\cdots,\theta_{d-1}<t_{d-1}].
    \end{equation}
\end{lemma}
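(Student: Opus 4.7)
The plan is to prove equality of joint distributions by computing the joint density of $g$ and matching it to the well-known joint density of $\theta$. Since $m_j \in [0,1]$ almost surely, the product definition $g_i = \prod_{j \geq i} m_j$ guarantees $0 \leq g_1 \leq g_2 \leq \cdots \leq g_{d-1} \leq 1$, so $g$ is supported on the same simplex $\Delta = \{(u_1,\ldots,u_{d-1}) : 0 \leq u_1 \leq \cdots \leq u_{d-1} \leq 1\}$ as the order-statistic vector $\theta$. Thus equality of densities on $\Delta$ immediately implies Equation~\eqref{equation:stochastic-equiv}.

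First I would recall that since $\theta$ is the vector of order statistics of $d-1$ i.i.d.\ $U[0,1]$ random variables, its joint density on $\Delta$ is identically $(d-1)!$. Then I would compute the joint density of $g$ via change of variables from the independent random variables $(m_1,\ldots,m_{d-1})$, which have joint density $\prod_{j=1}^{d-1} j \cdot m_j^{j-1}$ on $[0,1]^{d-1}$ (since $\Pr[m_j < t] = t^j$). The key observation is that the map $(m_1,\ldots,m_{d-1}) \mapsto (g_1,\ldots,g_{d-1})$ is invertible on the relevant domain with inverse $m_j = g_j / g_{j+1}$ for $j < d-1$ and $m_{d-1} = g_{d-1}$. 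The Jacobian matrix of this inverse is upper triangular (each $m_j$ depends only on $g_j$ and $g_{j+1}$), with determinant $\prod_{j=2}^{d-1} 1/g_j$ in absolute value.

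The final step would be to substitute $m_j = g_j/g_{j+1}$ into $\prod_{j=1}^{d-1} j \cdot m_j^{j-1}$, multiply by the Jacobian factor $\prod_{j=2}^{d-1} 1/g_j$, and verify that everything telescopes to the constant $(d-1)!$. Concretely, the factor $\prod_j j = (d-1)!$ is already correct, so the remaining task is to show that
\[
\prod_{j=1}^{d-1} m_j^{j-1} \cdot \prod_{j=2}^{d-1} \frac{1}{g_j} = 1.
\]
Writing out $\prod_{j=1}^{d-1} m_j^{j-1}$ as $\prod_{j=1}^{d-2} (g_j/g_{j+1})^{j-1} \cdot g_{d-1}^{d-2}$ and tracking the exponent of each $g_k$ will reveal that the net exponent of $g_k$ is exactly $+1$ for $k = 2, \ldots, d-1$, which cancels the Jacobian factor perfectly.

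The main obstacle is the bookkeeping in this exponent computation: one must correctly track the contribution of $g_k$ appearing in the numerator of $m_{k-1}^{k-2}$ with exponent $k-2$ and in the denominator of $m_k^{k-1}$ with exponent $-(k-1)$ (for $k \leq d-2$), together with the boundary term $g_{d-1}^{d-2}$ from $m_{d-1}$ and the terms with $j = 1$ that contribute $m_1^0 = 1$. Once the telescoping is correctly carried out, the proof is immediate. An alternative route, which I would fall back on if the direct calculation became cumbersome, is to induct on $d$: condition on $\theta_{d-1}$ (which has density $(d-1)t^{d-2}$, matching $m_{d-1}$), observe that conditional on $\theta_{d-1} = s$ the remaining coordinates are $s$ times the order statistics of $d-2$ uniforms on $[0,1]$, and apply the inductive hypothesis to identify them with $m_{d-1} \cdot \prod_{j=i}^{d-2} m_j'$.
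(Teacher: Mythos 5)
Your proposal is correct, but it takes a different route from the paper. The paper proves the lemma by induction on $d$: it conditions on the largest coordinate, using that $g_{d-1}=m_{d-1}$ and $\theta_{d-1}$ share the law $\Pr[\cdot<t]=t^{d-1}$, and that conditioned on $\theta_{d-1}=x$ the remaining coordinates $(\theta_1,\ldots,\theta_{d-2})$ are distributed as $x$ times the order statistics of $d-2$ uniforms on $[0,1]$, so the induction hypothesis closes the argument; this is exactly the fallback you describe at the end. Your primary route instead computes the joint density of $g$ by the change of variables $m_j=g_j/g_{j+1}$ ($j<d-1$), $m_{d-1}=g_{d-1}$, whose Jacobian is triangular with determinant $\prod_{k=2}^{d-1}1/g_k$, and verifies that the transformed density telescopes to the constant $(d-1)!$ on the simplex, matching the order-statistics density. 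Both arguments are sound; the density computation is self-contained and avoids the paper's slightly informal conditioning on zero-probability events (``$\Pr[\theta_{d-1}=x]$''), while the paper's induction avoids all Jacobian bookkeeping. One small slip in your sketch: the roles of numerator and denominator are stated backwards. Since $m_k=g_k/g_{k+1}$, the variable $g_k$ appears in the \emph{numerator} of $m_k^{\,k-1}$ (exponent $k-1$) and in the \emph{denominator} of $m_{k-1}^{\,k-2}$ (exponent $-(k-2)$), for $2\le k\le d-2$, with the boundary term $g_{d-1}^{\,d-2}$ handled separately; carried out this way the net exponent of each $g_k$, $2\le k\le d-1$, is indeed $+1$ and cancels the Jacobian, so your stated conclusion is correct even though the intermediate attribution is swapped.
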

\begin{proof}
    We prove the lemma by mathematical induction on $d\geq 2$.
    When $d=2$, the statement holds trivially since $g_1$ and $\theta_1$ are i.i.d..
    Next, we assume that the statement holds for $d-1$ and consider the case of $d$.
    More specifically, for any $(x_1,x_2\ldots,x_{d-2})\in [0,1]^{d-2}$, we have
    \begin{equation*}
        \Pr\left[\prod_{j=1}^{d-2} m_j<x_1, \prod_{j=2}^{d-2} m_j<x_1, \ldots, m_{d-2}<x_{d-2}\right] = \Pr[\xi_1<x_1,\xi_2<x_2,\ldots,\xi_{d-2}<x_{d-2}],
    \end{equation*}
    where $\xi_1\leq \xi_2\leq \ldots\leq \xi_{d-2}$ are $d-2$ sorted random variables drawn from $U[0,1]$ independently.
    
    Fix any $t=(t_1,\ldots,t_{d-1})\in [0,1]^{d-1}$.
    For the LHS of Equation~\eqref{equation:stochastic-equiv}, we have
    \begin{align*}
        \Pr[g_1<t_1,\ldots,g_{d-1}<t_{d-1}] & = \int_{0}^{t_{d-1}} \left(\Pr[g_1<t_1,\ldots,g_{d-2}<t_{d-2} \mid g_{d-1}=x] \cdot \Pr[g_{d-1}=x]\right) \mathrm{d}x \\
        & = \int_{0}^{t_{d-1}} \left(\Pr\left[\prod_{j=1}^{d-2} m_j\cdot x<t_1,\ldots,m_{d-2}\cdot x<t_{d-2}\right] \cdot \Pr[g_{d-1}=x]\right) \mathrm{d}x \\
        & = \int_{0}^{t_{d-1}} \left(\Pr[\xi_1\cdot x<t_1,\ldots,\xi_{d-2}\cdot x<t_{d-2}] \cdot \Pr[\theta_{d-1}=x]\right) \mathrm{d}x,
    \end{align*}
    where the second equality holds due to Definition~\ref{definition:m-and-g}, and the last equality holds due to the induction hypothesis.
    For the RHS of Equation~\eqref{equation:stochastic-equiv}, we have
    \begin{align*}
        \Pr[\theta_1<t_1,\ldots,\theta_{d-1}<t_{d-1}] & = \int_{0}^{t_{d-1}} \left(\Pr[\theta_1<t_1,\ldots,\theta_{d-2}<t_{d-2} \mid \theta_{d-1}=x] \cdot \Pr[\theta_{d-1}=x]\right) \mathrm{d}x.
    \end{align*}
    
    Observe that $\Pr[\theta_1<t_1,\ldots,\theta_{d-2}<t_{d-2} \mid \theta_{d-1}=x] = \Pr[\xi_1\cdot x<t_1,\ldots,\xi_{d-2}\cdot x<t_{d-2}]$ because conditioned on $\theta_{d-1} = x$, we can equivalently think of $(\theta_1, \theta_2,\ldots, \theta_{d-2})$ as $d-2$ sorted random variables drawn from $U[0,x]$.
    Hence we finish the proof.
\end{proof}

\begin{proofof}{Lemma~\ref{lemma:hardness-for-Ranking(d=2)}}
    Consider the hard instance given in Figure~\ref{fig:hardness-for-Ranking(d=2)} (left) where we assume requests arrive in the order of $r_1,r_2,\ldots$.
    Obviously there exists a perfect matching in the graph, and we have $\opt = 8$.
    Moreover, since requests $\{r_i\}_{i \in \{1,2,3,4,5,6\}}$ will always get matched by Ranking, it suffices to consider the matching status of $r_7$ and $r_8$.
    
    We focus on request $r_7$ first and consider the matching status of its two neighbors $s_3$ and $s_7$ when $r_7$ arrives.
    We have
    \begin{align*}
        \Pr[s_3 \text{ is unmatched when } r_7 \text{ arrives}] & = \Pr[r_1 \to s_2, r_2 \to s_1] \\
        & = \Pr[y_{s_2} < y_{s_1} < y_{s_3}] = 1/6.
    \end{align*}

    Symmetrically we have
    \begin{equation*}
        \Pr[s_7 \text{ is unmatched when } r_7 \text{ arrives}] = 1/6.
    \end{equation*}

    Moreover, the events of $s_3$ and $s_7$ being matched are independent.
    Thus we have
    \begin{equation*}
        \Pr[r_7 \text{ gets matched}] = 1 - \Pr[\text{both } s_3, s_7 \text{ are matched when } r_7 \text{ arrives}] = 1 - (5/6)^2 = 11/36.
    \end{equation*}

    Symmetrically, we can show that the probability that $r_8$ gets matched equals $11/36$ as well (however, the events of $r_7$ and $r_8$ being matched are not independent).
    Therefore, we have
    \begin{equation*}
        \frac{\E[\alg]}{\opt} = \frac{6 + 2 \cdot (11/36)}{8} = 119/144 \approx 0.8264
    \end{equation*}
    in this instance, and we finish the proof.
\end{proofof}

\section{Missing Proofs in Section~\ref{sec:Discrete-OCS}} \label{sec:missing-proofs(OCS)}

\begin{proofof}{Lemma~\ref{lemma:per-server-guarantee}}
    For any $A \subseteq S$ and any $r \in R$, the probability that all the servers in $A$ are unmatched at the end of round $r$ is
    \begin{equation*}
        \Pr[U_{A}^{r}=1] = \E_r[U_{A}^{r}] = \E_{r-1}\left[ U_{A}^{r-1} \cdot \frac{\sum_{s\in N(r) \setminus A} f(l_s^{r-1}) \cdot U_s^{r-1}}{\sum_{s\in N(r)} f(l_s^{r-1}) \cdot U_s^{r-1}} \right],
    \end{equation*}
    where the first expectation is taken over the randomness of the first $r$ rounds while the second is taken over the first $r-1$ rounds. 
    Multiply both numerator and denominator by $U_{A}^{r-1}$, we have
    \begin{align*}
        \Pr[U_{A}^{r}=1] & = \E_{r-1}\left[ U_{A}^{r-1} \cdot \frac{\sum_{s\in N(r)\setminus A} f(l_s^{r-1}) \cdot U_s^{r-1}}{\sum_{s\in N(r)\setminus A} f(l_s^{r-1}) \cdot U_s^{r-1} + \sum_{s\in A\cap N(r)} f(l_s^{r-1}) \cdot U_s^{r-1}} \right] \\
        & = \E_{r-1}\left[ U_{A}^{r-1} \cdot \frac{\sum_{s\in N(r)\setminus A} f(l_s^{r-1}) \cdot U_{A\cup \{s\}}^{r-1}}{\sum_{s\in N(r)\setminus A} f(l_s^{r-1}) \cdot U_{A\cup \{s\}}^{r-1} + \sum_{s\in A\cap N(r)} f(l_s^{r-1}) \cdot U_{A}^{r-1}} \right],
    \end{align*}
    due to the fact that $U_{s}^{r-1} \in \{0,1\}$ for any $s\in S$.
    Notice that $g(x,y) = \frac{xy}{x+y}$ is concave, by Jensen's Inequality, we have
    \begin{align*}
        \Pr[U_{A}^{r}=1] &\leq \frac{\E_{r-1}[U_{A}^{r-1}] \cdot \sum_{s\in N(r)\setminus A} f(l_s^{r-1}) \cdot \E_{r-1}\left[U_{A\cup \{s\}}^{r-1}\right]}{\sum_{s\in N(r)\setminus A} f(l_s^{r-1}) \cdot \E_{r-1}\left[U_{A\cup \{s\}}^{r-1}\right] + \sum_{s\in A\cap N(r)} f(l_s^{r-1}) \cdot \E_{r-1}\left[U_{A}^{r-1}\right]} \\
        &= \frac{\Pr[U_{A}^{r-1}=1] \cdot \sum_{s\in N(r)\setminus A} f(l_s^{r-1}) \cdot \Pr\left[U_{A\cup \{s\}}^{r-1}=1\right]}{\sum_{s\in N(r)\setminus A} f(l_s^{r-1}) \cdot \Pr\left[U_{A\cup \{s\}}^{r-1}=1\right] + \sum_{s\in A\cap N(r)} f(l_s^{r-1}) \cdot \Pr\left[U_{A}^{r-1}=1\right]}.
    \end{align*}

    Now we prove the lemma by mathematical induction on $r$.
    
    When $r=0$, the statement holds trivially since $f(0)=1$ for any $s\in A$.
    Next, we assume that the statement holds for $r-1$.
    By induction hypothesis and monotonicity of $g(x,y) = \frac{xy}{x+y}$, we have
    \begin{equation*}
        \Pr[U_{A}^{r}=1] \leq \frac{1}{\prod_{s\in A} f(l_s^{r-1})} \cdot \frac{|N(r)\setminus A|}{|N(r)\setminus A| + \sum_{s\in A\cap N(r)}  f(l_s^{r-1})}.
    \end{equation*}
    Suppose that $|A\cap N(r)| = m \leq d(r)$, we can rewrite the second item of RHS as
    \begin{equation*}
        \frac{|N(r)\setminus A|}{|N(r)\setminus A| + \sum_{s\in A\cap N(r)} \cdot f(l_s^{r-1})} = \frac{d(r)-m}{d(r)-m+\sum_{s\in A\cap N(r)}f(l_s^{r-1})}.
    \end{equation*}
    Since $d(r) \leq d$, we have
    \begin{equation*}
        \frac{|N(r)\setminus A|}{|N(r)\setminus A| + \sum_{s\in A\cap N(r)} f(l_s^{r-1})} \leq \frac{d-m}{d-m+\sum_{s\in A\cap N(r)}f(l_s^{r-1})} \leq \prod_{s\in A\cap N(r)} \frac{f(l_s^{r-1})}{f(l_s^{r-1}+1)},
    \end{equation*}
    where the last inequality holds due to Equation~\eqref{equation:property-of-f}.
    Notice that $f(l_s^r) = f(l_s^{r-1}+1)$ if and only if $s\in N(r)$, otherwise $f(l_s^r) = f(l_s^{r-1})$, therefore it holds that
    \begin{equation*}
        \Pr[U_{A}^{r}=1] \leq \frac{1}{\prod_{s\in A} f(l_s^{r-1})} \cdot \prod_{s\in A} \frac{f(l_s^{r-1})}{f(l_s^r)} = 1/\prod_{s\in A} f(l_s^r),
    \end{equation*}
    and we finish the proof.
\end{proofof}

\begin{proofof}{Claim~\ref{claim:g_10000(10000)}}
    We define the following function, where $x\geq 1$ and $t\in (0,1)$:
    \begin{equation*}
        \alpha(x,t) := \left( \frac{1+xt}{1-t} \right)^\frac{1}{t}.
    \end{equation*}
    
    By definition, we have
    \begin{equation*}
        g_{d}(l)=g_{d}(l-1)\cdot \left( \min_{t\in (0,1)} \left\{ \alpha(g_{d}(l-1),t) \right\} \right)^{\frac{1}{d}}.
    \end{equation*}

    Therefore, in the following, we focus on lower bounding $\min_{t\in (0,1)} \{\alpha(x,t)\}$ for any given $x\in [1,10)$ (recall that $g_d$ is an increasing function with $g_d(0)=1$, and our goal is to show that $g_d(d) \geq 9.7657$, where $d=10000$).
    We first split the interval $(0,1)$ into three parts: $(0,10^{-6})$, $[10^{-6},1-10^{-6}]$, and $(1-10^{-6},1)$, and then provide a lower bound on $\alpha(x,t)$ for each of these intervals.
    When $t \in (0,10^{-6})$, we have
    \begin{equation*}
        \alpha(x, t) = \left( \frac{1+xt}{1-t} \right)^{\frac{1}{t}} \geq (1+xt)^{\frac{1}{t}} \geq (1+10^{-6}\cdot x)^{10^6},
    \end{equation*}
     where the second inequality holds since $\log((1+xt)^{1/t})$ is decreasing in $t$:
     its derivative $(\frac{xt}{1+xt}-\log(1+xt))/t^2$ is negative because the numerator is decreasing in $t$ and equals $0$ when $t=0$.
          
    %
    
    When $t\in (1-10^{-6},1)$, we have 
    \begin{equation*}
        \alpha(x,t) = \left( \frac{1+xt}{1-t} \right)^{\frac{1}{t}} \geq \frac{1+xt}{1-t} \geq 10^{6} \geq (1+10^{-6}\cdot x)^{10^6}.
    \end{equation*}

    Next, we consider the case when $t\in [10^{-6},1-10^{-6}]$.
    We show that there exists only one local minimum of $\alpha(x,t)$ by showing the log-convexity of $\alpha(x,t)$ (and therefore $\alpha(x,t)$ is convex in $t$).
    
    Taking the second partial derivative of $\log(\alpha(x,t))$ on $t$, we have
    \begin{equation*}
        \frac{\partial^2 \log(\alpha(x,t))}{\partial t^2} = \frac{(4xt^2-3(x-1)t-2)\cdot (x+1)\cdot t + 2(1-t)^2\cdot (xt+1)^2\cdot \log\left(\frac{xt+1}{1-t}\right)}{(t-1)^2\cdot (xt+1)^2\cdot t^3}.
    \end{equation*}
    
    Next we let $\mu = 1-t\in (0,1)$, $\rho = 1+xt \in (1, 11)$, and rewrite the numerator of the RHS as
    \begin{equation*}
        (-4\mu\rho + \mu + \rho) (\rho-\mu) + 2\mu^2\rho^2 \log (\rho/\mu) := \psi(\mu,\rho).
    \end{equation*}

    Taking partial derivative and second partial derivative of $\psi(\mu,\rho)$ over $\rho$, we have
    \begin{align*}
        \frac{\partial \psi (\mu,\rho)}{\partial \rho} = 2(\mu^2(\rho+2) + 2\mu^2\rho \log (\rho/\mu) - 4\mu\rho + \rho),
    \end{align*}
    and
    \begin{align*}
       \frac{\partial^2 \psi(\mu,\rho)}{\partial \rho^2} = 4\mu^2\log (\rho/\mu) + 6\mu^2-8\mu+2 
       \geq \ 4\mu^2\log \left( \frac{1}{\mu} \right) + 6\mu^2 - 8\mu + 2 := \kappa(\mu).
    \end{align*}

    Notice that $\kappa'(\mu) = 8 (\mu+\mu\log (1/\mu) - 1)$ and
    $\kappa''(\mu) = 8\log (1/\mu) > 0$ since $\mu\in(0,1)$.
    Therefore, we have $\kappa'(\mu)  < \kappa'(1) = 0$.
    Hence, $\kappa(\mu)$ is decreasing on $\mu \in (0,1)$.
    We have $\kappa(\mu) > \kappa(1) = 0$, which implies
    \begin{align*}
        \frac{\partial^2 \psi(\mu,\rho)}{\partial \rho^2} \geq \frac{\partial^2 \psi(\mu,\rho)}{\partial \rho^2} \bigg|_{\rho=1} = 0.
    \end{align*}
    Hence $\frac{\partial \psi(\mu,\rho)}{\partial \rho}$ is increasing on $\rho \in (1,11)$, and we have
    \begin{align*}
        \frac{\partial \psi(\mu,\rho)}{\partial \rho} \geq \frac{\partial \psi(\mu,\rho)}{\partial \rho}\bigg|_{\rho=1} = 0,
    \end{align*}
    which indicates that $\psi(\mu,\rho)$ is increasing on $\rho \in (1,11)$.
    Therefore, we have
    \begin{align*}
        \psi(\mu,\rho) \geq \psi(\mu,1) = (3\mu-1)(\mu-1) +2\mu^2 \log (1/\mu) \geq 0,
    \end{align*}
    which yields the log-convexity of $\alpha(x,t)$.
    
    
    The above argument enables us to use binary search to find where $\argmin_t \alpha(x,t)$ is located to any precision.
    We use the right side point as the result of our binary search and let $\delta(x)$ be the minimum of $\alpha(x,t)$ in $t\in [10^{-6},1-10^{-6}]$ we compute using binary search with a precision of $10^{-15}$.
    In the following, we show that the error between $\delta(x)$ and the true minimum is at most $10^{-6}$.
    We first upper bound the derivative of $\alpha(x,t)$ on $t$ as follows
    \begin{align*}
        \frac{\partial \alpha(x,t)}{\partial t} &\leq \frac{\partial \alpha(x,t)}{\partial t}\vert_{t=1-10^{-6}} = \frac{\left( \frac{xt+1}{1-t}\right)^{\frac{1}{t}}\cdot \left((1-t)(xt+1)\log \left( \frac{xt+1}{1-t} \right) -(x+1)t\right)}{(1-t)t^2(xt+1)}\vert_{t=1-10^{-6}} \\
        &\leq \frac{\left( \frac{xt+1}{1-t}\right)^{\frac{1}{t}}\cdot (1-t)(xt+1)\log \left( \frac{xt+1}{1-t} \right)}{(1-t)t^2(xt+1)}\vert_{t=1-10^{-6}} = \frac{\left( \frac{xt+1}{1-t}\right)^{\frac{1}{t}}\cdot \log \left( \frac{xt+1}{1-t} \right)}{t^2}\vert_{t=1-10^{-6}} \leq 10^{9},
    \end{align*}
    where the first inequality holds due to the convexity of $\alpha(x,t)$ while the last holds since $x\leq 10$.
    Hence when the binary searching stops, we can guarantee that the true minimum in the final searching interval (with size $\leq 10^{-15}$) is at least $\delta(x) - 10^{-6}$. 

    In conclusion, for any $x\in [1,10)$ we have
    \begin{equation*}
        \min_{t\in (0,1)} \{\alpha(x,t)\} \geq \min\{(1+10^{-6}\cdot x)^{10^6},\delta(x)-10^{-6}\}.
    \end{equation*}
    
    By numerical computation, it can be verified that $g_{10000}(10000)$ is at least $9.7657$.
\end{proofof}

\section{Hardness for Randomized Algorithms} \label{sec:problem-hardness}

In this section, we show hardness results for randomized algorithms for online bipartite matching on $d$-regular graphs (and thus also for $(d,d)$-bounded graphs).
Recall that Cohen and Wajc~\cite{conf/soda/CohenW18} showed that no randomized algorithm can achieve a competitive ratio better than $1-1/ \sqrt{8\pi d} = 1 - O(1/ \sqrt{d})$ when $d\to \infty$.
In the following, we revisit the hard instance of~\cite{conf/soda/CohenW18} and show that the same (class of) hard instance can be used to show an upper bound for all (randomized) algorithms for every $d\geq 2$.

\begin{theorem}\label{theorem:problem-hardness}
For the online bipartite matching problem on $d$-regular graphs, where $d\geq 2$, no randomized algorithm can achieve a competitive ratio better than $\eta(d)$, where 
    \begin{equation*}
        \eta(d) = 1 - \sum_{i=\lceil d/2 \rceil + 1}^{d} \left( \frac{i-\lceil d/2 \rceil}{d} \binom{d}{i} \left(\frac{\lfloor d/2 \rfloor}{d}\right)^{d-i} \left(1-\frac{\lfloor d/2 \rfloor}{d}\right)^i \right).
    \end{equation*}
\end{theorem}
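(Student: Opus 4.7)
The plan is to use Yao's minimax principle together with a finite-$d$ sharpening of Cohen and Wajc's hard instance~\cite{conf/soda/CohenW18}, whose original analysis yields only the asymptotic bound $1-1/\sqrt{8\pi d}$. A quick sanity check confirms that this is the right route: applying the central limit theorem to the binomial in the definition of $\eta(d)$ recovers $1-1/\sqrt{8\pi d}$ as $d\to\infty$, so the finite-$d$ statement should come out of keeping the binomial distribution intact rather than passing to a Gaussian approximation at the end.

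By Yao's minimax principle, to upper bound the competitive ratio of any randomized algorithm by $\eta(d)$, it suffices to exhibit a distribution over $d$-regular bipartite instances such that for every deterministic online algorithm, $\E_{\text{instance}}[\alg/\opt]\leq \eta(d)$. I would take the large-$n$ Cohen-Wajc construction on $n$ servers and $n$ requests with $n\to\infty$, where the bipartite graph is $d$-regular (so $\opt=n$ by Hall's theorem applied to regular bipartite graphs) and requests arrive in two batches calibrated so that, per server, $\lceil d/2\rceil$ of its $d$ incident edges belong to an ``early'' batch whose neighborhoods look uniformly random to the algorithm and $\lfloor d/2\rfloor$ belong to a ``late'' batch that effectively pins down specific partners in hindsight.

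Next, I would exploit the symmetry of the random permutations to reduce the analysis of every deterministic online algorithm to the ``symmetric'' rule that matches each arriving early-batch request to a uniformly chosen unmatched neighbor. Under this rule, the number of early-batch requests incident to a fixed server converges in distribution (as $n\to\infty$) to $X\sim \mathrm{Bin}(d,\,\lceil d/2\rceil/d)$; a concentration/coupling argument then shows that the algorithm's expected per-server loss relative to $\opt$ is $\E[(X-\lceil d/2\rceil)^{+}]/d$, because the late-batch identity-revealing requests make it impossible to compensate whenever the early batch hits a server more than $\lceil d/2\rceil$ times. Summing over the $n$ servers and normalizing by $\opt=n$ yields exactly $1-\eta(d)$ in the limit.

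The main obstacle I anticipate is the symmetry-reduction step: formally showing that no deterministic algorithm can outperform the symmetric rule on this distribution. Cohen and Wajc sidestep this asymptotically through concentration of a binomial, but an exact bound valid for every $d\geq 2$ seems to require a direct exchange argument — if two servers are indistinguishable to the algorithm given its observations so far, swapping their roles in the decision rule preserves the expected matching size, so one may restrict to algorithms invariant under the permutation group acting on servers, and on the Cohen-Wajc instance every such algorithm coincides with the uniform rule. A secondary technical point is the parity-dependent split $\lceil d/2\rceil$ versus $\lfloor d/2\rfloor$, which is precisely what makes the instance $d$-regular for every integer $d\geq 2$ (rather than only even $d$, as implicit in Cohen and Wajc's original setup) and what gives the theorem its exact integer-aware form.
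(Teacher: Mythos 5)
Your high-level plan (average over a two-phase Cohen--Wajc-style instance and compute an expected binomial excess) is the same family of argument the paper uses, and your target quantity $\E[(X-\lceil d/2\rceil)^+]/d$ with $X\sim\mathrm{Bin}(d,\lceil d/2\rceil/d)$ does equal $1-\eta(d)$. But the core of the proof --- an explicit instance on which the relevant count is \emph{exactly} binomial for every algorithm --- is missing, and your sketch of it is internally inconsistent. You stipulate that each server has exactly $\lceil d/2\rceil$ early-batch edges, yet then declare ``the number of early-batch requests incident to a fixed server'' to be the binomial variable $X$; these two statements cannot both hold, and in any correct instantiation the loss is not a per-server event of being ``hit more than $\lceil d/2\rceil$ times'' (a server with only $\lceil d/2\rceil$ or $\lfloor d/2\rfloor$ early neighbors can never be hit that often). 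In the paper's instance the binomial counts something else entirely: the $d^2$ servers form a $d\times d$ grid, Phase 1 has $\lfloor d/2\rfloor$ requests complete to each \emph{row}, Phase 2 has $\lceil d/2\rceil$ requests complete to each \emph{column}, and the adversary's randomness is the hidden assignment of servers to columns. Every (w.l.o.g.\ greedy) algorithm matches exactly $\lfloor d/2\rfloor$ servers per row, so each server is matched after Phase 1 with probability exactly $\lfloor d/2\rfloor/d$, independently across rows; hence the number of unmatched servers in a column is exactly $\mathrm{Bin}(d,\lceil d/2\rceil/d)$, and the loss is the \emph{column-level} shortfall $(l-\lceil d/2\rceil)^+$, since only $\lceil d/2\rceil$ Phase-2 requests serve that column. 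Without this (or an equivalent) group structure, your ``concentration/coupling'' and $n\to\infty$ steps have nothing concrete to attach to.

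Relatedly, the symmetry reduction you single out as the main obstacle is not needed in the form you describe: one does not have to show that every deterministic algorithm is dominated by the uniform rule. It suffices that within each row the Phase-1 requests see $d$ indistinguishable servers, so the algorithm's choice of which $\lfloor d/2\rfloor$ to match is independent of the hidden column labels; averaging over the uniformly random labeling then gives the exact per-server marginal and the cross-row independence simultaneously for \emph{every} algorithm, randomized or not. This also removes the need for Yao's principle over deterministic algorithms, for any limit $n\to\infty$, and for any concentration argument: a single instance with $d^2$ servers gives the bound exactly for every $d\geq 2$, with $\opt=d^2$ by regularity. (Your swap of the roles of $\lceil d/2\rceil$ and $\lfloor d/2\rfloor$ between the two phases is numerically harmless, since with $Z\sim\mathrm{Bin}(d,\lfloor d/2\rfloor/d)$ one has $\E[(Z-\lfloor d/2\rfloor)^+]=\E[(\lfloor d/2\rfloor-Z)^+]=\E[(Y-\lceil d/2\rceil)^+]$ for $Y=d-Z\sim\mathrm{Bin}(d,\lceil d/2\rceil/d)$; but you would still need to verify $d$-regularity and carry out the exact counting for whichever split you fix.)
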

\begin{proof}
    Observe that for the online bipartite matching problem, it suffices to consider greedy algorithms that never leave a request unmatched when it has unmatched neighbors.
    Also note that every $d$-regular graph admits a perfect matching (by Hall's Theorem).
    Thus it suffices to provide upper bounds on the expected number of matched servers for the online algorithm.
    
    Consider an instance with $d^2$ servers, in which the servers are indexed by $(i,j)$, for $i,j\in \{1,2,\ldots,d\}$.
    We use $s_{i,j}$ to denote the server with index $(i,j)$ (see Figure~\ref{fig:hard-instance} as an example).
    We organize the servers into groups of size $d$.
    For each $i\in [d]$, let $T_i = \{s_{i,1},..., s_{i,d}\}$ be the $i$-th horizontal group.
    For each $j\in [d]$, let $T'_j = \{s_{1,j},..., s_{d,j}\}$ be the $j$-th vertical group.
    In the following, we introduce the online requests with two phases of arrivals.
        
    In Phase 1, for each horizontal group $T_i$, there are $\lfloor d/2 \rfloor$ online requests $r_{i,1},..., r_{i,\lfloor d/2 \rfloor}$ that are neighbors of all the servers in $T_i$.
    Note that after Phase 1, each horizontal group has exactly $\lfloor d/2 \rfloor$ matched servers.
    Since the requests are indistinguishable, by assigning the index uniformly at random, we can ensure that for all (randomized) algorithms, each server is matched with probability $\lfloor d/2 \rfloor/d$ after Phase 1.
    Moreover, for the servers in the same vertical group, whether they are matched are independent random events.
    Hence the probability $p_l$ that there are exactly $l$ unmatched servers in each vertical group is
    \begin{equation*}
        p_l = \binom{d}{l} \left(\frac{\lfloor d/2 \rfloor}{d}\right)^{d-l} \left(1-\frac{\lfloor d/2 \rfloor}{d}\right)^l,
    \end{equation*}
    where $l \in \{0,1, \ldots, d\}$.
    
    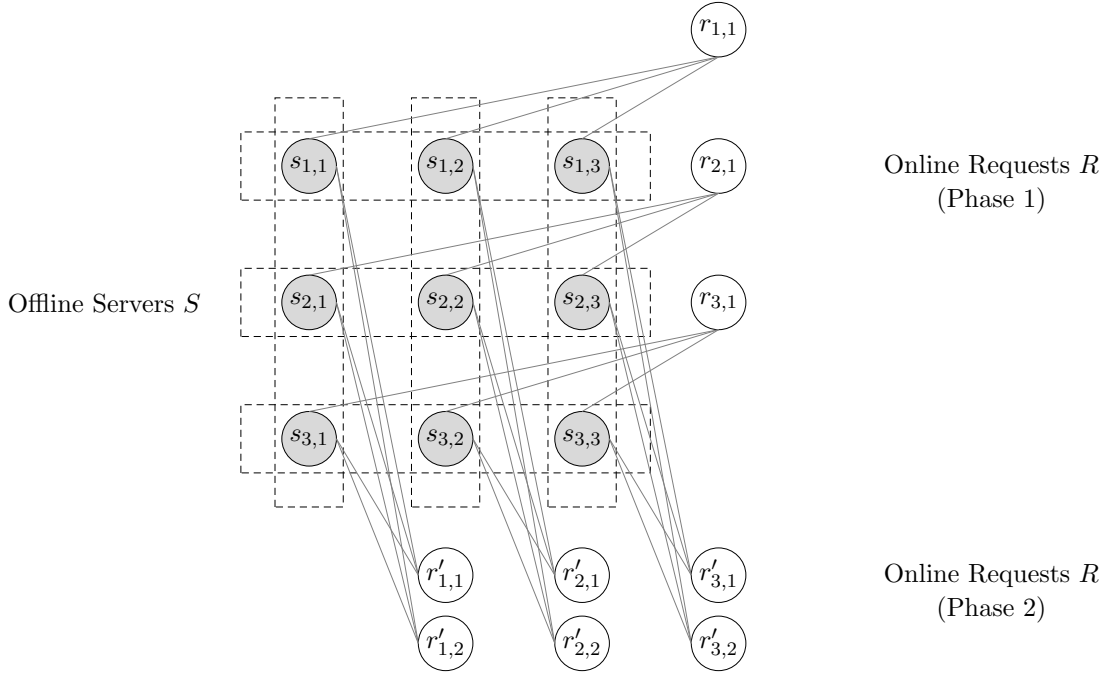
\begin{figure}[htb]
        \begin{center}
        \resizebox{.9\textwidth}{!}{%
        \begin{tikzpicture}
                \node at (-3,2) {Offline Servers $S$};
                \draw [fill = gray!30](0,0) circle(0.4);    \node at (0,0) {$s_{3,1}$};
                \draw [fill = gray!30](2,0) circle(0.4);    \node at (2,0) {$s_{3,2}$};
                \draw [fill = gray!30](4,0) circle(0.4);    \node at (4,0) {$s_{3,3}$};
                \draw [densely dashed] (-1,-0.5) rectangle (5,0.5);
                \draw [densely dashed] (-0.5,-1) rectangle (0.5,5);
                \draw [fill = gray!30](0,2) circle(0.4);    \node at (0,2) {$s_{2,1}$};
                \draw [fill = gray!30](2,2) circle(0.4);    \node at (2,2) {$s_{2,2}$};
                \draw [fill = gray!30](4,2) circle(0.4);    \node at (4,2) {$s_{2,3}$};
                \draw [densely dashed] (-1,1.5) rectangle (5,2.5);
                \draw [densely dashed] (1.5,-1) rectangle (2.5,5);
                \draw [fill = gray!30](0,4) circle(0.4);    \node at (0,4) {$s_{1,1}$};
                \draw [fill = gray!30](2,4) circle(0.4);    \node at (2,4) {$s_{1,2}$};
                \draw [fill = gray!30](4,4) circle(0.4);    \node at (4,4) {$s_{1,3}$};
                \draw [densely dashed] (-1,3.5) rectangle (5,4.5);
                \draw [densely dashed] (3.5,-1) rectangle (4.5,5);
                \node at (10,4) {Online Requests $R$};
                \node at (10,3.5) {(Phase $1$)};
                \draw (6,2) circle(0.4);    \node at (6,2) {$r_{3,1}$};
                \draw [gray](0,0.4)--(6,1.6); \draw [gray](2,0.4)--(6,1.6); \draw [gray](4,0.4)--(6,1.6);
                %
                \draw (6,4) circle(0.4);    \node at (6,4) {$r_{2,1}$};
                \draw [gray](0,2.4)--(6,3.6); \draw [gray](2,2.4)--(6,3.6); \draw [gray](4,2.4)--(6,3.6);
                %
                \draw (6,6) circle(0.4);    \node at (6,6) {$r_{1,1}$};
                \draw [gray](0,4.4)--(6,5.6); \draw [gray](2,4.4)--(6,5.6); \draw [gray](4,4.4)--(6,5.6);
                %
                \node at (10,-2) {Online Requests $R$};
                \node at (10,-2.5) {(Phase $2$)};
                \draw  (2,-2) circle(0.4);    \node at (2,-2) {$r'_{1,1}$};
                \draw [gray](0.4,0)--(1.6,-2); \draw [gray](0.4,2)--(1.6,-2); \draw [gray](0.4,4)--(1.6,-2);
                \draw (2,-3) circle(0.4);    \node at (2,-3) {$r'_{1,2}$};
                \draw [gray](0.4,0)--(1.6,-3); \draw [gray](0.4,2)--(1.6,-3); \draw [gray](0.4,4)--(1.6,-3);
                \draw (4,-2) circle(0.4);    \node at (4,-2) {$r'_{2,1}$};
                \draw [gray](2.4,0)--(3.6,-2); \draw [gray](2.4,2)--(3.6,-2); \draw [gray](2.4,4)--(3.6,-2);
                \draw (4,-3) circle(0.4);    \node at (4,-3) {$r'_{2,2}$};
                \draw [gray](2.4,0)--(3.6,-3); \draw [gray](2.4,2)--(3.6,-3); \draw [gray](2.4,4)--(3.6,-3);
                \draw (6,-2) circle(0.4);    \node at (6,-2) {$r'_{3,1}$};
                \draw [gray](4.4,0)--(5.6,-2); \draw [gray](4.4,2)--(5.6,-2); \draw [gray](4.4,4)--(5.6,-2);
                \draw (6,-3) circle(0.4);    \node at (6,-3) {$r'_{3,2}$};
                \draw [gray](4.4,0)--(5.6,-3); \draw [gray](4.4,2)--(5.6,-3); \draw [gray](4.4,4)--(5.6,-3);
            \end{tikzpicture}
            }
            \caption{Hard instance for randomized algorithms: an illustrating example when $d=3$. During Phase 1 each horizontal group $T_i$ has an online requests $r_{i,1}$ that is neighbor to all three servers $s_{i,1}, s_{i,2}, s_{i,3}$. During Phase 2 each vertical group $T'_j$ has $2$ online requests $r'_{j,1}, r'_{j,2}$ that are neighbors to all three servers $s_{1,j}, s_{2,j}, s_{3,j}$.}
        \label{fig:hard-instance}
        \end{center}
    \end{figure}
    
    In Phase 2, for each vertical group $T'_j$, let there be $\lceil d/2 \rceil$ requests that are common neighbors of all the servers in $T'_j$.
    Note that if there are $l$ unmatched servers in a vertical group when Phase 2 begins, where $l > \lceil d/2 \rceil$, then at the end of the algorithm, $l - \lceil d/2 \rceil$ servers are unmatched in this vertical group.
    Hence the expected number of unmatched servers in each vertical group is
    \begin{equation*}
        \sum_{l = \lceil d/2 \rceil+1}^d \left( (l-\lceil d/2 \rceil)\cdot p_l \right).
    \end{equation*}
    
    Therefore, the competitive ratio of a randomized algorithm is upper-bounded by
    \begin{equation*}
        \frac{1}{d} \left(d - \sum_{l = \lceil d/2 \rceil+1}^d \left( (l-\lceil d/2 \rceil)\cdot p_l \right) \right) = 1 - \sum_{l = \lceil d/2 \rceil + 1}^{d} \left( \frac{l - \lceil d/2 \rceil}{d} \binom{d}{l} \left(\frac{\lfloor d/2 \rfloor}{d}\right)^{d-l} \left(1-\frac{\lfloor d/2 \rfloor}{d}\right)^l \right),
    \end{equation*}
    and we finish the proof.
\end{proof}

Cohen and Wajc~\cite{conf/soda/CohenW18} showed that $\eta(d)$ approaches $1-\frac{1}{\sqrt{8\pi d}}$ when $d\to \infty$ using the normal approximation of the binomial distribution.
For a more illustrative presentation for bounded $d$, we list the values of $\eta(d)$ for $d\in \{2,\ldots,10\}$ as follows.

\begin{table}[htb]
    \begin{center}
        \begin{tabular}{ | c | c|c|c|c|c|c|c|c|c| } 
            \hline
            $d$ & $2$ & $3$ & $4$ & $5$ & $6$ & $7$ & $8$ & $9$ & $10$ \\
            \hline 
            $\eta(d)$ & $0.875$ & $0.9013$ & $0.9063$ & $0.9171$ & $0.9219$ & $0.9281$ & $0.9317$ & $0.9358$ & $0.9385$ \\
            \hline
            \end{tabular}
        \end{center}
    \caption{Upper bounds on the competitive ratio of randomized algorithms with $d\in \{2,\ldots,10\}$.}
\end{table}

\section{Extensions to \texorpdfstring{$(k,d)$}{}-bounded Graphs}
\label{appendix:kd}

In this section, we generalize the upper bound for \Ranking and lower bound for OCS from $(d,d)$-bounded graphs to $(k,d)$-bounded graphs, with $k > d\geq 2$.

\subsection{Hardness for Ranking}

\paragraph{Hard Instance for General $k$ and $d$.}
Our construction can be viewed as a generalization of that in Section~\ref{sec:hardness-for-Ranking-general-d} (see Figure~\ref{fig:hardness-for-Ranking-(4,3)} as an example when $k=4$ and $d=3$).
Let $S = S_1 \cup S_2$, where $S_1 = \{s_1,\ldots,s_k\}$ and $S_2 = \{s_{k+1},\ldots,s_{k+d-1}\}$.
Let $R_1 = \{r_1,\ldots,r_k\}$ and $R_2 = \{r_{k+1},\ldots,r_{k+\lceil k(k-1)/d \rceil}\}$.
For $i \in \{1,\dots,k\}$, let there be an edge between $s_i$ and $r_i$.
Then we add edges such that $R_1$ and $S_2$ form a complete bipartite graph.
Finally, for each server $s_i$ in $S_1$, we pick arbitrary $k-1$ requests in $R_2$ with minimum degree as its neighbor (this is possible because $|R_2| = \left\lceil \frac{k(k-1)}{d} \right\rceil \geq \frac{k-1}{d}\cdot |S_1|$).
It can be shown that all servers have degree $k$ and all requests have degree at most $d$.
Let requests arrive in the order of $r_1,r_2,\ldots$.

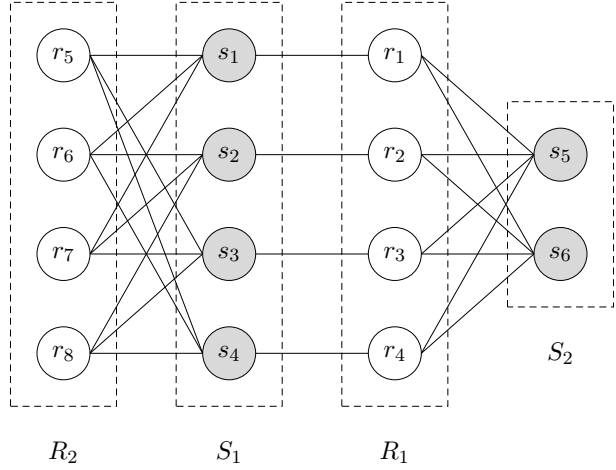
\begin{figure}[htb]
\begin{center}
\resizebox{0.5\textwidth}{!}{
\begin{tikzpicture}
\draw [fill = gray!30] (2.5,0) circle (0.4); \node at (2.5,0) {$s_4$};
\draw [fill = gray!30] (2.5,1.5) circle (0.4); \node at (2.5,1.5) {$s_3$};
\draw [fill = gray!30] (2.5,3) circle (0.4); \node at (2.5,3) {$s_2$};
\draw [fill = gray!30] (2.5,4.5) circle (0.4); \node at (2.5,4.5) {$s_1$};
\draw [fill = gray!30] (7.5,1.5) circle (0.4); \node at (7.5,1.5) {$s_6$};
\draw [fill = gray!30] (7.5,3) circle (0.4); \node at (7.5,3) {$s_5$};
\draw (0,0) circle (0.4); \node at (0,0) {$r_8$};
\draw (0,1.5) circle (0.4); \node at (0,1.5) {$r_7$};
\draw (0,3) circle (0.4); \node at (0,3) {$r_6$};
\draw (0,4.5) circle (0.4); \node at (0,4.5) {$r_5$};
\draw (5,0) circle (0.4); \node at (5,0) {$r_4$};
\draw (5,1.5) circle (0.4); \node at (5,1.5) {$r_3$};
\draw (5,3) circle (0.4); \node at (5,3) {$r_2$};
\draw (5,4.5) circle (0.4); \node at (5,4.5) {$r_1$};
\draw (2.9,4.5)--(4.6,4.5); 
\draw (2.9,3)--(4.6,3); 
\draw (2.9,1.5)--(4.6,1.5); 
\draw (2.9,0)--(4.6,0); 
\draw (5.4,4.5)--(7.1,3); 
\draw (5.4,3)--(7.1,3); 
\draw (5.4,1.5)--(7.1,3); 
\draw (5.4,0)--(7.1,3); 
\draw (5.4,4.5)--(7.1,1.5); 
\draw (5.4,3)--(7.1,1.5); 
\draw (5.4,1.5)--(7.1,1.5); 
\draw (5.4,0)--(7.1,1.5); 
\draw (0.4,4.5)--(2.1,4.5); 
\draw (0.4,3)--(2.1,4.5); 
\draw (0.4,1.5)--(2.1,4.5); 
\draw (0.4,3)--(2.1,3); 
\draw (0.4,1.5)--(2.1,3); 
\draw (0.4,0)--(2.1,3); 
\draw (0.4,1.5)--(2.1,1.5); 
\draw (0.4,0)--(2.1,1.5); 
\draw (0.4,4.5)--(2.1,1.5); 
\draw (0.4,0)--(2.1,0); 
\draw (0.4,4.5)--(2.1,0); 
\draw (0.4,3)--(2.1,0); 
\draw [densely dashed] (-0.8,-0.8) rectangle (0.8,5.3); \node at (0,-1.5) {$R_2$};
\draw [densely dashed] (1.7,-0.8) rectangle (3.3,5.3); \node at (2.5,-1.5) {$S_1$};
\draw [densely dashed] (4.2,-0.8) rectangle (5.8,5.3); \node at (5,-1.5) {$R_1$};
\draw [densely dashed] (6.7,0.7) rectangle (8.3,3.8); \node at (7.5,0) {$S_2$};
\end{tikzpicture} }
\end{center}
\vspace{-10pt}
\caption{Hard instance for \Ranking: an illustrating example when $k=4$ and $d=3$.}
\label{fig:hardness-for-Ranking-(4,3)}
\end{figure}

\begin{theorem}
    The competitive ratio of \textnormal{\Ranking} for online bipartite matching on $(k,d)$-bounded graphs is at most $1-\left((1-1/d)^k\cdot (d-1)\right)/(k+d-1)$.
\end{theorem}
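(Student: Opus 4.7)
The plan is to mirror the proof of Theorem~\ref{theorem:upper-bound-for-Ranking} almost verbatim: the hard instance for $(k,d)$-bounded graphs generalizes the $d$-regular construction by splitting off $R_1$ (which interacts with $S_2$) from $R_2$ (which absorbs the unmatched servers of $S_1$), with the only structural change being that $|R_1|=k$ instead of $d$.

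First, I would establish that $\OPT = k+d-1$. An explicit maximum matching can be constructed by (i) pairing $d-1$ servers of $S_2$ with any $d-1$ requests of $R_1$, (ii) pairing each of the remaining $k-(d-1)$ requests of $R_1$ with its unique neighbor in $S_1$, and (iii) pairing the remaining $d-1$ servers of $S_1$ with requests of $R_2$. Step (iii) is feasible by Hall's theorem: for any subset $T\subseteq S_1$, a double-counting argument (each $s_i\in S_1$ has $k-1$ neighbors in $R_2$, each $R_2$-request has $S_1$-degree at most $d$) yields $|N(T)\cap R_2|\geq |T|(k-1)/d\geq |T|$, which holds whenever $k\geq d+1$.

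Next, I would upper bound $\E[\ALG]$. Each $r_i\in R_1$ always finds $s_i$ free upon arrival (since $s_i$'s other neighbors all lie in $R_2$, which arrives later), so all $k$ requests in $R_1$ are matched. Writing $Y\in\{0,1,\ldots,d-1\}$ for the random number of $S_2$-servers matched during the $R_1$-phase, the post-$R_1$ state has $k-Y$ servers of $S_1$ matched and $Y$ of them still free. Since every $R_2$-request only connects to $S_1$, at most $Y$ additional matches can be produced, giving the trivial bound
\[
    \E[\ALG]\leq k+\E[Y].
\]

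It remains to bound $\E[Y]$, and here the analysis from Section~\ref{sec:hardness-for-Ranking-general-d} transfers directly. Conditioning on the $d-1$ ordered ranks $0<\theta_1<\cdots<\theta_{d-1}<1$ of $S_2$, the evolution of the number of unmatched $S_2$-servers is exactly the Markov chain of Figure~\ref{fig:Markov-chain}, the only difference being that we now take $k$ steps (one per $R_1$-request) rather than $d$. The proofs of Lemma~\ref{lemma:concave-in-theta} and Lemma~\ref{lemma:concavity-of-F} do not depend on the number of steps and carry over verbatim, which after plugging in $\theta_j=j/d$ reduces the analysis to the deterministic recursion $\E[\beta(i)\mid\beta(i-1)]=\tfrac{d-1}{d}\beta(i-1)$ with $\beta(0)=d-1$. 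Unrolling over $k$ steps yields $\E[\beta(k)]=(d-1)(1-1/d)^k$, hence $\E[Y]\leq (d-1)\bigl(1-(1-1/d)^k\bigr)$, and therefore
\[
    \frac{\E[\ALG]}{\OPT} \;\leq\; \frac{k+(d-1)(1-(1-1/d)^k)}{k+d-1} \;=\; 1 - \frac{(d-1)(1-1/d)^k}{k+d-1}.
\]
The only genuinely new ingredient compared to the $d$-regular case is the Hall-condition check for $\OPT$; every other element (the Markov chain, the two concavity lemmas, and the Jensen-based reduction to $\theta_j=j/d$) is reused without change. I expect no serious obstacle here, as the ``pick $k-1$ requests of minimum degree'' rule for the $R_2$-neighborhood assignment guarantees all $R_2$-requests have degree at most $d$ by a routine pigeonhole count, and the Hall check reduces to the inequality $k\geq d+1$ which is given.
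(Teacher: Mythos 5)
Your proposal is correct and follows essentially the same route as the paper's proof sketch: the same hard instance, the Jensen-based reduction to $\theta_j = j/d$ (reusing Lemmas~\ref{lemma:concave-in-theta} and~\ref{lemma:concavity-of-F}, whose arguments indeed do not depend on the number of steps), and the recursion $\E[\beta(i)\mid \beta(i-1)] = \frac{d-1}{d}\beta(i-1)$ unrolled over $k$ steps. Your explicit verification that $\opt = k+d-1$ via Hall's condition (using $k\geq d+1$) is a detail the paper leaves implicit, but it is a welcome addition rather than a different approach.
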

\begin{proof}
    (\textit{sketch})
    Following the equivalent description of \Ranking in Section~\ref{sec:hardness-for-Ranking-general-d}, we first realize the ranks of the servers in $S_2$ and let the ordered ranks be $0\leq \theta_1\leq \cdots\leq \theta_{d-1}$, then realize the rank $y_i$ of $s_i$ upon the arrival of $r_i$ for $i\in \{1,\ldots,k\}$.
    Following the same analysis as in Section~\ref{sec:hardness-for-Ranking-general-d}, we can use Jensen's Inequality to show that it suffices to consider the case when $(\theta_1,\ldots,\theta_{d-1}) = (\frac{1}{d},\ldots,\frac{d-1}{d})$.
    Let $\beta(i)$ be a random variable denoting the number of unmatched servers in $S_2$ after $r_i$ arrives, we can show that $\E_{y_i}[\beta(i) \mid \beta(i-1)] = \frac{d-1}{d}\cdot \beta(i-1)$.
    Thus the competitive ratio of \Ranking in our instance is upper bounded by
    \begin{equation*}
        \frac{k+d-1-\E[\beta (k)]}{k+d-1} \leq 1-\frac{\left(1-\frac{1}{d}\right)^k\cdot (d-1)}{k+d-1}.
    \qedhere
    \end{equation*}
\end{proof}

\begin{corollary}
    Let $\alpha = k/d > 1$.
    When $d\to \infty$, the competitive ratio of \textnormal{\Ranking} on $(k,d)$-bounded graphs is at most $(1 - \frac{e^{-\alpha}}{\alpha+1})$.
\end{corollary}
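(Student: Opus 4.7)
The plan is to deduce the corollary as a direct asymptotic evaluation of the preceding theorem, which furnishes the upper bound
\begin{equation*}
1 - \frac{(1-1/d)^k \cdot (d-1)}{k+d-1}
\end{equation*}
on \Ranking's competitive ratio. Substituting $k = \alpha d$ and letting $d \to \infty$, I would analyze the two factors of the subtracted term separately and then multiply.

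For the exponential factor, I would invoke the standard limit $(1-1/d)^d \to e^{-1}$ and raise both sides to the $\alpha$-th power, yielding $(1-1/d)^{\alpha d} \to e^{-\alpha}$. For the rational factor, I would divide numerator and denominator by $d$:
\begin{equation*}
\frac{d-1}{\alpha d + d - 1} = \frac{1 - 1/d}{\alpha + 1 - 1/d} \longrightarrow \frac{1}{\alpha + 1}.
\end{equation*}
Multiplying these two limits gives $e^{-\alpha}/(\alpha+1)$, so the upper bound tends to $1 - e^{-\alpha}/(\alpha+1)$, as claimed.

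The only subtlety is the integrality of $k$: when $\alpha d$ is not an integer, I would take $k = \lfloor \alpha d \rfloor$ (or $\lceil \alpha d \rceil$). This introduces only an $O(1/d)$ perturbation in both $k/d$ and in the rational factor, which vanishes in the limit by continuity, so the asymptotic value is unaffected. I do not anticipate any genuine obstacle here — the corollary is a routine asymptotic reading of the theorem's closed-form bound, and the argument is essentially two lines of calculus applied to the already-established non-asymptotic bound.
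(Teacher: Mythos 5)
Your proposal is correct and follows essentially the same route as the paper: the paper likewise substitutes $k=\alpha d$ into the theorem's bound $1-\frac{(1-1/d)^k(d-1)}{k+d-1}$ and takes the limit, with the rational factor tending to $\frac{1}{\alpha+1}$ and $(1-1/d)^{\alpha d}\to e^{-\alpha}$. Your remark on the integrality of $k$ is a harmless extra precaution the paper omits.
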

\begin{proof}
    When $d \to \infty$, the competitive ratio of \Ranking is upper bounded by
    \begin{equation*}
        \lim_{d\to \infty} \left\{ 1-\frac{\left(1-\frac{1}{d}\right)^k\cdot (d-1)}{k+d-1} \right\} = \lim_{d\to \infty} \left\{ 1 - \frac{d-1}{(\alpha+1)d-1} \cdot \left( 1-\frac{1}{d} \right)^{\alpha\cdot d} \right\} = 1 - \frac{e^{-\alpha}}{\alpha+1}.
        \qedhere
    \end{equation*}
\end{proof}

\subsection{OCS-based Algorithm}

Recall that in Algorithm~\ref{alg:discrete-ocs}, we fix the optimal candidate function and use it to decide the probability of matching requests to unmatched servers.
Also recall that in Section~\ref{ssec:characterize-candidate}, the optimal candidate function is defined for all integers (including integers larger than $d$).
Therefore, for $(k,d)$-bounded graphs, we can use the optimal candidate function in $\mathcal{C}_d$, and bound the probability that a server $s\in S$ is unmatched by $1/f(k)$.

\begin{theorem}
    Algorithm~\ref{alg:discrete-ocs} (with candidate function $f: \{0,1,\ldots,k\} \to \mathbb{R}$) achieves a competitive ratio of $1-1/f(k)$ for the (vertex-weighted) online bipartite matching on $(k,d)$-bounded graphs.
\end{theorem}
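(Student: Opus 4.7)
The plan is to show that Lemma~\ref{lemma:per-server-guarantee} transfers essentially verbatim to $(k,d)$-bounded graphs, yielding a per-server unmatched probability of at most $1/f(k)$, from which the claimed competitive ratio follows by linearity of expectation.

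The key observation is that the candidate function class $\mathcal{C}_d$ is parametrized \emph{solely} by the request-side bound $d$: the defining constraint~\eqref{equation:property-of-f} ranges over $0 \leq m \leq d-1$ and sequences $l_1 \leq \cdots \leq l_m$, with no reference to $k$, and a candidate function is defined on all non-negative integers. Hence for any $f \in \mathcal{C}_d$ the values $f(0), f(1), \ldots, f(k)$ are all well-defined, and Algorithm~\ref{alg:discrete-ocs} can be run directly with this extended sequence. By the w.l.o.g.\ convention in Section~\ref{sec:Discrete-OCS} that server degrees equal $k$ exactly, the counters $l_s^{r-1}$ consulted by the matching rule never exceed $k-1$, so only the values $f(0),\ldots, f(k-1)$ are used by the algorithm itself; the value $f(k)$ is needed only at the end of the analysis.

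Next, I would revisit the proof of Lemma~\ref{lemma:per-server-guarantee} and note that the only place any graph-degree structure is invoked is in the step
\[
\frac{d(r)-m}{d(r)-m+\sum_{s\in A\cap N(r)} f(l_s^{r-1})} \;\leq\; \frac{d-m}{d-m+\sum_{s\in A\cap N(r)} f(l_s^{r-1})},
\]
which uses only the request-side bound $d(r) \leq d$; the server-side parameter $k$ plays no role whatsoever in the inductive step. Consequently, the entire inductive argument of Lemma~\ref{lemma:per-server-guarantee} goes through unchanged in the $(k,d)$-bounded setting, giving $\Pr[U_A^r = 1] \leq 1/\prod_{s\in A} f(l_s^r)$ for every $A \subseteq S$ and every round $r$.

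To conclude, apply this bound with $A = \{s\}$ and $r$ being the final round: since $d(s) = k$ implies $l_s^{\text{final}} \geq k$, the monotonicity of $f$ yields $\Pr[s \text{ unmatched}] \leq 1/f(l_s^{\text{final}}) \leq 1/f(k)$. For the vertex-weighted objective with offline weights $w_s \geq 0$, linearity of expectation then gives
\[
\E[\alg] \;=\; \sum_{s\in S} w_s \cdot \Pr[s \text{ matched}] \;\geq\; \left(1 - \tfrac{1}{f(k)}\right)\sum_{s \in S} w_s \;\geq\; \left(1 - \tfrac{1}{f(k)}\right)\opt,
\]
since $\opt \leq \sum_{s \in S} w_s$ (every server is matched at most once by $\opt$). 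There is no substantive obstacle in this extension: the statement is almost an immediate corollary of Lemma~\ref{lemma:per-server-guarantee}, and the only bookkeeping point is to verify that the server-degree bound $k$ enters the argument \emph{only} through the inequality $l_s^{\text{final}} \geq k$ together with the monotonicity of $f$, while the candidate-function property is driven purely by the request-degree bound $d$.
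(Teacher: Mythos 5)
Your proposal is correct and follows exactly the route the paper intends: the paper states this theorem without a separate proof, treating it as an immediate consequence of Lemma~\ref{lemma:per-server-guarantee} (whose inductive argument uses only the request-side bound $d(r)\le d$, while candidate functions in $\mathcal{C}_d$ are defined on all non-negative integers), applied with $A=\{s\}$ at the final round together with $l_s^{\text{final}}\ge k$, monotonicity of $f$, and the per-vertex guarantee to handle weights. Your only addition is to spell out this verification explicitly, which matches the paper's reasoning rather than deviating from it.
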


We use the optimal candidate function $f^*_d\in \mathcal{C}_d$ defined in Definition~\ref{definition:optimal-candidate} and analyze the competitive ratio of Algorithm~\ref{alg:discrete-ocs} when $d$ turns to infinity.

\begin{theorem} \label{theorem:OCS-(k,d)}
    Let $\alpha = k/d > 1$.
    When $d\to \infty$, Algorithm~\ref{alg:discrete-ocs} achieves a competitive ratio at least $1-(\frac{1}{3e})^\alpha$ for $(k,d)$-bounded graphs.
\end{theorem}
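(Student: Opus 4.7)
The plan is to reuse the candidate-function machinery from Section~\ref{sec:Discrete-OCS}. By the extension of Corollary~\ref{corollary:C.R.-of-ocs} to $(k,d)$-bounded graphs (stated immediately above the theorem), the competitive ratio of Algorithm~\ref{alg:discrete-ocs} equipped with $f^*_d$ is at least $1-1/f^*_d(k)$, and by Lemma~\ref{lemma:lower-bound-f-by-g} we have $f^*_d(k)\geq g_d(k)$. Hence it suffices to prove $g_d(k)\geq (3e)^\alpha$ in the limit $d\to\infty$ with $k=\alpha d$.

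For the base case $\ell=1$: Claim~\ref{claim:g_10000(10000)} gives $g_{10000}(10000)\geq 9.7657$, and iterating Lemma~\ref{lemma:property-of-g} along $d = 10000\cdot 2^j$ preserves this bound, so $g_d(d)\geq 9.7657>3e$ along this subsequence, which is sufficient for the $d\to\infty$ limit. Next, writing $A(y):=\min_{t\in(0,1)}\bigl((1+ty)/(1-t)\bigr)^{1/t}$ so that Definition~\ref{definition:lower-bound-of-f*} reads $g_d(l) = g_d(l-1)\cdot A(g_d(l-1))^{1/d}$, I would establish that $A(y)\geq 3e$ whenever $y\geq 3e$. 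Since $A$ is monotonically increasing (each map $y\mapsto((1+ty)/(1-t))^{1/t}$ is increasing in $y$, and monotonicity is preserved under pointwise minimum), this reduces to verifying $A(3e)\geq 3e$, equivalently $q(t):=\log(1+3et)-\log(1-t)-t\log(3e)\geq 0$ for all $t\in(0,1)$.

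To prove this analytic inequality, observe that $q(0)=0$ and that $q''(t) = -(3e)^2/(1+3et)^2 + 1/(1-t)^2$ has a unique root $t^* = (3e-1)/(6e)$ in $(0,1)$, changing sign from negative to positive there. Hence $q'$ attains its minimum at $t^*$, where direct substitution (using $1+3et^* = (3e+1)/2$ and $1-t^* = (3e+1)/(6e)$) gives $q'(t^*) = 12e/(3e+1) - \log(3e)$; this is positive because $\log(3e)=1+\log 3 < 3$, so $(3e+1)\log(3e)<3(3e+1) = 9e+3 < 12e$. Therefore $q'>0$ on $(0,1)$, whence $q$ is strictly increasing and $q(t)>0$ for $t>0$, establishing $A(3e)\geq 3e$.

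Finally, I would telescope: since $g_d$ is strictly increasing, the base case gives $g_d(l)\geq 3e$ for all $l\geq d$, so $A(g_d(l))\geq A(3e)\geq 3e$ for $l\geq d$. Summing,
\begin{equation*}
    \log g_d(k) - \log g_d(d) = \sum_{l=d}^{k-1}\tfrac{1}{d}\log A(g_d(l)) \geq \tfrac{k-d}{d}\log(3e) = (\alpha-1)\log(3e),
\end{equation*}
and combining with $\log g_d(d)\geq \log(3e)$ yields $g_d(k)\geq (3e)^\alpha$, which is exactly the bound needed. The main obstacle is the growth-factor step: the constant $3e$ has to be carefully chosen so that $g_d(d)$ exceeds it (forced by Claim~\ref{claim:g_10000(10000)}) while simultaneously being compatible with the univariate inequality $A(3e)\geq 3e$, whose proof hinges on the unimodality of $q'$ obtained from the single sign change of $q''$, together with a clean closed-form evaluation of its minimum.
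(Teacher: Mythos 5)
Your proof is correct, but the key growth step takes a genuinely different route from the paper's. The paper works directly with the optimal candidate function $f^*_d$: since $f^*_d$ is increasing and its recursion makes the successive ratios $f^*_d(l+1)/f^*_d(l)$ non-decreasing in $l$, every ratio beyond $l=d$ is at least $f^*_d(d)/f^*_d(d-1)$, which in turn is at least the geometric mean $f^*_d(d)^{1/d}$ of the first $d$ ratios; this immediately gives $f^*_d(k)\geq f^*_d(d)^{\alpha}\geq (3e)^{\alpha}$ with no calculus beyond the limit bound $\lim_{d\to\infty} f^*_d(d)\geq 9.7$. You instead work with the surrogate $g_d$ and prove the fixed-point inequality $A(3e)\geq 3e$, whose details (the unique sign change of $q''$ at $t^*=(3e-1)/(6e)$, the evaluation $q'(t^*)=12e/(3e+1)-\log(3e)>0$, the monotonicity of $A$, and the telescoping from $l=d$ to $l=k$) all check out, so that once $g_d$ exceeds $3e$ each step contributes a factor of at least $(3e)^{1/d}$. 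Both routes rest on the same base fact, $g_d(d)\geq 9.7657>3e$ as $d\to\infty$ via Lemma~\ref{lemma:property-of-g} and Claim~\ref{claim:g_10000(10000)}, and your explicit treatment of the doubling subsequence there is at the same level of rigor as the paper's own use of that lemma, so there is no gap relative to the paper. What each approach buys: the paper's monotone-ratio argument is shorter and yields the slightly stronger exponent base $f^*_d(d)\approx 9.77$ rather than $3e$; your argument is more local and makes transparent why $3e$ is a workable threshold (any constant $c$ with $c\leq \lim_{d} g_d(d)$ and $A(c)\geq c$ would serve), at the cost of an extra analytic lemma.
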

\begin{proof}
    We show that $f^*_d(\alpha\cdot d) \geq (3e)^\alpha$.
    By definition, $f^*_d$ is an increasing function.
    Then for any $l\geq 1$, we have $f^*_d(l+1) \geq f^*_d(l)$, which implies that 
    \begin{equation*}
        \min_{1\leq m\leq d} \left\{ \left(1 + \frac{m\cdot f^*_d(l+1)}{d-m}\right)^\frac{1}{m} \right\} \geq \min_{1\leq m\leq d} \left\{\left(1 + \frac{m\cdot f^*_d(l)}{d-m}\right)^\frac{1}{m} \right\}.
    \end{equation*}
    
    Hence we have $f^*_d(l+1)/f^*_d(l) \geq f^*_d(l)/f^*_d(l-1)$, leading to
    \begin{align*}
        f^*_d(k) &= f^*_d(d) \cdot \prod_{i=0}^{k-d-1} \frac{f^*_d(d+i+1)}{f^*_d(d+i)} \geq f^*_d(d) \cdot \left( \frac{f^*_d(d)}{f^*_d(d-1)} \right)^{k-d} \\
        &\geq f^*_d(d) \cdot \left( f^*_d(d) \right)^{\frac{k-d}{d}} = f^*_d(d)^\alpha \geq (3e)^\alpha,
    \end{align*}
    where the second inequality holds due to
    \begin{equation*}
        f^*_d(d) = \prod_{i=0}^{d-1} \frac{f^*_d(i+1)}{f^*_d(i)} \leq \left(\frac{f^*_d(d)}{f^*_d(d-1)} \right)^d,
    \end{equation*}
    and the last inequality holds since $\lim_{d\to\infty} f^*_d(d) \geq 9.7 \geq 3e$ (by Lemma~\ref{lemma:lower-bound-f-by-g} and Claim~\ref{claim:g_10000(10000)}).
\end{proof}

\begin{corollary}
    When $k > d \to \infty$, the competitive ratio of the OCS-based algorithm (Algorithm~\ref{alg:discrete-ocs}) is larger than that of \textnormal{\Ranking}.
\end{corollary}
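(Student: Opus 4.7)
The plan is to directly compare the two asymptotic bounds obtained earlier in this section. Writing $\alpha = k/d > 1$, the corollary to the hardness theorem gives, in the limit $d\to\infty$, an upper bound of $1 - e^{-\alpha}/(\alpha+1)$ on the competitive ratio of \Ranking, while Theorem~\ref{theorem:OCS-(k,d)} guarantees that Algorithm~\ref{alg:discrete-ocs} achieves at least $1 - (1/(3e))^\alpha$. So it suffices to verify
\[
    \Bigl(\tfrac{1}{3e}\Bigr)^{\!\alpha} \;<\; \frac{e^{-\alpha}}{\alpha+1} \qquad \text{for every } \alpha > 1.
\]

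Multiplying both sides by $e^\alpha$ and rearranging, this inequality is equivalent to $\alpha + 1 < 3^\alpha$. Hence the whole corollary reduces to the elementary inequality $3^\alpha > \alpha + 1$ on $\alpha \geq 1$. I would prove this by defining $h(\alpha) := 3^\alpha - \alpha - 1$ and checking that $h(1) = 1 > 0$ and that $h'(\alpha) = 3^\alpha \ln 3 - 1 \geq 3\ln 3 - 1 > 0$ for all $\alpha \geq 1$, so $h$ is strictly increasing on $[1,\infty)$ and thus strictly positive there. Since $k > d$ forces $\alpha > 1$, we obtain $\alpha + 1 < 3^\alpha$, and consequently the OCS lower bound strictly exceeds the \Ranking upper bound.

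Essentially no obstacle is expected: the two numerical bounds have already been established in the preceding theorem and corollary, and what remains is only a one-line comparison of elementary functions. The only thing one should be mildly careful about is that both bounds are stated as limits $d\to\infty$, so the statement should be phrased accordingly (i.e., there exists $d_0$ such that for all $d \geq d_0$ and all $k > d$, the OCS ratio exceeds the \Ranking ratio). A brief numerical sanity check at $\alpha = 1$ reassuringly recovers the previously known separation $1 - 1/(3e) \approx 0.877 > 1 - 1/(2e) \approx 0.816$ on $(d,d)$-bounded graphs, which matches the main message of the paper.
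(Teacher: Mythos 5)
Your proposal is correct and follows essentially the same route as the paper: both reduce the comparison of the asymptotic bounds $1-(\tfrac{1}{3e})^\alpha$ and $1-\tfrac{e^{-\alpha}}{\alpha+1}$ to the elementary inequality $3^\alpha > \alpha+1$ for $\alpha \geq 1$. Your only addition is spelling out a short monotonicity argument for that inequality, which the paper simply asserts.
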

\begin{proof}
    It suffices to show that $1-(\frac{1}{3e})^\alpha > 1 - \frac{e^{-\alpha}}{\alpha+1}$ for all $\alpha \geq 1$.
    Note that $3^\alpha > \alpha + 1$ holds for all $\alpha \geq 1$.
    Hence we have
    \begin{equation*}
        (3e)^\alpha > (\alpha+1) \cdot e^\alpha 
        \ \Rightarrow \
        \left( \frac{1}{3e} \right)^\alpha < \frac{1}{(\alpha+1) \cdot e^\alpha} = \frac{e^{-\alpha}}{\alpha+1}
        \ \Rightarrow \
        1-\left(\frac{1}{3e}\right)^\alpha > 1 - \frac{e^{-\alpha}}{\alpha+1},
    \end{equation*}
    which completes the proof.
\end{proof}

Finally, we provide the numerical comparison between the upper bound for \Ranking and the lower bound for OCS on $(k,d)$-bounded graphs when $k$ and $d$ are small (see Table~\ref{tab:U.B.-of-Ranking(k,d)}), showing that the ratio of OCS is consistently larger than that of \Ranking.

\begin{table}[htb]
    \begin{center}
        \begin{tabular}{ c || c|c|c|c|c|c } 
            $k \backslash d$ & $2$ & $3$ & $4$ & $5$ & $6$ & $7$ \\
            \hline 
            \hline
            $3$ & $\mathbf{0.992}/0.968$ & - & - & - & - & - \\ 
            $4$ & $\mathbf{0.999}/0.987$ &$\mathbf{0.954}/0.934$ & - & - & - & - \\ 
            $5$ & $\mathbf{0.999}/0.994$ &$\mathbf{0.993}/0.962$ & $\mathbf{0.943}/0.911$ & - & - & - \\ 
            $6$ & $\mathbf{0.999}/0.997$ &$\mathbf{0.999}/0.978$ & $\mathbf{0.985}/0.941$ & $\mathbf{0.933}/0.895$ & - & - \\
            $7$ & $\mathbf{0.999}/0.999$ &$\mathbf{0.999}/0.987$ & $\mathbf{0.997}/0.960$ & $\mathbf{0.976}/0.924$ & $\mathbf{0.928}/0.884$ & - \\
            $8$ & $\mathbf{0.999}/0.999$ &$\mathbf{0.999}/0.992$ & $\mathbf{0.999}/0.973$ & $\mathbf{0.993}/0.944$ & $\mathbf{0.968}/0.911$ & $\mathbf{0.924}/0.875$ \\
            $9$ & $\mathbf{0.999}/0.999$ &$\mathbf{0.999}/0.995$ & $\mathbf{0.999}/0.981$ & $\mathbf{0.998}/0.959$ & $\mathbf{0.988}/0.931$ & $\mathbf{0.962}/0.900$ \\
            $10$ & $\mathbf{0.999}/0.999$ &$\mathbf{0.999}/0.997$ & $\mathbf{0.999}/0.987$ & $\mathbf{0.999}/0.969$ & $\mathbf{0.996}/0.946$ & $\mathbf{0.983}/0.920$ \\
            \end{tabular}
        \end{center}
    \caption{The comparison of lower bounds of OCS and upper bounds of \Ranking for small $k$ and $d$. For each cell with $k>d$, we use $\cdot / \cdot$ to present the lower bound for OCS and the upper bound for \Ranking. It can be observed that for each presented $k$ and $d$, OCS outperforms \Ranking (for which we use the \textbf{bold} numbers to indicate).
    For $d=2$, the ratio is obtained by the optimal two-way semi-OCS in \cite{conf/focs/GaoHHNYZ21}; for $d\geq 3$ it is obtained by the optimal candidate function in $\mathcal{C}_d$.}
    \label{tab:U.B.-of-Ranking(k,d)}
\end{table}

\end{document}